%% file: main.tex
\documentclass[11pt]{article}
\usepackage[margin=1in]{geometry}
\usepackage[T1]{fontenc}
\usepackage{lmodern}
\usepackage{amsmath,amssymb,amsthm,mathtools}
\usepackage{microtype}
\usepackage{graphicx,booktabs,array}
\usepackage{tikz,pgfplots}
\usetikzlibrary{arrows.meta,positioning,calc,fit,shapes.geometric}
\pgfplotsset{compat=1.18}
\usepackage{enumitem}
\usepackage{caption}
\usepackage{placeins}
\usepackage{xcolor}
\usepackage[colorlinks=true,linkcolor=blue!45!black,citecolor=blue!45!black,
urlcolor=blue!45!black,
  bookmarksnumbered=true,pdfusetitle]{hyperref}
\usepackage[nameinlink,noabbrev]{cleveref}
\numberwithin{equation}{section}
\newtheorem{theorem}{Theorem}[section]
\newtheorem{lemma}[theorem]{Lemma}
\newtheorem{proposition}[theorem]{Proposition}
\newtheorem{corollary}[theorem]{Corollary}
\theoremstyle{definition}

\newtheorem{example}[theorem]{Example}
\theoremstyle{remark}
\newtheorem{remark}[theorem]{Remark}
\DeclareMathOperator{\Tr}{Tr}
\DeclareMathOperator{\id}{id}

\newcommand{\E}{\mathbb E}
\newcommand{\N}{\mathbb N}
\newcommand{\cE}{\mathcal E}
\newcommand{\cA}{\mathcal A}
\newcommand{\cB}{\mathcal B}
\newcommand{\cM}{\mathcal M}
\newcommand{\cR}{\mathcal R}
\newcommand{\ket}[1]{|#1\rangle}

\newcommand{\proj}[1]{|#1\rangle\!\langle#1|}
\newcommand{\fb}{\mathrm{fb}}
\newcommand{\nf}{\mathrm{nf}}
\newcommand{\cl}{\mathrm{cl}}
\newcommand{\succp}{P_{\mathrm{succ}}}
\newcommand{\eps}{\varepsilon}
\newcommand{\binD}{D_{\mathrm{bin}}}
\newcommand{\Htest}{D_H}
\newcommand{\one}{\mathbf 1}
\tikzset{op/.style={draw,fill=white,rounded corners=2pt,minimum height=.66cm,
  minimum width=1.1cm,inner sep=4pt},
  qwire/.style={->,thick},cwire/.style={->,thick,dashed}}
\allowdisplaybreaks[2]
\title{Finite-blocklength classical communication over the quantum erasure channel with and without classical feedback}
\author{Mark M. Wilde\thanks{Email: \href{mailto:wilde@cornell.edu}{wilde@cornell.edu}}\\[3pt]
\textit{School of Electrical and Computer Engineering, Cornell University}\\
\textit{Ithaca, New York 14850, USA}}
\date{September 30, 2026}
\hypersetup{pdfauthor={Mark M. Wilde}}

\begin{document}
\maketitle
\begin{abstract}
We determine the optimal success probability for transmitting a fixed number of classical messages through a finite number of uses of the quantum erasure channel, assisted by noiseless classical feedback and without initial shared entanglement. For an input dimension $d$, an erasure probability $p$, a blocklength $n$, and $M$ equiprobable messages, the optimal success probability is equal to $\E[\min\{1,d^K/M\}]$, where $K$ is binomial with parameters $n$ and $1-p$. The converse allows arbitrary adaptive quantum encoders, quantum memories, and receiver instruments. Its main ingredient is an elementary dimension bound for noiseless quantum communication with classical feedback, proved by fixing the classical controls without conditioning the sender's state on the observed transcript. A classical protocol that transmits the base-\(d\) digits of an integer representing the message, repeating each digit until the receiver acknowledges its reception or the prescribed blocklength is reached, attains the bound for every integer \(M\). We also establish a relative-majorization property of erasure-channel outputs and exactly evaluate a hypothesis-testing converse, recovering the same numerical bound without feedback. That converse need not be achievable without feedback: four uses of the qubit erasure channel and four messages give a strict gap. A product-state code employing tetrahedral qubit states nevertheless outperforms every classical binary erasure code with these parameters. We give an explicit message-size formula and show that the bounded-remainder normal approximation and the average-success strong-converse exponent are unchanged without feedback. We also determine the feedback-assisted error exponent below capacity, prove that it agrees with the no-feedback exponent above a critical rate, and give no-feedback bounds at lower rates.
\end{abstract}

\tableofcontents

\section{Introduction}\label{sec:intro}

\subsection{Background and motivation}
The quantum erasure channel either transmits its input without alteration or replaces it by an orthogonal erasure flag. The receiver therefore knows which transmissions were lost. This simple structure makes the channel a useful setting in which to distinguish the effects of quantum encoding, collective measurements, and classical feedback. Its asymptotic capacities have long served as basic examples in quantum Shannon theory~\cite{BDS97,Wilde17}.

At a finite blocklength, the number of unerased transmissions fluctuates. A first dimension argument suggests that, if $k$ systems of dimension $d$ arrive, then the probability of correctly identifying one of $M$ equiprobable messages should be at most $\min\{1,d^k/M\}$. Without feedback, this follows by applying a state-discrimination dimension bound~\cite{NS06} to the reduced code states. With feedback, the argument is less immediate. The sender may retain quantum memory, the receiver may measure previously received systems and report the outcomes, and later channel inputs may depend on these outcomes. Moreover, the protocol may establish entanglement during its execution. One cannot simply replace an adaptive protocol by a fixed ensemble of received states without explaining how the feedback dependence is handled.

These observations lead to two questions. Does the same dimension bound remain valid under arbitrary classical feedback, including feedback obtained from measurements of quantum memories? And can a single communication protocol attain the bound for every erasure pattern? The second question is distinct from evaluating a converse expression: a converse can be exactly computable without being attained by an operational code. Understanding both questions is necessary for an exact finite-blocklength capacity formula.

\subsection{Summary of results}
Let $\cE^p$ be the $d$-dimensional quantum erasure channel with erasure probability $p$. For positive integers $n$ and $M$, let $\succp^{*,\fb}(n,M)$ denote the optimal average probability of correctly decoding one of $M$ equiprobable classical messages using exactly $n$ instances of this channel, assisted by noiseless classical feedback and without initial shared entanglement. The optimization allows arbitrary adaptive encoding and decoding operations, including quantum memories. The superscript $\fb$ indicates classical feedback assistance, while the asterisk indicates optimization over all such protocols. We suppress the dependence on the fixed channel parameters $d$ and $p$. Our main result, Theorem~\ref{thm:feedback}, is that
\begin{equation}
  \succp^{*,\fb}(n,M)
  =\E\!\left[\min\!\left\{1,\frac{d^K}{M}\right\}\right],
  \label{eq:intro-main}
\end{equation}
where $K$ is a binomial random variable with parameters $n$ and $1-p$, representing the number of unerased transmissions.

No free forward classical communication is permitted. The classical feedback may contain arbitrary receiver measurement outcomes, not only erasure acknowledgments. With an initial, message-independent random seed supplied through the feedback channel, the same value is attainable as the minimum success probability over all messages. Thus, the formula also determines the optimum under the maximum-error criterion.

The proof consists of a converse and an achievability argument. For the converse, we first study noiseless forward quantum communication with classical feedback, as illustrated in Figure~\ref{fig:noiseless}. Alice retains a quantum memory, and Bob's classical measurement outcomes control her later encoders. For each prescribed feedback transcript, we apply Alice's trace-preserving encoder channels using that transcript as their classical controls, without conditioning on its occurrence. This prepares a normalized state on the forward-transmitted systems. Bob's complete sequence of instruments defines a single positive operator-valued measure (POVM) whose outcomes include the transcript and his final estimate. The resulting trace pairing reproduces the original joint outcome probabilities, so a dimension bound applies. We then fix the message-independent erasure pattern, as in Figure~\ref{fig:fixed-pattern}, and apply the noiseless bound to its surviving transmissions. Averaging over patterns gives the converse in \eqref{eq:intro-main}.

For achievability, we use a fixed-blocklength version of the classical repeat-until-received scheme for erasure channels~\cite[Theorem~7]{PPV11}; see also~\cite{Devassy16}. Alice labels the messages by $m\in\{1,\ldots,M\}$, sets $\ell=m-1$, and writes $\ell$ in base $d$. Starting with the least significant digit, she transmits each digit $j\in\{0,\ldots,d-1\}$ as the computational-basis state $\ket j$. Bob reports whether each transmission was erased. Alice repeats the same digit after an erasure and advances to the next digit after a successful reception. After the prescribed $n$ forward uses, Bob guesses uniformly among the messages consistent with the received digits. We call this the \emph{repeat-until-received protocol}. Sending the least significant digits first means that $k$ successful receptions reveal $\ell\bmod d^k$; Figure~\ref{fig:residues} illustrates the resulting message classes. This protocol attains the converse for every integer $M$. Its forward blocklength is fixed, even though the number of successfully received digits is random. Consequently, quantum encoders and quantum memories do not improve the optimum once classical feedback is available.

We also give two exact extensions of the operational formula. Corollary~\ref{cor:capacity} inverts it to obtain a finite-sum expression for the largest message size at a prescribed error tolerance. Corollary~\ref{cor:correlated} shows that the same patternwise argument applies to correlated erasures, provided the entire erasure pattern is sampled independently of the message.

For codes without feedback, Theorem~\ref{thm:relative} gives a complementary explanation based on relative majorization~\cite{BG17,Renes16}. Every erasure-channel output can be obtained from a pure product reference output by a channel that fixes every state that is maximally mixed within each erasure sector. Theorem~\ref{thm:testing-value} optimizes this entire family of auxiliary states in a hypothesis-testing converse: the optimal weights are proportional to the erasure-pattern probabilities times their sector dimensions, truncated at a threshold. Proposition~\ref{prop:full-meta} shows that this is also the exact value of the fully optimized classical--quantum hypothesis-testing converse. It recovers the same numerical upper bound without feedback, but does not imply operational attainability.

In fact, we prove a strict distinction among classical coding, quantum coding, and feedback-assisted coding at finite blocklength. Theorem~\ref{thm:strict-gap} shows that four messages and four uses of a qubit erasure channel cannot attain \eqref{eq:intro-main} without feedback, even with entangled codewords and a collective decoder. In contrast to the feedback-assisted setting, quantum encoding can strictly improve performance without feedback: we construct a product-state code with three classical carriers and one tetrahedral qubit carrier that outperforms every classical binary erasure code with the same parameters. For $p=1/2$, these results give
\begin{equation}
 \frac{51}{64}
 <\frac{47+2\sqrt6}{64}
 \leq\succp^{*,\nf}(4,4)
 <\frac{53}{64},
 \label{eq:intro-hierarchy}
\end{equation}
where $\succp^{*,\nf}$ denotes the optimum without feedback. The first and last quantities are the optimal classical no-feedback and feedback-assisted values, respectively. The middle explicit quantity is achievable; we do not claim that it is the optimal quantum no-feedback value. Corollary~\ref{cor:capacity-gap} also gives an exact capacity separation: at error $11/64$, the largest message sizes without and with feedback are three and four, respectively.

Finally, Section~\ref{sec:asymptotic} compares asymptotic performance with and without feedback. Corollaries~\ref{cor:normal} and~\ref{cor:normal-nf} give the same normal approximation, with a remainder bounded independently of every positive blocklength, under both average and maximum error. The constant may depend on the fixed channel parameters and error tolerance. Corollary~\ref{cor:exponent} shows that the average-success strong-converse exponent is also the same without feedback, by comparing the exact feedback value with a classical random-code construction. Below capacity, Corollary~\ref{cor:reliability-fb} determines the exact feedback-assisted error exponent. Proposition~\ref{prop:reliability-nf} proves equality with the no-feedback error exponent above a critical rate and gives random-coding, expurgated, and converse bounds at lower rates. We do not infer an exact low-rate no-feedback reliability function from these bounds.

\subsection{Relation to prior work}
Classical feedback has a long history in information theory~\cite{Shannon56}. Quantum channels require additional distinctions among classical feedback, quantum feedback, and initial shared entanglement. Bowen studied quantum feedback and its relation to entanglement-assisted communication~\cite{Bowen04}. Bowen and Nagarajan proved that classical feedback does not increase classical capacity for entanglement-breaking channels, and also established a no-increase result under a restriction to nonentangled encoding schemes~\cite{BN05}. Ref.~\cite{DW18} established a strong converse and exponent bounds for the classical-feedback-assisted capacity of entanglement-breaking channels. Cooney \textit{et al}.~treated strong-converse exponents for quantum-feedback-assisted communication, a setting with a backward quantum channel rather than the backward classical channel allowed here~\cite{CMW16}.

Finite-blocklength bounds for the classical erasure channel are treated by Polyanskiy, Poor, and Verd\'u~\cite{PPV10}. Their erasure converse, stated as Theorem~38 therein, has the same dimension-counting form as \eqref{eq:intro-main}. Polyanskiy, Poor, and Verd\'u also describe transmitting each information bit until it is received in a variable-length feedback scheme~\cite[Theorem~7]{PPV11}. Devassy \textit{et al}.~analyze this scheme and related nonasymptotic bounds for binary erasure channels with feedback~\cite{Devassy16}. We use the same classical retransmission principle with a fixed deadline and a final guess among compatible message labels. Sending least significant digits first provides a convenient residue-class description for arbitrary $M$ and arbitrary alphabet size $d$; we do not claim the retransmission method as a new classical coding principle. The present converse must additionally account for adaptive quantum encoders, retained quantum memories, and entanglement established during the protocol.

Ding \textit{et al}.~proved that classical feedback does not increase the asymptotic classical capacity of a quantum erasure channel~\cite{DQSW19}. Their Theorem~5 gives a weak converse. The strong converse was subsequently established by Ding \textit{et al}.~\cite[Section~VII-D]{DKQSW23}. Accordingly, we do not present the asymptotic capacity or its strong-converse property as new conclusions. Our main result specifies the optimal success probability at every blocklength and message size, and supplies a direct proof tailored to the erasure channel. Wang \textit{et al}.~previously established the second-order asymptotics of the classical capacity of the quantum erasure channel without feedback~\cite[Theorem~13]{WFT19}. For the classical binary erasure channel, the bounded remainder under both average and maximum error is already known from~\cite[Theorem~53]{PPV10}. For the quantum channel, we recover the same first- and second-order terms with and without classical feedback and obtain bounded, rather than logarithmic, remainders. The no-feedback achievability uses classical random coding; the quantum converse permits entangled codewords. The asymptotic comparisons in Section~\ref{sec:asymptotic} also specialize the classical random-coding and expurgation methods~\cite{Gallager65} to the present error criteria.

Rosati studied the $n$-shot classical capacity of the quantum erasure channel without feedback~\cite{Rosati24}. Equation~(34) of that work already gives the numerical converse recovered here, and Eq.~(35) expresses the corresponding message-size bound through truncated sums. We therefore do not claim a tighter no-feedback numerical converse. The relative-majorization argument explains why optimizing an appropriate hypothesis-testing auxiliary state recovers that bound. Our unassisted examples also show why the converse cannot always be interpreted as an exact capacity formula.

Tomamichel \textit{et al}.~obtained finite-resource results for quantum transmission with classical postprocessing, including a closely related erasure-channel auxiliary distribution~\cite{TBR16}. Their discussion following Eq.~(65) concerns a different communication task and a different allocation of free classical communication. Its achievability argument cannot be used as an achievability proof for the classical-message task considered here. We make the auxiliary-state construction explicit in terms of the erasure sectors. The comparison of pairs is related to quantum relative Lorenz curves~\cite{BG17}, relative submajorization~\cite{Renes16}, and the resource theory of asymmetric distinguishability~\cite{WW19}. Here we construct the converting quantum channel explicitly; no sufficiency claim about scalar Lorenz-curve inequalities for arbitrary quantum state pairs is needed.

Dimension bounds for communicating classical information through quantum systems also arise in quantum communication complexity~\cite{NS06}. The restriction to classical backward communication and the absence of initial shared entanglement are essential to the exponent of the dimension in our bound. We explain this restriction in the proof rather than infer it from an entanglement-assisted communication bound.

\subsection{Organization of the paper}
The rest of our paper is organized as follows.
Section~\ref{sec:model} specifies the channel, the resources, and the error criteria. Section~\ref{sec:noiseless} proves the noiseless feedback dimension bound, starting with a three-transmission example. Section~\ref{sec:exact} proves the exact erasure-channel formula and evaluates the maximum message size. Section~\ref{sec:majorization} develops the relative-majorization and hypothesis-testing arguments. Sections~\ref{sec:strict} and~\ref{sec:tetra} establish the finite-blocklength separations. Section~\ref{sec:asymptotic} gives asymptotic consequences, and Section~\ref{sec:conclusion} concludes. Appendix~\ref{app:symmetry} proves the exact value of the fully optimized classical--quantum hypothesis-testing converse, and Appendix~\ref{app:expurgation} derives the classical expurgated error bound used in Section~\ref{sec:asymptotic}.

\section{Notation and communication model}\label{sec:model}
We first specify the resources because distinctions among classical feedback, quantum feedback, and classical postprocessing matter at finite blocklength. We take all quantum systems throughout to be finite dimensional, and we take all logarithms to be binary. For a Hermitian operator $X$, its positive part is denoted by $X_+$, and $(x)_+\coloneqq\max\{x,0\}$ for a real number $x$. A quantum channel is a completely positive and trace-preserving linear map. A quantum instrument is a collection of completely positive, trace-nonincreasing maps whose sum is a channel. A positive operator-valued measure (POVM) is a collection of positive semidefinite operators that sums to the identity.

\subsection{The erasure channel and its sectors}
An erasure differs from an undetected error: the receiver learns which input systems arrived and which were replaced by a flag. For several channel uses, these flags specify a subset of surviving positions. Each such subset corresponds to an orthogonal subspace of the total output space, which we call an erasure sector. A receiver can identify the sector without disturbing the quantum state within it. We first define the single-use channel, then write its tensor-power output as a direct sum over these sectors. The surviving state in a sector is a marginal of the original input, so the description also applies to inputs entangled across different uses. The sector dimensions and probabilities introduced here will enter both the feedback converse in Section~\ref{sec:exact} and the relative-majorization construction in Section~\ref{sec:majorization}.

Let $d\geq2$ and $p\in[0,1]$. The input space of $\cE^p$ has dimension $d$, and the output space is the orthogonal direct sum of the input space and a one-dimensional erasure space. The channel is defined as
\begin{equation}
 \cE^p(X)\coloneqq(1-p)X+p\Tr[X]\proj e,
 \qquad \langle e|a\rangle=0
 \quad\text{for every input vector }\ket a.
 \label{eq:erasure-def}
\end{equation}
Fix a blocklength $n\in\N$. For $S\subseteq[n]\coloneqq\{1,\ldots,n\}$, let $S$ denote the positions that survive. Set
\begin{equation}
 w_S\coloneqq(1-p)^{|S|}p^{n-|S|},\qquad
 D_S\coloneqq d^{|S|},\qquad
 b_k\coloneqq\binom nk(1-p)^kp^{n-k}.
 \label{eq:sector-weights}
\end{equation}
At the endpoints $p=0$ and $p=1$, these expressions have their usual probability-distribution interpretation. The random variable $K\coloneqq|S|$ has probability mass function $(b_k)_{k=0}^n$.

Write $\rho_S$ for the marginal of an input state on the systems indexed by $S$. When $S$ is empty, $\rho_S$ is the scalar $1$. Identifying each erasure sector with the Hilbert space of its surviving systems gives
\begin{equation}
 (\cE^p)^{\otimes n}(\rho_{A_1\cdots A_n})
 =\bigoplus_{S\subseteq[n]}w_S\rho_S.
 \label{eq:direct-sum}
\end{equation}
The erasure flags are implicit in the direct sum. Distinct sectors are orthogonal, and their dimensions sum to $(d+1)^n$. For example, with two uses and the sectors ordered as $\{1,2\},\{1\},\{2\},\emptyset$,
\begin{equation}
 (\cE^p)^{\otimes2}(\rho_{A_1A_2})
 =(1-p)^2\rho_{A_1A_2}\oplus p(1-p)\rho_{A_1}
 \oplus p(1-p)\rho_{A_2}\oplus p^2.
 \label{eq:two-sector-example}
\end{equation}
The last block is one dimensional. This example also shows why different marginals of an entangled input appear in different sectors.

\subsection{Codes, feedback, and error}\label{subsec:codes}
We now specify the two communication models compared in this paper and the error criteria used to evaluate them. A code without feedback chooses all channel inputs before receiving any information from Bob. A feedback-assisted code may adapt later inputs to Bob's earlier measurement outcomes, while retaining quantum memory. We keep the forward blocklength and the absence of initial shared entanglement fixed in both models, so that any difference in performance is due to the permitted feedback.

The message $m$ is uniform on $[M]\coloneqq\{1,\ldots,M\}$. A code without feedback consists of input states $\rho^m_{A_1\cdots A_n}$ and a receiver POVM $(\Lambda_m)_m$. The states may be entangled across all channel inputs. Its average success probability is
\begin{equation}
 \succp\coloneqq\frac1M\sum_{m=1}^{M}
 \Tr\!\left[\Lambda_m(\cE^p)^{\otimes n}(\rho^m_{A_1\cdots A_n})\right].
 \label{eq:code-success}
\end{equation}
The maximum error is the largest error probability over individual messages. An explicit failure outcome is unnecessary: replacing such an outcome by a guess cannot decrease the success probability.

In a feedback-assisted code, Alice and Bob begin without shared entanglement. Alice may retain quantum memory and apply a channel depending on the message and all feedback received so far. Bob may apply arbitrary instruments to his quantum memory and each new channel output, and communicate their classical outcomes to Alice. Bob makes a final decoding measurement. Figure~\ref{fig:noiseless} illustrates this structure for noiseless forward channels; in an erasure-channel protocol, each forward transfer is replaced by one use of $\cE^p$. All feedback between two forward uses can be collected into one classical record; Alice's intermediate local operations are included in her next encoder. The model permits an initial message-independent classical message $F_0$ from Bob to Alice, as in~\cite{DQSW19}. This initial communication can establish a common random seed, but it carries no information about the message.

There are exactly $n$ forward channel uses. Their schedule is fixed independently of the message, so timing is not an additional communication resource. After decoding becomes possible, the protocol may send dummy systems for the remaining uses. There is no free forward classical channel and no backward quantum channel. All classical feedback alphabets and local memories may be arbitrarily large but finite. Supremums over such protocols are understood when an optimum is not known a priori to be attained.

Let $\succp^{*,\fb}(n,M)$ and $\succp^{*,\nf}(n,M)$ denote the optimal average success probabilities with and without feedback, respectively. For an error tolerance $\eps\in[0,1)$, let $M_\fb^*(n,\eps)$ denote the largest feasible message size. The non-asymptotic capacity and the rate are, respectively,
\begin{equation}
 C_{n,\eps}^{\fb}\coloneqq\log_2 M_\fb^*(n,\eps),\qquad
 \frac1n C_{n,\eps}^{\fb}.
 \label{eq:capacity-definition}
\end{equation}
We use analogous notation without feedback. In particular, $\succp^{*,\nf}(n,M)$ is the maximum of \eqref{eq:code-success} over all $M$ input states and all decoding POVMs. We specify explicitly when a claim also holds under the maximum-error criterion.

\section{A dimension bound for noiseless communication with classical feedback}\label{sec:noiseless}
This section contains the main converse argument. The physical protocol is adaptive, but its probabilities can be written using normalized sender states and a single receiver POVM. The sender states depend on the prescribed classical transcript. That dependence is harmless for the dimension estimate; normalization is the property that matters.

\subsection{The elementary bound and the difficulty caused by feedback}
We begin with the state-discrimination dimension bound of Nayak and Salzman~\cite{NS06}; an earlier form appears in~\cite[Theorem~2.4]{Nayak99}. It bounds the success probability directly, rather than bounding a mutual information. For states $(\rho_m)_{m=1}^{M}$ on a $D$-dimensional system and a POVM $(T_m)_m$, the inequality $\rho_m\leq I$ gives
\begin{equation}
 \frac1M\sum_m\Tr[T_m\rho_m]
 \leq\frac1M\sum_m\Tr[T_m]
 =\frac DM.
 \label{eq:elementary-dimension}
\end{equation}
Combining this with the trivial bound $\succp\leq1$ gives $\min\{1,D/M\}$.

In a feedback protocol, however, later transmitted states can depend on earlier receiver outcomes. Conditioning on such outcomes generally changes the message distribution and the sender's quantum memory. Applying \eqref{eq:elementary-dimension} separately to normalized conditional states would not reproduce the original uniform-message expression. We instead keep all outcome probabilities in the receiver's measurement operators and fix the feedback only as classical controls of Alice's trace-preserving channels.

Figure~\ref{fig:noiseless} depicts the noiseless protocol. The circuit layout follows the usual sender--receiver convention for quantum communication protocols; see, for example, the feedback diagrams in~\cite[Chapter~21]{Wilde17}. Here the backward wires are strictly classical, and the parties do not begin with entanglement.

\begin{figure}[t]
 \centering
 \input{figures/noiseless_protocol.tex}
 \caption{Three noiseless forward transmissions with classical feedback. Solid wires carry quantum systems; dashed wires carry classical data. Alice's retained memories are $A_i$, Bob's memories are $B_i$, and $Q_i$ is the system transmitted in use $i$. The label $m$ denotes Alice's classical control; its repeated appearance does not represent additional communication. Bob's instruments may establish correlations with Alice's retained systems. Their classical outcomes control subsequent encoders. An initial message-independent seed, when present, is suppressed.}
 \label{fig:noiseless}
\end{figure}

\subsection{Three transmissions: the complete construction}\label{subsec:three-use}
We now write out the three-transmission protocol of Figure~\ref{fig:noiseless}. It contains two successive feedback outcomes, so the second receiver instrument can depend on the first outcome, and the third encoder can depend on both. This example makes explicit how these dependencies are retained when the protocol is represented by a normalized sender state and a single receiver POVM. As throughout the converse, prescribing an outcome as a classical control is different from conditioning on that outcome.

After Alice's first encoding, let $\xi^m_{A_1Q_1}$ be the normalized joint state of her retained memory and the first transmitted system. Incorporate Bob's initial local state into his first instrument $(\cB_{f_1})_{f_1}$, and write his second instrument as $(\cB_{f_2|f_1})_{f_2}$. Their input and output systems are
\begin{equation}
 \cB_{f_1}\colon Q_1\longrightarrow B_1,
 \qquad
 \cB_{f_2|f_1}\colon B_1Q_2\longrightarrow B_2.
 \label{eq:three-bob-instruments}
\end{equation}
Here $\sum_{f_1}\cB_{f_1}$ is trace preserving, and $\sum_{f_2}\cB_{f_2|f_1}$ is trace preserving for each fixed $f_1$. Bob sends $f_1$ after the first transmission and $f_2$ after the second. Alice's subsequent encoder channels are
\begin{equation}
 \cA_2^{m,f_1}\colon A_1\longrightarrow A_2Q_2,
 \qquad
 \cA_3^{m,f_1,f_2}\colon A_2\longrightarrow A_3Q_3.
 \label{eq:three-alice-channels}
\end{equation}
After the third transmission, Bob uses a POVM $(\Lambda_{\widehat m|f})_{\widehat m}$ on $B_2Q_3$, where $f=(f_1,f_2)$ denotes the complete feedback transcript. We also write $\cA_3^{m,f}$ for $\cA_3^{m,f_1,f_2}$.

\paragraph{A normalized state with prescribed controls.}
Fix $m$ and $f=(f_1,f_2)$. Apply Alice's second encoder with its control set to $f_1$, without applying Bob's first instrument. Then apply her third encoder with its controls set to $f_1,f_2$, without applying Bob's second instrument. Retain the three forward systems in this mathematical construction and set
\begin{align}
 \eta^{m,f_1}_{A_2Q_1Q_2}
 &\coloneqq(\cA_2^{m,f_1}\otimes\id_{Q_1})(\xi^m_{A_1Q_1}),
 \label{eq:forced-three-intermediate}\\
 \rho^{m,f}_{Q_1Q_2Q_3}
 &\coloneqq\Tr_{A_3}\!\left[
 (\cA_3^{m,f}\otimes\id_{Q_1Q_2})(\eta^{m,f_1}_{A_2Q_1Q_2})\right].
 \label{eq:forced-three-state}
\end{align}
Tensor factors are reordered as indicated by the system labels after each operation. Both encoders are trace preserving, so
\begin{equation}
 \rho^{m,f}_{Q_1Q_2Q_3}\geq0,
 \qquad\Tr[\rho^{m,f}_{Q_1Q_2Q_3}]=1,
 \qquad\rho^{m,f}_{Q_1Q_2Q_3}\leq I_{Q_1Q_2Q_3}.
 \label{eq:forced-normalization}
\end{equation}
In particular, this construction is defined even when the transcript $f$ has zero probability in the actual protocol. Neither of Bob's outcomes has been postselected. The state need not be a product across the three forward systems.

\paragraph{One POVM containing the complete transcript.}
For the same prescribed $f$, Bob's two receiver operations compose to the completely positive, trace-nonincreasing linear map
\begin{equation}
 \mathcal D_f\coloneqq
 \cB_{f_2|f_1}\circ(\cB_{f_1}\otimes\id_{Q_2})
 \colon Q_1Q_2\longrightarrow B_2.
 \label{eq:three-bob-composition}
\end{equation}
The sum $\sum_f\mathcal D_f$ is a channel. This follows by first summing over $f_2$ for each fixed $f_1$, and then summing over $f_1$. Pull the final decoding effect back through these two operations:
\begin{equation}
 \begin{aligned}
 T_{f,\widehat m}
 &\coloneqq(\mathcal D_f^\dagger\otimes\id_{Q_3})(\Lambda_{\widehat m|f})\\
 &= (\cB_{f_1}^\dagger\otimes\id_{Q_2Q_3})
 \!\left[(\cB_{f_2|f_1}^\dagger\otimes\id_{Q_3})(\Lambda_{\widehat m|f})\right].
 \end{aligned}
 \label{eq:three-effects}
\end{equation}
The adjoints are taken with respect to the trace pairing. Complete positivity gives $T_{f,\widehat m}\geq0$. To check completeness, sum the final estimate first, the second feedback outcome next, and the first outcome last:
\begin{align}
 \sum_{f,\widehat m}T_{f,\widehat m}
 &=\sum_{f_1}(\cB_{f_1}^\dagger\otimes\id_{Q_2Q_3})
 \!\left[\sum_{f_2}(\cB_{f_2|f_1}^\dagger\otimes\id_{Q_3})(I_{B_2Q_3})\right]
 \label{eq:three-completeness-a}\\
 &=\sum_{f_1}(\cB_{f_1}^\dagger\otimes\id_{Q_2Q_3})(I_{B_1Q_2Q_3})
 \label{eq:three-completeness-b}\\
 &=\left(\sum_{f_1}\cB_{f_1}^\dagger(I_{B_1})\right)\otimes I_{Q_2Q_3}
 =I_{Q_1Q_2Q_3}.
 \label{eq:three-completeness-c}
\end{align}
Each summation uses unitality of the adjoint of a trace-preserving channel. Thus $(T_{f,\widehat m})_{f,\widehat m}$ is one POVM on all three forward systems, not a separate normalized POVM for each transcript.

\paragraph{Recovering the actual probabilities.}
Before the final measurement, the unnormalized state for a fixed transcript in the actual adaptive protocol is
\begin{equation}
 \chi^{m,f}_{A_3B_2Q_3}
 \coloneqq(\cA_3^{m,f}\otimes\cB_{f_2|f_1})
 \!\left[(\cA_2^{m,f_1}\otimes\cB_{f_1})(\xi^m_{A_1Q_1})\right].
 \label{eq:three-actual-state}
\end{equation}
In this expression Alice's third encoder acts on $A_2$, whereas Bob's second operation acts on $B_1Q_2$. These operations commute on their disjoint systems. Similarly, Alice's second encoder acts on $A_1$, whereas Bob's first operation acts on $Q_1$. Moving the Alice operations before the Bob operations therefore gives
\begin{equation}
 \Tr_{A_3}[\chi^{m,f}_{A_3B_2Q_3}]
 =(\mathcal D_f\otimes\id_{Q_3})(\rho^{m,f}_{Q_1Q_2Q_3}).
 \label{eq:three-reordering}
\end{equation}
This uses commutation of local operations, not a separability assumption: the memories and transmitted systems can be entangled throughout. The classical controls have already been prescribed separately for each $f$; no operation is commuted through an unfixed feedback control.

Using \eqref{eq:three-reordering} and the defining property of the adjoint, the actual joint probability is
\begin{align}
 \Pr\!\left(f,\widehat m|m\right)
 &=\Tr\!\left[\Lambda_{\widehat m|f}\Tr_{A_3}[\chi^{m,f}_{A_3B_2Q_3}]\right]
 \label{eq:three-probability-a}\\
 &=\Tr\!\left[\Lambda_{\widehat m|f}
 (\mathcal D_f\otimes\id_{Q_3})(\rho^{m,f}_{Q_1Q_2Q_3})\right]
 \label{eq:three-probability-b}\\
 &=\Tr[T_{f,\widehat m}\rho^{m,f}_{Q_1Q_2Q_3}].
 \label{eq:three-probability-c}
\end{align}
All probability factors associated with $f_1$ and $f_2$ remain in the effect $T_{f,\widehat m}$. We have not divided by $\Pr(f|m)$ or assumed that the message remains uniform after observing a transcript.

Figure~\ref{fig:fixed-transcript} summarizes the resulting representation. Different transcripts generally give different normalized states $\rho^{m,f}$. Nevertheless, each state is bounded above by the same identity, and all the effects sum to that identity. Consequently,
\begin{equation}
 \begin{aligned}
 \succp
 &=\frac1M\sum_{m,f}\Tr[T_{f,m}\rho^{m,f}_{Q_1Q_2Q_3}]\\
 &\leq\frac1M\sum_{m,f}\Tr[T_{f,m}]
 =\frac{\Tr[I_{Q_1Q_2Q_3}]}{M}
 =\frac{\dim Q_1\dim Q_2\dim Q_3}{M}.
 \end{aligned}
 \label{eq:three-dimension}
\end{equation}
The sum over $m$ in the middle expression is precisely the sum over all final estimates in the complete POVM. No simultaneous physical preparation of the transcript-dependent states is required for this inequality.

\begin{figure}[t]
 \centering
 \input{figures/fixed_transcript.tex}
 \caption{The fixed-control representation of the three-transmission protocol in Figure~\ref{fig:noiseless}, with $f=(f_1,f_2)$. For each $m,f$, Alice's prescribed trace-preserving encoders prepare a normalized state. Bob's complete sequence of instruments defines one POVM, whose outcomes include both feedback values and the final estimate. Commuting operations on disjoint systems gives equality of the trace pairings. The diagram represents individual joint probabilities; it does not assert that one nonadaptive encoder physically implements all transcripts at once.}
 \label{fig:fixed-transcript}
\end{figure}

\subsection{An arbitrary number of transmissions}
The three-transmission argument extends without changing the dimension estimate. The proof below explains explicitly how the state preparation and the POVM are defined for the complete transcript.

\begin{samepage}
\begin{lemma}[Noiseless quantum communication with classical feedback]\label{lem:noiseless}
Suppose Alice and Bob initially share no entanglement. Alice transmits systems $Q_1,\ldots,Q_k$ through noiseless quantum channels, with $\dim Q_i=d_i$, and Bob may send arbitrary classical feedback. The protocol may retain arbitrary local quantum memories and generate entanglement during its execution. For $M$ equiprobable messages,
\begin{equation}
 \succp\leq\min\!\left\{1,\frac{\prod_{i=1}^k d_i}{M}\right\}.
 \label{eq:noiseless-bound}
\end{equation}
The empty product is $1$. The conclusion also holds when a message-independent initial feedback seed is supplied.
\end{lemma}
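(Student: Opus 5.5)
The proof extends the three-transmission construction of Subsection~\ref{subsec:three-use} to $k$ forward uses. We write the transcript as $f=(f_0,f_1,\ldots,f_{k-1})$. Here $f_0$ is the message-independent initial seed, with probability distribution $q(f_0)$; the first feedback value $f_1$ is sent after the first transmission and the last feedback value $f_{k-1}$ after the $(k-1)$-th. Feedback after the last transmission plays no role and is absorbed into Bob's final POVM.

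Alice's encoders are channels $\cA_i^{m,f_0,\ldots,f_{i-1}}\colon A_{i-1}\to A_iQ_i$, with $A_0$ trivial. Bob's instruments are $(\cB_{f_i|f_0,\ldots,f_{i-1}})_{f_i}\colon B_{i-1}Q_i\to B_i$ for $i=1,\ldots,k-1$, where $B_0$ holds his initial state and whatever record of $f_0$ he keeps. His final POVM is $(\Lambda_{\widehat m|f})_{\widehat m}$ on $B_{k-1}Q_k$.

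\textbf{Normalized states.} For each prescribed $f$, apply $\cA_1^{m,f_0},\ldots,\cA_k^{m,f}$ in sequence, keeping every $Q_i$ and tracing out $A_k$. This defines $\rho^{m,f}_{Q_1\cdots Q_k}$. Each encoder is trace preserving, so $\rho^{m,f}$ is a density operator and hence satisfies $\rho^{m,f}\leq I$. No outcome is postselected.

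\textbf{Single POVM.} Compose Bob's instruments into $\mathcal D_{f}=\cB_{f_{k-1}|\cdots}\circ\cdots\circ(\cB_{f_1|f_0}\otimes\id)$, which maps $Q_1\cdots Q_{k-1}$ to $B_{k-1}$. Define
\[
T_{f,\widehat m}\coloneqq q(f_0)\,(\mathcal D_f^\dagger\otimes\id_{Q_k})(\Lambda_{\widehat m|f}).
\]
Each $T_{f,\widehat m}$ is positive by complete positivity. Completeness follows by induction, summing in reverse order: first over $\widehat m$, then over $f_{k-1}$, and so on down to $f_1$. At each step we use that the adjoint of a trace-preserving map is unital, exactly as in \eqref{eq:three-completeness-a}--\eqref{eq:three-completeness-c}. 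The final sum over $f_0$ gives $\sum_{f_0}q(f_0)I=I$.

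\textbf{Probability identity.} We need $\Pr(f,\widehat m|m)=\Tr[T_{f,\widehat m}\rho^{m,f}]$. The actual unnormalized state interleaves Alice's and Bob's operations. Each Alice map acts only on her memory, and each Bob map acts only on his memory and on already-transmitted systems. Once $f$ is fixed as a classical control, all of Alice's maps can therefore be commuted to the front; the argument is an induction on $i$, as in \eqref{eq:three-reordering}. This yields $\Tr_{A_k}[\chi^{m,f}]=q(f_0)(\mathcal D_f\otimes\id)(\rho^{m,f})$, and the trace pairing then gives the identity.

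\textbf{Dimension bound.} Summing the identity over the correct estimates,
\[
\succp=\frac1M\sum_{m,f}\Tr[T_{f,m}\rho^{m,f}]\leq\frac1M\sum_{m,f}\Tr[T_{f,m}]\leq\frac1M\Tr[I]=\frac{\prod_i d_i}{M}.
\]
The bound $\succp\leq1$ is trivial. If $k=0$, the POVM acts on a one-dimensional space and the same chain gives $1/M$.

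The main obstacle is the bookkeeping in the reordering step for general $k$. Bob's $i$-th instrument acts on $Q_i$ together with $B_{i-1}$, which is possibly entangled with Alice's memory. The induction must show that Alice's later controlled encoders act on systems disjoint from all of Bob's earlier maps, so that commuting them is legitimate. I would set up an explicit inductive hypothesis: after $i$ rounds with controls prescribed, the actual state equals $q(f_0)(\id_{A_i}\otimes\mathcal D^{(i)}_f\otimes\id_{Q_i})$ applied to the forced state on $A_iQ_1\cdots Q_i$, where $\mathcal D^{(i)}_f$ is the composition of Bob's first $i-1$ instruments acting on $Q_1\cdots Q_{i-1}$.
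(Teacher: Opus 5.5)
Your proposal is correct and follows essentially the same route as the paper. You use the same ingredients: transcript-prescribed trace-preserving encoders give normalized states $\rho^{m,f}\leq I$, Bob's instrument tree pulled back through adjoints gives one POVM on $Q_1\cdots Q_k$, commuting local maps on disjoint systems gives the trace-pairing identity, and the dimension bound follows. The only difference is cosmetic: you fold the seed weight $q(f_0)$ into the effects, whereas the paper fixes the seed first and averages at the end, and the two treatments are equivalent.
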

\end{samepage}
\begin{proof}
If $k=0$, Bob receives no forward system, so all of his data are independent of the message and his average success probability is $1/M$. Thus assume $k\geq1$.

First fix an initial seed, if present. The message is still uniform, and the initial local states are a product for this fixed seed. More generally, an initially shared separable state can be handled by revealing to both parties a label in a separable decomposition. This only enlarges their resources and gives product states conditioned on a message-independent label. Bob's local initial state can therefore be incorporated into his first instrument.

Let $f=(f_1,\ldots,f_{k-1})$ denote the feedback transcript returned between successive forward transmissions; any receiver outcomes not reported to Alice can remain in Bob's memory. For a fixed $m,f$, apply Alice's encoder channels in order, using the prescribed transcript prefixes as their classical controls. Retain all transmitted systems $Q_1,\ldots,Q_k$ in this mathematical construction and trace out Alice's final memory. The resulting operator $\rho^{m,f}_{Q_1\cdots Q_k}$ is a state, because it is prepared by a composition of trace-preserving channels followed by a partial trace. In particular,
\begin{equation}
 0\leq\rho^{m,f}_{Q_1\cdots Q_k}\leq I_{Q_1\cdots Q_k}.
 \label{eq:general-normalized}
\end{equation}
Alice's private measurement records, if any, are retained as local classical memory in these channels. They are not conditioned on selected outcomes.

Next regard Bob as being given all systems $Q_1,\ldots,Q_k$ at the beginning, while he still processes them in the original order and leaves future systems untouched. This defines a legitimate measurement procedure on an arbitrary joint state of the $Q_i$. Its outcomes are $(f,\widehat m)$. Denote its POVM by $(T_{f,\widehat m})_{f,\widehat m}$. Positivity follows from complete positivity of the constituent instruments. Summing the last outcome replaces the last measurement by the identity. Summing the preceding instrument outcome then replaces the adjoint of that instrument by a unital channel. Continuing backward through the instrument tree gives
\begin{equation}
 T_{f,\widehat m}\geq0,
 \qquad\sum_{f,\widehat m}T_{f,\widehat m}=I_{Q_1\cdots Q_k}.
 \label{eq:general-completeness}
\end{equation}
This is the same completeness argument as \eqref{eq:three-completeness-a}--\eqref{eq:three-completeness-c}, iterated over the tree.

For a fixed transcript, Alice's later encoder acts on her retained memory, whereas each earlier receiver operation acts on systems already transmitted. These local maps commute, even on an entangled joint input. Move each later Alice map to the left of the earlier Bob maps, one at a time. This reordering changes neither the unnormalized state for that transcript nor its trace. It yields
\begin{equation}
 \Pr\!\left(f,\widehat m| m\right)
 =\Tr[T_{f,\widehat m}\rho^{m,f}_{Q_1\cdots Q_k}].
 \label{eq:general-trace-pairing}
\end{equation}
No commutation through a feedback control is asserted: the classical controls have already been fixed, separately for each transcript. The expression on the right still contains the probability of that transcript.

Using \eqref{eq:general-normalized} and \eqref{eq:general-completeness}, we conclude that
\begin{align}
 \succp
 &=\frac1M\sum_{m,f}\Tr[T_{f,m}\rho^{m,f}_{Q_1\cdots Q_k}]
 \label{eq:general-bound-a}\\
 &\leq\frac1M\sum_{m,f}\Tr[T_{f,m}]
 =\frac{\Tr[I_{Q_1\cdots Q_k}]}{M}
 =\frac{\prod_i d_i}{M}.
 \label{eq:general-bound-b}
\end{align}
Averaging over the initial seed preserves the bound. The probability is also at most one, proving \eqref{eq:noiseless-bound}.
\end{proof}

\begin{remark}[Why the resource assumptions matter]\label{rem:resources}
The proof allows the parties to establish entanglement by earlier forward transmissions. It does not require intermediate states to be separable. It does require that no shared entanglement be available initially and that the backward communication be classical. With an initially shared maximally entangled state, dense coding permits one transmitted $d$-dimensional system to convey $d^2$ perfectly distinguishable messages~\cite{Wilde17}. A backward quantum channel can also establish such entanglement for free. These resources invalidate \eqref{eq:noiseless-bound}; they are not included in our model. In contrast, Alice may use her first noiseless transmission to distribute one share of a maximally entangled state and then use her second transmission for dense coding. This conveys $d^2$ messages using two $d$-dimensional transfers, exactly consistent with the dimension product $d^2$ in Lemma~\ref{lem:noiseless}.
\end{remark}

\section{Exact feedback-assisted performance of the erasure channel}\label{sec:exact}
The noiseless bound depends only on the dimensions of the quantum systems that cross from Alice to Bob. For an erasure channel, a fixed pattern identifies exactly which systems do so. We first apply this observation to arbitrary quantum protocols assisted by classical feedback, then exhibit an elementary protocol that attains the resulting bound.

\subsection{Conditioning on the erasure pattern}
We first state the exact performance formula and prove its converse. The key distinction from a general measurement outcome is that the erasure pattern is sampled independently of the message, even when the encoder is adaptive. Fixing that pattern therefore leaves an ordinary uniform-message communication problem, with one noiseless transfer for each surviving input. Figure~\ref{fig:fixed-pattern} illustrates the reduction to Lemma~\ref{lem:noiseless}. The following subsection supplies the matching achievability protocol.

To average the patternwise dimension bound, define the success function
\begin{equation}
 s_{n,d,p}(M)\coloneqq\sum_{k=0}^n b_k\min\!\left\{1,\frac{d^k}{M}\right\},
 \qquad M\geq1.
 \label{eq:success-function}
\end{equation}
Here $b_k=\binom nk(1-p)^kp^{n-k}$ is the probability of $k$ survivors, as defined in \eqref{eq:sector-weights}. Although message sizes are integers, allowing real $M$ will be useful when we invert this function.

\begin{theorem}[Exact fixed-blocklength feedback formula]\label{thm:feedback}
For $n\in\N$, $d\geq2$, $p\in[0,1]$, and an integer $M\geq1$,
\begin{equation}
 \succp^{*,\fb}(n,M)=s_{n,d,p}(M).
 \label{eq:feedback-exact}
\end{equation}
The converse holds for arbitrary adaptive quantum protocols in the model of Section~\ref{sec:model}. The achievability uses only orthogonal input states and erasure acknowledgments. If an initial message-independent feedback seed is permitted, the optimal maximum error equals the optimal average error, both given by
\begin{equation}
 \eps_\fb^*(n,M)
 =\sum_{k=0}^n b_k\left(1-\frac{d^k}{M}\right)_+.
 \label{eq:error-exact}
\end{equation}
\end{theorem}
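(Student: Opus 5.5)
The proof has two halves: a converse obtained by conditioning on the erasure pattern and invoking Lemma~\ref{lem:noiseless}, and an achievability argument using the repeat-until-received protocol.

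\textbf{Converse.} We model each use of $\cE^p$ as an independent coin $E_i$ that decides erasure. The coin acts as a unitary dilation independent of the input, and Bob can read it noiselessly from the flag. Because the coins are sampled independently of the message and of all protocol operations, we may fix the whole pattern $S\subseteq[n]$ in advance with probability $w_S$, without changing the uniform message distribution.

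For a fixed $S$, the protocol becomes a noiseless feedback protocol. Each use $i\in S$ is a noiseless transfer of a $d$-dimensional system. Each use $i\notin S$ is Alice discarding her input and Bob receiving the known state $\proj e$, which he absorbs into his instruments. Alice's discarding is a trace-preserving local operation that we absorb into her next encoder. Every other feature remains legitimate in the model of Lemma~\ref{lem:noiseless}: arbitrary classical feedback, no initial entanglement, and an optional seed.

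Lemma~\ref{lem:noiseless} then gives a conditional success probability of at most $\min\{1,d^{|S|}/M\}$. Averaging over $S$ with weights $w_S$ and grouping by $k=|S|$ yields $s_{n,d,p}(M)$.

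The step needing care is justifying that conditioning on $S$ does not disturb the message prior or the adaptivity. I would verify it by writing the joint probability $\Pr(S,f,\widehat m\mid m)=w_S\Pr_S(f,\widehat m\mid m)$. This factorization holds because the channel's action on each sector is $w_S$ times the surviving marginal, as in \eqref{eq:direct-sum}, applied use by use inside the adaptive protocol.

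\textbf{Achievability.} We use the repeat-until-received protocol. Suppose exactly $k$ of the $n$ uses survive; this event has probability $b_k$, since the number of survivors is independent of what Alice sends. Then Bob has received the first $\min\{k,\text{number of digits}\}$ least significant base-$d$ digits of $\ell=m-1$. He therefore knows $\ell\bmod d^k$, and this is exact when $d^k\geq M$.

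The set $\{\ell\in\{0,\ldots,M-1\}:\ell\equiv r \bmod d^k\}$ has size $\lceil M/d^k\rceil$ or $\lfloor M/d^k\rfloor$. Bob guesses uniformly within it. The average success probability given $k$ is therefore (number of nonempty classes)$/M=\min\{d^k,M\}/M$. Summing over $k$ gives $s_{n,d,p}(M)$ for every integer $M$.

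\textbf{Maximum error.} The residue classes are unequal in size, so some messages fare worse than others. To equalize them, Bob sends an initial seed consisting of a uniformly random permutation $\pi$ of $[M]$, and Alice encodes $\pi(m)$. Every message then has success probability equal to the average. The maximum error thus equals the average error, $1-s_{n,d,p}(M)$, which is \eqref{eq:error-exact}. The converse for maximum error follows because the maximum error is at least the average error.
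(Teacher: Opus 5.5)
Your proposal is correct and matches the paper's proof. The converse fixes the message-independent erasure pattern, replaces each erased use by a local discard and a locally prepared flag, and applies Lemma~\ref{lem:noiseless}; achievability counts the $\min\{M,d^k\}$ nonempty residue classes of $\ell=m-1$ modulo $d^k$ in the least-significant-digit-first repeat-until-received protocol. The only cosmetic difference is your symmetrizing seed: you use a uniformly random permutation of $[M]$ (with Bob applying $\pi^{-1}$ to his estimate), whereas the paper uses a uniformly random cyclic shift $\ell\mapsto(\ell+R)\bmod M$; any transitive group works, and the shift simply needs a smaller seed alphabet.
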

\begin{proof}[Proof of the converse]
Each channel use is a mixture of two trace-preserving channels: a noiseless transmission and replacement by a fixed flag. Consequently, the entire erasure pattern can be sampled before the protocol begins, with probability $w_S$. This remains true for an adaptive input. At each use, the choice between the two trace-preserving components has probabilities $1-p$ and $p$ independently of the input and of the previous history. In particular, the sampled pattern is independent of the message.

Fix $S$. For the converse, this pattern may be fixed in advance in a comparison protocol, or revealed to both parties: it is independent of the message. In this fixed-pattern protocol, an erased transmission can be replaced by Alice discarding that input locally and Bob receiving a locally prepared flag $\proj e$. For a joint state of the input and arbitrary retained systems, both operations take the partial trace over the erased input. Thus they have the same effect on every accessible system, including entangled memories. The receiver's later operations and all feedback are left unchanged.

Only the $|S|$ surviving systems now cross from Alice to Bob, each through a noiseless $d$-dimensional channel. The erased positions merely insert local operations and classical feedback between these transfers. Applying Lemma~\ref{lem:noiseless} therefore gives
\begin{equation}
 \Pr\!\left(\widehat m=m| S\right)
 \leq\min\!\left\{1,\frac{d^{|S|}}{M}\right\},
 \label{eq:pattern-converse}
\end{equation}
where the success probability is averaged over the still-uniform message. This conditioning is different from conditioning on a receiver measurement outcome: $S$ is an external random pattern independent of the message. Figure~\ref{fig:fixed-pattern} depicts the reduction.

Average \eqref{eq:pattern-converse} over $S$. The sum of $w_S$ over patterns of size $k$ is $b_k$, so the result is $\succp\leq s_{n,d,p}(M)$. A maximum-error code is also an average-error code with the same error upper bound, so the converse applies under either criterion.
\end{proof}

\begin{figure}[t]
 \centering
 \input{figures/fixed_pattern.tex}
 \caption{The fixed-pattern reduction for four channel uses with survivors $S=\{1,3\}$. After fixing this message-independent pattern, an erased input is traced out on Alice's side and an erasure flag is prepared on Bob's side. Feedback and memory updates remain unrestricted. The apparent gap between a discard and a flag denotes local replacement, not a quantum wire. The resulting protocol has only two noiseless forward quantum transfers, so Lemma~\ref{lem:noiseless} applies.}
 \label{fig:fixed-pattern}
\end{figure}

\subsection{A protocol that attains the bound}
The converse counts how many distinguishable possibilities $k$ successful transmissions could convey. The classical repeat-until-received scheme sends each symbol again until the receiver acknowledges it~\cite[Theorem~7]{PPV11}; see also~\cite{Devassy16}. In those references, stopping after all information symbols have arrived gives a variable-length protocol. Here we stop after exactly $n$ forward uses, whether or not the message has been identified, and then make a final guess. The following choice of digit order attains the dimension bound simultaneously for every $k$, even when $M$ is not a power of $d$.

\begin{proof}[Proof of achievability in Theorem~\ref{thm:feedback}]
For a message $m\in[M]$, set $\ell\coloneqq m-1\in\{0,\ldots,M-1\}$. Alice writes this auxiliary integer $\ell$ in base $d$ and sends its least significant digit as a computational-basis state. The message labels remain $1,\ldots,M$; the subtraction by one is only a convenient representation for the encoding. Sending the least significant digits first ensures that $k$ successful receptions reveal $\ell\bmod d^k$. If Bob receives an erasure flag, he reports the erasure, and Alice repeats the same digit. Once the digit arrives, Bob acknowledges it, and Alice advances to the next digit. The procedure continues until $n$ forward uses have taken place. After all necessary digits have arrived, Alice sends leading zero digits, which serve as dummy states. This convention defines the received residue for every possible number of successful transmissions.

After exactly $k$ successful transmissions, Bob knows
\begin{equation}
 \ell\bmod d^k=(m-1)\bmod d^k.
 \label{eq:residue-known}
\end{equation}
Among the auxiliary integers $\ell=0,\ldots,M-1$, exactly $\min\{M,d^k\}$ residue classes modulo $d^k$ are nonempty. Conditional on the observed class of auxiliary integers, Bob guesses uniformly among the corresponding messages $m=\ell+1$. A nonempty class of size $a$ contains $a$ messages, each contributing $1/(Ma)$ to the conditional average success probability. Hence every nonempty class contributes $1/M$, and
\begin{equation}
 \Pr\!\left(\widehat m=m| K=k\right)
 =\frac{\min\{M,d^k\}}{M}
 =\min\!\left\{1,\frac{d^k}{M}\right\}.
 \label{eq:digit-success}
\end{equation}
The erasure pattern and therefore $K$ are independent of the message. Averaging proves \eqref{eq:feedback-exact}. Figure~\ref{fig:residues} illustrates the residue classes for seven messages and binary digits.

For maximum error, Bob first chooses $R$ uniformly in $\{0,\ldots,M-1\}$ and sends $R$ to Alice over the free backward classical channel. With $\ell=m-1$ as above, they use the shifted encoding integer
\begin{equation}
 \ell'\coloneqq(\ell+R)\bmod M,
 \qquad \ell'=m'-1,
 \label{eq:cyclic-seed}
\end{equation}
where $m'=\ell'+1$ is the corresponding message label. They run the protocol on the digits of $\ell'$. Bob converts his estimate back by $\widehat\ell=(\widehat\ell'-R)\bmod M$ and $\widehat m=\widehat\ell+1$. For each fixed original message $m$, the shifted integer $\ell'$ is uniform when averaged over the seed. The error for every $m$ is therefore the average error of the original protocol. No forward communication has been added. This proves the maximum-error assertion and \eqref{eq:error-exact}.
\end{proof}

\begin{figure}[t]
 \centering
 \input{figures/residue_tree.tex}
 \caption{The repeat-until-received protocol for $M=7$ and $d=2$. Nodes contain the auxiliary integers $\ell=m-1\in\{0,\ldots,6\}$, whose binary digits are transmitted starting with the least significant digit. The corresponding message label is always $m=\ell+1$. With $k=0,1,2$ successful receptions, the numbers of nonempty sets are $1,2,4$, respectively; with three successful receptions all seven messages are identified. A receiver erasure acknowledgment leaves the current set unchanged. The protocol has fixed blocklength: the number of successful steps is random, not the number of forward uses.}
 \label{fig:residues}
\end{figure}

\begin{remark}[Initial feedback and maximum error]\label{rem:initial-seed}
Without the initial seed, the unsymmetrized repeat-until-received protocol need not have equal individual-message errors when $M$ is not a power of $d$. We do not silently identify average and maximum error in that restricted model. The average-error formula requires no initial seed, while the maximum-error formula stated here uses the explicitly permitted initial classical feedback.
\end{remark}

The converse and achievability above both hold separately for each erasure pattern. Independence among different erasure events was used only to write their probabilities in binomial form. This observation gives the following extension.

\begin{samepage}
\begin{corollary}[Message-independent correlated erasures]\label{cor:correlated}
Suppose the entire survivor set $S$ is drawn according to an arbitrary distribution $(w_S)_S$, independently of the message, and the conditional forward channel is noiseless on $S$ and erased on its complement. For equal input dimensions $d$, the optimal feedback-assisted average success probability is
\begin{equation}
 \sum_{S\subseteq[n]}w_S\min\!\left\{1,\frac{d^{|S|}}{M}\right\}.
 \label{eq:correlated-erasure}
\end{equation}
The same value holds for maximum error with an initial seed.
\end{corollary}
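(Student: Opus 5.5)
The plan is to rerun the two halves of the proof of Theorem~\ref{thm:feedback}, checking that neither uses independence across channel uses except when the probabilities are collected into binomial weights.

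\textbf{Converse.} Model the correlated channel as follows: first sample $S$ from $(w_S)_S$, independently of the message. Then each use $i$ acts noiselessly if $i\in S$ and replaces its input by $\proj e$ otherwise. The key point is that $S$ is an external random variable independent of the message. So, as in the converse proof of Theorem~\ref{thm:feedback}, we may reveal $S$ to both parties; this only enlarges their resources.

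For fixed $S$, each erased use is replaced by two local operations: Alice discards the input (a partial trace), and Bob prepares a flag locally. This acts identically on all accessible systems. The resulting protocol has exactly $|S|$ noiseless $d$-dimensional forward transfers, interleaved with local operations and classical feedback. Lemma~\ref{lem:noiseless} then gives $\Pr(\widehat m=m\mid S)\le\min\{1,d^{|S|}/M\}$ for the still-uniform message. Averaging over $S$ with weights $w_S$ yields the upper bound \eqref{eq:correlated-erasure}, under both average and maximum error.

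The one point needing care is whether the channel's erasure decisions may depend on earlier inputs. The hypothesis says the whole pattern is drawn independently of the message. I will interpret this as: the pattern is generated independently of all protocol data, so the conditional forward channel given $S$ is as stated. This is the step I would state carefully, since it is exactly what licenses revealing $S$.

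\textbf{Achievability.} Use the same repeat-until-received protocol, with least significant base-$d$ digits of $\ell=m-1$ sent first. Bob acknowledges each reception, and Alice repeats the current digit after an erasure. For any realized $S$, exactly $|S|$ digits arrive, so Bob learns $\ell\bmod d^{|S|}$. He then guesses uniformly within that residue class. The residue-class count from \eqref{eq:digit-success} gives conditional success exactly $\min\{1,d^{|S|}/M\}$, and this holds for every $S$ regardless of how $S$ is distributed. Averaging against $w_S$, using independence of $S$ from the message, gives equality in \eqref{eq:correlated-erasure}.

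For maximum error, apply the same cyclic shift by a uniform seed $R$ as in \eqref{eq:cyclic-seed}. This makes each message's error equal to the average error, which completes the proof.
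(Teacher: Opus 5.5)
Your proposal is correct and follows the paper's own argument, in more detail: the converse reduces each fixed pattern to Lemma~\ref{lem:noiseless} using only independence of $S$ from the message, and the repeat-until-received protocol with its cyclic-seed symmetrization attains $\min\{1,d^{|S|}/M\}$ pattern by pattern, so independence among coordinates was needed only for the binomial bookkeeping. You read the hypothesis as independence of $S$ from all protocol data, including the seed; the paper uses the same reading tacitly, and the seed symmetrization also requires it.
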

\end{samepage}
\begin{proof}
The fixed-pattern converse uses only independence of the pattern from the message, not independence among its coordinates. The repeat-until-received protocol has the conditional success probability in \eqref{eq:digit-success} for each pattern separately. Its seed symmetrization also does not use independence among erasures.
\end{proof}

\subsection{An explicit maximum-message formula}\label{subsec:inverse}
Theorem~\ref{thm:feedback} gives the optimal decoding error for each integer message size. We now invert that relation: for a prescribed error tolerance, we seek the largest number of messages that can be transmitted. It is convenient to solve the inequality first for a real message-size parameter and then return to integer message sizes. This also yields the cumulative quantities used to construct an optimal hypothesis-testing auxiliary state in Section~\ref{sec:majorization}.

With $n$, $d$, and $p$ fixed, define
\begin{equation}
 e(r)\coloneqq\sum_{k=0}^n b_k\left(1-\frac{d^k}{r}\right)_+,
 \qquad r\geq1.
 \label{eq:real-error}
\end{equation}
For a positive integer $M$, the coding theorem identifies this function with the optimal feedback-assisted average error:
\begin{equation}
 e(M)=\eps_{\fb}^*(n,M)=1-\succp^{*,\fb}(n,M).
 \label{eq:real-error-operational}
\end{equation}
Under the initial-feedback-seed convention of Theorem~\ref{thm:feedback}, it is also the optimal maximum error. For a noninteger argument, $e(r)$ is only a continuous, nondecreasing extension of this error function; it does not describe a code with a fractional number of messages. The parameter $r$ represents a message size, not a communication rate.

For an integer message size $M$, the attaining protocol has conditional error $(1-d^k/M)_+$ when $k$ transmissions survive. This error is positive only if the surviving dimension $d^k$ is smaller than $M$. The same sign pattern holds in \eqref{eq:real-error} for a real $r$: since $d^k$ increases with $k$, the factors $1-d^k/r$ are positive only for an initial segment of survivor counts. Once a factor becomes nonpositive, every subsequent factor is nonpositive as well. We can therefore replace the individual positive parts by a suitable cutoff in the survivor count.

For each cutoff $j$, set
\begin{equation}
 B_j\coloneqq\sum_{k=0}^j b_k,\qquad
 Z_j\coloneqq\sum_{k=0}^j b_kd^k,\qquad 0\leq j\leq n.
 \label{eq:cumulative-defs}
\end{equation}
Here $B_j=\Pr(K\leq j)$ is the cumulative probability of at most $j$ survivors, while $Z_j$ is the corresponding sum weighted by the surviving dimensions. For $\eps\in[0,1)$, define
\begin{equation}
 R_\eps\coloneqq
 \min\!\left\{\frac{Z_j}{B_j-\eps}:0\leq j\leq n,\ B_j>\eps\right\}.
 \label{eq:R-definition}
\end{equation}
The feasible set is nonempty because $B_n=1$. The next corollary identifies $R_\eps$ as the largest admissible real message-size parameter and then rounds it down.

\begin{corollary}[Exact feedback-assisted capacity]\label{cor:capacity}
Under the assumptions of Theorem~\ref{thm:feedback},
\begin{equation}
 M_\fb^*(n,\eps)=\lfloor R_\eps\rfloor,
 \qquad C_{n,\eps}^{\fb}=\log_2\lfloor R_\eps\rfloor.
 \label{eq:capacity-formula}
\end{equation}
These formulas apply to average error and to maximum error with an initial feedback seed.
\end{corollary}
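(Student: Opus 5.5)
Since Theorem~\ref{thm:feedback} gives $\eps_\fb^*(n,M)=e(M)$ for every integer $M\geq1$, under average error and under maximum error with a seed, we have $M_\fb^*(n,\eps)=\max\{M\in\N: e(M)\leq\eps\}$. The plan is to show that $\{r\geq1: e(r)\leq\eps\}=[1,R_\eps]$; rounding down then gives the largest integer. The value $M=1$ is always feasible because $e(1)=0\leq\eps$, so the maximum exists once we know the set is bounded.

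First we record the properties of $e$. Each term $b_k(1-d^k/r)_+$ is continuous and nondecreasing in $r$. It is strictly increasing wherever $r>d^k$ and $b_k>0$. Also $e(r)\to B_n=1>\eps$ as $r\to\infty$. Hence $\{e\leq\eps\}$ is an interval $[1,r^*]$, where $r^*$ is the largest root of $e(r)=\eps$, or the value at which $e$ leaves the level $\eps$ on a flat stretch. It remains to identify $r^*$ with $R_\eps$.

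Next we use the cutoff structure. For each $j$, define the affine-in-$1/r$ lower bound
\begin{equation}
 g_j(r)\coloneqq\sum_{k=0}^j b_k\Bigl(1-\frac{d^k}{r}\Bigr)=B_j-\frac{Z_j}{r}.
\end{equation}
Since $x\geq(x)_+$ fails but $(x)_+\geq x$ and $(x)_+\geq0$, we get $e(r)\geq g_j(r)$ for every $j$. Moreover $e(r)=\max_j g_j(r)$, with the maximum attained at the largest $j$ satisfying $d^j<r$, or at $j=-1$ with value $0$. Indeed, including exactly the positive terms maximizes the partial sum, since $d^k$ is increasing. Thus
\begin{equation}
 e(r)=\max\Bigl\{0,\max_{0\leq j\leq n}\Bigl(B_j-\frac{Z_j}{r}\Bigr)\Bigr\}.
\end{equation}
Therefore $e(r)\leq\eps$ if and only if $B_j-Z_j/r\leq\eps$ for all $j$. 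When $B_j\leq\eps$ this holds automatically, because $Z_j\geq0$. When $B_j>\eps$ it is equivalent to $r\leq Z_j/(B_j-\eps)$. So the feasible set is exactly $r\leq R_\eps$, intersected with $r\geq1$.

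Finally we check $R_\eps\geq1$, which holds because $e(1)=0\leq\eps$ forces every constraint to be satisfied at $r=1$. Then $M_\fb^*=\lfloor R_\eps\rfloor$, and the capacity formula follows by taking $\log_2$. The main obstacle is the max-representation of $e$: one must verify carefully that the partial sums over the initial segment dominate all other index choices, including ties where $d^k=r$. That argument is short, since these are terms with sign pattern positive, then nonpositive. Terms with $b_k=0$ are harmless. The applicability to maximum error is inherited directly from Theorem~\ref{thm:feedback}.
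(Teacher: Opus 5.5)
Your proposal is correct and follows the paper's proof essentially step for step. Both arguments use the same prefix sums $B_j-Z_j/r$, the identity $e(r)=\max\{0,\max_j(B_j-Z_j/r)\}$ obtained from the positive-then-nonpositive sign pattern, the per-cutoff rearrangement to $r\leq Z_j/(B_j-\eps)$, and the fact $e(1)=0$ together with attainment of $e(M)$ at integers from Theorem~\ref{thm:feedback}. Your preliminary continuity and monotonicity discussion of $e$ is harmless but unnecessary, since the max-representation already identifies the feasible set as $[1,R_\eps]$ directly.
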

\begin{proof}
To express \eqref{eq:real-error} in terms of $B_j$ and $Z_j$, consider the running sum of the signed contributions, without taking their positive parts:
\begin{equation}
 F_j(r)\coloneqq\sum_{k=0}^j b_k\left(1-\frac{d^k}{r}\right)
 =B_j-\frac{Z_j}{r},
 \qquad 0\leq j\leq n.
 \label{eq:prefix-error-sum}
\end{equation}
Also set $F_{-1}(r)\coloneqq0$, corresponding to the empty sum. Increasing the cutoff from $j-1$ to $j$ changes the running sum by
\begin{equation}
 F_j(r)-F_{j-1}(r)
 =b_j\left(1-\frac{d^j}{r}\right).
 \label{eq:prefix-error-increment}
\end{equation}
Because $b_j\geq0$, this running sum cannot decrease while $d^j<r$, and it cannot increase after $d^j\geq r$. Its largest value is obtained by including all positive contributions and stopping before any negative contribution is added. Terms with $d^j=r$ or $b_j=0$ leave the sum unchanged.

More explicitly, each cutoff $j\geq0$ satisfies
\begin{equation}
 F_j(r)
 \leq\sum_{k=0}^j b_k\left(1-\frac{d^k}{r}\right)_+
 \leq e(r).
 \label{eq:prefix-error-upper}
\end{equation}
Conversely, if at least one index satisfies $d^k<r$, choose $j$ to be the largest such index in $\{0,\ldots,n\}$. All retained factors are positive and all omitted factors are nonpositive, so $F_j(r)=e(r)$. If there is no such index, then $e(r)=0=F_{-1}(r)$. This proves
\begin{equation}
 e(r)=\max\left\{0,\max_{0\leq j\leq n}
 \left(B_j-\frac{Z_j}{r}\right)\right\}.
 \label{eq:error-max-prefix}
\end{equation}
Thus, taking the positive part of each summand is equivalent to choosing the best cutoff. The additional zero represents the option of retaining no terms. The argument also covers the endpoint channels $p=0$ and $p=1$, for which some weights $b_k$ vanish.

Equation~\eqref{eq:error-max-prefix} turns the error constraint into one inequality for each cutoff. Since $\eps\geq0$, the condition $e(r)\leq\eps$ is equivalent to $B_j-Z_j/r\leq\eps$ for every $j$. If $B_j\leq\eps$, this inequality holds automatically because $Z_j\geq0$. If $B_j>\eps$, rearranging it gives
\begin{equation}
 r\leq\frac{Z_j}{B_j-\eps}
 \quad\text{for every }j\text{ satisfying }B_j>\eps.
 \label{eq:prefix-constraints}
\end{equation}
The largest feasible real parameter is therefore the smallest of these upper bounds, namely $R_\eps$. The feasible interval contains $1$ because $e(1)=0$. Theorem~\ref{thm:feedback} attains the error $e(M)$ at each integer $M$, so the largest feasible integer is $\lfloor R_\eps\rfloor$. Taking its binary logarithm proves the capacity formula.
\end{proof}

For example, for $d=2$, $p=1/2$, and $\eps=0.1$, the formula gives $M_\fb^*(20,0.1)=284$, or approximately $8.150$ transmitted bits. Figure~\ref{fig:normal-rate} in Section~\ref{sec:asymptotic} plots the same exact capacity after division by $n$ and compares it with the normal approximation. Before considering those asymptotic consequences, we note that the formula also includes the following endpoint cases.

\begin{example}[Endpoints and message sizes larger than the input dimension]\label{ex:endpoints}
For a noiseless channel, $p=0$, and for a completely erased channel, $p=1$, respectively,
\begin{equation}
 R_\eps=\frac{d^n}{1-\eps}
 \quad(p=0),\qquad
 R_\eps=\frac1{1-\eps}
 \quad(p=1).
 \label{eq:endpoints}
\end{equation}
There is no assumption that $M\leq d^n$: at sufficiently large allowed error, more messages can be accommodated by assigning several messages the same transmitted word. For zero error and $p>0$, the all-erased pattern has positive probability and forces $M_\fb^*(n,0)=1$. These cases are included in~\eqref{eq:capacity-formula}.
\end{example}

\section{Relative majorization and an exactly evaluable converse}\label{sec:majorization}
The feedback theorem was proved without hypothesis testing. We now give a second explanation of the same bound for codes without feedback. The erasure-sector decomposition yields a common family of invariant auxiliary states, and the associated hypothesis-testing optimization reduces to scalar variables. This structural result explains the quantum counterpart of the auxiliary distributions used in erasure-channel converses.

\subsection{A simultaneous invariant-family construction}
The physical erasure probabilities do not have to be used in the auxiliary hypothesis of a converse. We therefore allow arbitrary weights on the erasure sectors while keeping the auxiliary state maximally mixed inside each sector. We will show that one channel converts the reference output into an arbitrary output and simultaneously preserves every member of this auxiliary family; Figure~\ref{fig:sector-conversion} illustrates the two requirements.

For a probability distribution $q=(q_S)_{S\subseteq[n]}$, define
\begin{equation}
 \sigma_q\coloneqq\bigoplus_{S\subseteq[n]}q_S\frac{I_S}{D_S},
 \label{eq:aux-family}
\end{equation}
where $I_S$ is the identity on the surviving-system space in sector $S$. In particular, the choice $q_S=w_S$ gives
\begin{equation}
 \sigma_w=(\cE^p)^{\otimes n}(\pi^{\otimes n}),
 \qquad\pi\coloneqq I/d.
 \label{eq:physical-reference}
\end{equation}
Allowing $q$ to differ from the physical pattern distribution will be essential to the optimized converse.

Given two pairs of states $(\omega,\sigma)$ and $(\omega',\sigma')$, we write
\begin{equation}
 (\omega,\sigma)\longrightarrow(\omega',\sigma')
 \label{eq:relative-order-definition}
\end{equation}
when there is a single quantum channel taking $\omega$ to $\omega'$ and $\sigma$ to $\sigma'$. This common-channel comparison of pairs is closely related to quantum relative Lorenz curves~\cite{BG17}, relative submajorization~\cite{Renes16}, and the resource theory of asymmetric distinguishability~\cite{WW19}. We use \emph{relative majorization} for the concrete common-channel conversion proved next. In a general noncommuting setting, scalar testing or Lorenz-curve inequalities alone need not characterize such a conversion; here the required channel is constructed explicitly.

\begin{theorem}[Relative majorization of erasure-channel outputs]\label{thm:relative}
Fix a pure state $\psi$ on the $d$-dimensional input, and set
\begin{equation}
 \omega_\psi\coloneqq(\cE^p)^{\otimes n}(\psi^{\otimes n}).
 \label{eq:pure-reference}
\end{equation}
For every input state $\rho_{A_1\cdots A_n}$, there is a channel $\cM^\rho$ such that
\begin{equation}
 \cM^\rho(\omega_\psi)=(\cE^p)^{\otimes n}(\rho_{A_1\cdots A_n}),
 \qquad \cM^\rho(\sigma_q)=\sigma_q\quad\text{for every }q.
 \label{eq:simultaneous-conversion}
\end{equation}
Thus the same channel implements
\begin{equation}
 (\omega_\psi,\sigma_q)\longrightarrow
 \big((\cE^p)^{\otimes n}(\rho_{A_1\cdots A_n}),\sigma_q\big)
 \quad\text{for every }q.
 \label{eq:relative-majorization}
\end{equation}
\end{theorem}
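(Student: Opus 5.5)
We construct $\cM^\rho$ sector by sector, using the block structure of \eqref{eq:direct-sum}. Both $\omega_\psi$ and $(\cE^p)^{\otimes n}(\rho)$ carry the same weights $w_S$ on the same sectors. In sector $S$, $\omega_\psi$ has block $w_S\psi^{\otimes|S|}$, a pure product state on the surviving systems. The target has block $w_S\rho_S$, the marginal of $\rho$ on $S$. We therefore do not need to move weight between sectors. It suffices to find, for each $S$, a channel $\cM_S$ on the $D_S$-dimensional sector space satisfying
\[
\cM_S(\psi^{\otimes|S|})=\rho_S,\qquad \cM_S(I_S/D_S)=I_S/D_S .
\]

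First, let $P_S$ denote the projector onto sector $S$; these are orthogonal and sum to the identity. Define
\[
\cM^\rho(X)\coloneqq\sum_S \cM_S(P_SXP_S),
\]
where each $\cM_S$ maps sector $S$ into itself. This is the composition of the pinching onto sectors, which is a channel and leaves both $\omega_\psi$ and every $\sigma_q$ unchanged, with a direct sum of channels. Hence $\cM^\rho$ is a channel, it maps $\omega_\psi$ to $\bigoplus_S w_S\rho_S$, and it fixes every $\sigma_q=\bigoplus_S q_S I_S/D_S$ simultaneously for all $q$. The same family of $\cM_S$ serves all $q$ because each $\cM_S$ is unital on its own block and acts block-diagonally.

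For the empty sector, $D_\emptyset=1$ and $\cM_\emptyset$ is the identity. For nonempty $S$, set $D=D_S$ and let $\phi=\psi^{\otimes|S|}$, which is pure. We need a unital channel $\cN$ on a $D$-dimensional space with $\cN(\phi)=\tau$ for an arbitrary state $\tau=\rho_S$. This is the main step. The first attempt is
\[
\cN(X)\coloneqq\Tr[\phi X]\,\tau+\Tr[(I-\phi)X]\,\frac{I-\tau}{D-1},
\]
a measure-and-prepare channel, valid since $D\geq2$ when $|S|\geq1$. It is completely positive and trace preserving because $(I-\tau)/(D-1)\geq0$ has unit trace. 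It maps $\phi$ to $\tau$, and it maps $I/D$ to
\[
\frac1D\tau+\frac{D-1}{D}\cdot\frac{I-\tau}{D-1}=\frac ID .
\]
This settles the step. It is the standard fact that a pure state majorizes every state relative to the maximally mixed state.

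Assembling the blocks gives \eqref{eq:simultaneous-conversion}, and \eqref{eq:relative-majorization} is then a restatement. The only point to check is that the sector identification in \eqref{eq:direct-sum} is consistent for both states. This holds because it is fixed by the erasure flags and is independent of the input.
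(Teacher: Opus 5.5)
Your proof is correct. Its outer structure is exactly the paper's: pinch onto the erasure sectors, then act in each sector with a unital channel taking $\psi^{\otimes|S|}$ to $\rho_S$.

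The difference lies in the blockwise channel. The paper takes a spectral decomposition $\rho_S=\sum_j r_{S,j}\phi_{S,j}$ and chooses unitaries $U_{S,j}$ rotating $\psi^{\otimes|S|}$ onto each $\phi_{S,j}$. It then uses the mixed-unitary channel $X\mapsto\sum_j r_{S,j}U_{S,j}XU_{S,j}^\dagger$, whose unitality is automatic. You instead use a measure-and-prepare channel: it tests for $\phi$ and prepares $\tau$ or $(I-\tau)/(D-1)$, and you check unitality by a one-line computation.

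Your version is slightly more elementary and fully explicit, since it needs no eigendecomposition and no choice of unitaries. It is entanglement-breaking within each sector. The paper's version gives a random-unitary conversion, the direct quantum analogue of a doubly stochastic map in the majorization picture. It reflects the stronger structural fact that $\rho_S$ lies in the convex hull of the unitary orbit of a pure state.

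For the theorem and its use in Proposition~\ref{prop:reduced-meta}, only the existence of a channel fixing every $\sigma_q$ matters. Either construction therefore suffices.
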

\begin{proof}
For each $S$, choose a spectral decomposition into pure states,
\begin{equation}
 \rho_S=\sum_j r_{S,j}\phi_{S,j}.
 \label{eq:marginal-spectral}
\end{equation}
Because $\psi^{\otimes|S|}$ and $\phi_{S,j}$ are pure states on the same Hilbert space, there is a unitary $U_{S,j}$ satisfying
\begin{equation}
 U_{S,j}\psi^{\otimes|S|}U_{S,j}^\dagger=\phi_{S,j}.
 \label{eq:sector-unitary}
\end{equation}
Define the random-unitary channel
\begin{equation}
 \cR_S^\rho(X)\coloneqq\sum_j r_{S,j}U_{S,j}XU_{S,j}^\dagger.
 \label{eq:sector-random-unitary}
\end{equation}
For the empty sector, use the identity channel on scalars. Each channel in \eqref{eq:sector-random-unitary} is trace preserving and unital, and it satisfies
\begin{equation}
 \cR_S^\rho(\psi^{\otimes|S|})=\rho_S,
 \qquad\cR_S^\rho(I_S/D_S)=I_S/D_S.
 \label{eq:sector-actions}
\end{equation}

Let $P_S$ be the projector onto erasure sector $S$. Define $\cM^\rho$ by first measuring the sector and then applying $\cR_S^\rho$ within it:
\begin{equation}
 \cM^\rho(X)\coloneqq\bigoplus_S\cR_S^\rho(P_SXP_S),
 \label{eq:full-sector-channel}
\end{equation}
where each compressed block is identified with its surviving-system space. This is a channel on the full output space. Equations~\eqref{eq:direct-sum} and \eqref{eq:sector-actions} give both identities in \eqref{eq:simultaneous-conversion}. The construction of $\cM^\rho$ does not depend on $q$.
\end{proof}

\begin{figure}[t]
 \centering
 \input{figures/sector_conversion.tex}
 \caption{The relative-majorization channel acts separately in each erasure sector. A mixture of unitaries takes the pure reference direction to the desired marginal, while fixing the normalized identity. Consequently, every choice of auxiliary pattern weights $q_S$ is preserved by the same channel. The operations act on Bob's full surviving-system space and may entangle those systems.}
 \label{fig:sector-conversion}
\end{figure}

\subsection{The hypothesis-testing converse}
The invariant family becomes useful when a decoder is viewed as a binary hypothesis test. We recall the testing quantity, explain its basic properties, and then state the resulting no-feedback converse as a proposition. The hypothesis-testing relative entropy originates in the work of Buscemi and Datta~\cite{BD10} and Wang and Renner~\cite{WR12}. We use the convention that its error parameter bounds rejection of the first hypothesis.

For states $\omega$ and $\sigma$ and $\eps\in[0,1)$, define
\begin{equation}
 \begin{aligned}
 \beta_{\eps}(\omega\|\sigma)
 &\coloneqq\min\!\left\{\Tr[T\sigma]:0\leq T\leq I,\ \Tr[T\omega]\geq1-\eps\right\},\\
 \Htest^\eps(\omega\|\sigma)
 &\coloneqq-\log_2\beta_{\eps}(\omega\|\sigma).
 \end{aligned}
 \label{eq:testing-def}
\end{equation}
The POVM $(T,I-T)$ accepts the first hypothesis on outcome $T$. Thus $\beta_{\eps}$ is the least probability of incorrectly accepting that hypothesis when the state is $\sigma$, subject to rejecting $\omega$ with probability at most $\eps$. We use the extended-value convention $-\log_2 0=+\infty$.

Two properties will be used below; see also~\cite{KLW26}. First, applying a quantum channel before a test cannot improve distinguishability:
\begin{equation}
\Htest^\eps(\omega\|\sigma) \geq  \Htest^\eps(\mathcal N(\omega)\|\mathcal N(\sigma))
 .
 \label{eq:testing-dpi}
\end{equation}
Indeed, if $T$ is a feasible output test, then $\mathcal N^\dagger(T)$ is a feasible input test with the same two acceptance probabilities. Complete positivity and unitality of $\mathcal N^\dagger$ ensure that this operator lies between zero and the identity. Hence the minimum type-II error on the input is no greater than that on the output, giving \eqref{eq:testing-dpi} after applying $-\log_2$. Second, adjoining the same independent state $\eta$ to both hypotheses does not change the divergence:
\begin{equation}
 \Htest^\eps(\eta\otimes\omega\|\eta\otimes\sigma)
 =\Htest^\eps(\omega\|\sigma).
 \label{eq:testing-independent-factor}
\end{equation}
One inequality follows by the channel that prepares $\eta$, and the reverse inequality follows by discarding its system. These are the data-processing and stability properties needed in the proposition.

\begin{proposition}[Reduction of the no-feedback converse]\label{prop:reduced-meta}
Suppose a code transmits $M$ equiprobable messages through $(\cE^p)^{\otimes n}$ without feedback and with average error at most $\eps$. For the pure product reference output $\omega_\psi$ in \eqref{eq:pure-reference} and the auxiliary family in \eqref{eq:aux-family},
\begin{equation}
 \log_2M\leq\inf_q\Htest^\eps(\omega_\psi\|\sigma_q).
 \label{eq:reduced-meta}
\end{equation}
The bound also applies to a code whose maximum error is at most $\eps$.
\end{proposition}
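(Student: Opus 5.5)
We will prove the bound in the standard meta-converse form: first bound $\log_2 M$ by a testing divergence of the actual code outputs against a message-independent auxiliary state, then use Theorem~\ref{thm:relative} and data processing to replace the actual outputs by the pure reference output $\omega_\psi$. Fix an arbitrary distribution $q$ on the subsets of $[n]$. Since a maximum-error code with error at most $\eps$ is also an average-error code with the same bound, it suffices to treat average error.

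\textbf{Step 1 (classical--quantum meta-converse).} Let $\tau_m\coloneqq(\cE^p)^{\otimes n}(\rho^m_{A_1\cdots A_n})$. Form the classical--quantum states
\begin{equation*}
 \Omega\coloneqq\frac1M\sum_m\proj m\otimes\tau_m,\qquad
 \Sigma\coloneqq\frac1M\sum_m\proj m\otimes\sigma_q.
\end{equation*}
Use the test $T\coloneqq\sum_m\proj m\otimes\Lambda_m$. Then $\Tr[T\Omega]=\succp\geq1-\eps$, while $\Tr[T\Sigma]=\frac1M\sum_m\Tr[\Lambda_m\sigma_q]=1/M$. Hence $\beta_\eps(\Omega\|\Sigma)\leq1/M$, that is, $\log_2M\leq\Htest^\eps(\Omega\|\Sigma)$.

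\textbf{Step 2 (reduction to the reference output).} For each $m$, Theorem~\ref{thm:relative} supplies a channel $\cM^{\rho^m}$ with $\cM^{\rho^m}(\omega_\psi)=\tau_m$ and $\cM^{\rho^m}(\sigma_q)=\sigma_q$. Define the classically controlled channel
\begin{equation*}
 \mathcal N(\proj m\otimes X)\coloneqq\proj m\otimes\cM^{\rho^m}(X),
\end{equation*}
extended by first dephasing the register in the message basis. With $\eta\coloneqq\frac1M\sum_m\proj m$, this channel gives $\mathcal N(\eta\otimes\omega_\psi)=\Omega$ and $\mathcal N(\eta\otimes\sigma_q)=\Sigma$. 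The key point is that a single $\mathcal N$ works for both states, because each $\cM^{\rho^m}$ fixes $\sigma_q$ for the same $q$. Data processing \eqref{eq:testing-dpi} then gives
\begin{equation*}
 \Htest^\eps(\Omega\|\Sigma)\leq\Htest^\eps(\eta\otimes\omega_\psi\|\eta\otimes\sigma_q),
\end{equation*}
and the stability property \eqref{eq:testing-independent-factor} reduces the right-hand side to $\Htest^\eps(\omega_\psi\|\sigma_q)$. Taking the infimum over $q$ yields \eqref{eq:reduced-meta}.

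The only delicate point is Step 2: the conversion channel must depend on $m$ but not on $q$, and it must fix $\sigma_q$ exactly so that the second hypothesis stays a product $\eta\otimes\sigma_q$. Theorem~\ref{thm:relative} supplies exactly this property, so the remaining steps are routine. The extended-value convention handles the case $\beta_\eps=0$ automatically.
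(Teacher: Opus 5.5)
Your proposal is correct and follows essentially the same route as the paper. The decoder-induced test gives $\log_2M\leq\Htest^\eps(\Omega\|\eta\otimes\sigma_q)$, and the classically controlled channel built from Theorem~\ref{thm:relative} fixes $\sigma_q$ for every $m$; with data processing and stability it then reduces the bound to $\Htest^\eps(\omega_\psi\|\sigma_q)$. The only cosmetic difference is that the paper first states the meta-converse for an arbitrary auxiliary state $\sigma$ and then specializes to $\sigma_q$, whereas you fix $\sigma=\sigma_q$ from the start.
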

\begin{proof}
For the given code, define
\begin{equation}
 \Omega_{XB^n}\coloneqq\frac1M\sum_{m=1}^{M}\proj m_X\otimes
 (\cE^p)^{\otimes n}(\rho^m_{A_1\cdots A_n}),
 \qquad\pi_X\coloneqq\frac1M\sum_{m=1}^M\proj m_X.
 \label{eq:code-cq-state}
\end{equation}
The classical register $X$ records the true message. The decoding POVM $(\Lambda_m)_m$ defines the test
\begin{equation}
 T_{XB^n}\coloneqq\sum_m\proj m_X\otimes\Lambda_m.
 \label{eq:decoder-test}
\end{equation}
Each effect lies between zero and the identity, so $0\leq T_{XB^n}\leq I_{XB^n}$. Against the first hypothesis, its acceptance probability is exactly the average decoding success. Against the second hypothesis, the message register and the channel output are independent. For every auxiliary output state $\sigma$,
\begin{equation}
 \begin{aligned}
 \Tr[T_{XB^n}\Omega_{XB^n}]&\geq1-\eps,\\
 \Tr[T_{XB^n}(\pi_X\otimes\sigma)]
 &=\frac1M\Tr\!\left[\Big(\sum_m\Lambda_m\Big)\sigma\right]
 =\frac1M.
 \end{aligned}
 \label{eq:decoder-test-probabilities}
\end{equation}
The test in \eqref{eq:decoder-test} is therefore a feasible competitor in the minimization defining $\beta_\eps(\Omega_{XB^n}\|\pi_X\otimes\sigma)$, with objective value $1/M$. The minimum cannot be larger, so applying the decreasing function $-\log_2$ gives $\Htest^\eps(\Omega_{XB^n}\|\pi_X\otimes\sigma)\geq\log_2M$. Because this holds for every normalized auxiliary state, it also holds after minimizing:
\begin{equation}
 \log_2M\leq\inf_\sigma
 \Htest^\eps(\Omega_{XB^n}\|\pi_X\otimes\sigma).
 \label{eq:cq-meta-converse}
\end{equation}
This is the decoder-to-test argument underlying the one-shot converse of~\cite{WR12}.

For each codeword, Theorem~\ref{thm:relative} supplies a channel $\cM^{\rho^m}$. Apply the channel that reads the classical register $X$ and then applies the corresponding $\cM^{\rho^m}$ to the output system. It takes $\pi_X\otimes\omega_\psi$ to $\Omega_{XB^n}$, while fixing $\pi_X\otimes\sigma_q$ for every $q$. Equations~\eqref{eq:testing-dpi} and~\eqref{eq:testing-independent-factor} consequently give
\begin{equation}
 \begin{aligned}
 \Htest^\eps(\Omega_{XB^n}\|\pi_X\otimes\sigma_q)
 &\leq\Htest^\eps(\pi_X\otimes\omega_\psi\|\pi_X\otimes\sigma_q)\\
 &=\Htest^\eps(\omega_\psi\|\sigma_q).
 \end{aligned}
 \label{eq:controlled-dpi}
\end{equation}
For every $q$, the left side is at least $\log_2M$ by the same feasible-test argument. Taking the infimum of the right side over $q$ proves \eqref{eq:reduced-meta}. Finally, a maximum-error bound implies the same average-error bound, so the proposition applies under either criterion.
\end{proof}

\begin{remark}[Why preserving a common auxiliary state is necessary]\label{rem:common-aux}
The minimization over $\sigma$ in \eqref{eq:cq-meta-converse} does not justify applying arbitrary message-dependent channels to the second hypothesis. Such channels can produce a state correlated with $X$, rather than one of the required form $\pi_X\otimes\sigma$. The simultaneous invariance in \eqref{eq:simultaneous-conversion} is precisely what prevents this problem.
\end{remark}

\subsection{An explicit optimizer, without a minimax exchange}\label{subsec:testing-opt}
Proposition~\ref{prop:reduced-meta} removes the optimization over codewords, leaving only the auxiliary distribution $q$. We now solve that remaining optimization. The proof gives both a test that works for every auxiliary distribution and an explicit distribution attaining the converse value.

The remaining testing problem has only one relevant acceptance probability in each erasure sector. Both hypotheses can be pinched into sectors without affecting the optimum. Within a sector, the first hypothesis is pure and the second is a multiple of the identity. Any part of a test orthogonal to the pure direction adds type-II cost without increasing acceptance of the first hypothesis. Hence
\begin{equation}
 \beta_{\eps}(\omega_\psi\|\sigma_q)
 =\min\!\left\{\sum_Sq_S\frac{t_S}{D_S}:
 0\leq t_S\leq1,\ \sum_Sw_St_S\geq1-\eps\right\}.
 \label{eq:testing-LP}
\end{equation}
For completeness, if a sector test is $T_S$, then
$t_S\coloneqq\Tr[T_S\psi^{\otimes|S|}]$ satisfies
$\Tr[T_S]\geq t_S$; replacing it by $t_S\psi^{\otimes|S|}$ preserves acceptance and cannot increase the cost. This proves \eqref{eq:testing-LP} directly.

\begin{theorem}[Exact testing value and auxiliary state]\label{thm:testing-value}
For $R_\eps$ in \eqref{eq:R-definition},
\begin{equation}
 \inf_q\Htest^\eps(\omega_\psi\|\sigma_q)=\log_2R_\eps.
 \label{eq:testing-value}
\end{equation}
If $j_*$ attains the minimum in \eqref{eq:R-definition}, an optimal distribution is
\begin{equation}
 q_S^*\coloneqq\frac{w_SD_S\one\{|S|\leq j_*\}}{Z_{j_*}}.
 \label{eq:q-optimal}
\end{equation}
In particular,
\begin{equation}
 \beta_{\eps}(\omega_\psi\|\sigma_{q^*})
 =\frac{B_{j_*}-\eps}{Z_{j_*}}=\frac1{R_\eps}.
 \label{eq:beta-optimal}
\end{equation}
\end{theorem}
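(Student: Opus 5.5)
We work entirely with the scalar linear program \eqref{eq:testing-LP}. Two inequalities are needed. First, every $q$ gives $\beta_\eps(\omega_\psi\|\sigma_q)\leq 1/R_\eps$; this yields $\inf_q\Htest^\eps\geq\log_2R_\eps$. Second, the particular choice $q^*$ gives $\beta_\eps(\omega_\psi\|\sigma_{q^*})=1/R_\eps$, which supplies the matching upper bound on the infimum.

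\textbf{Lower bound (a universal test).} Let $j_*$ attain the minimum in \eqref{eq:R-definition}. We look for a single feasible vector $t$ whose cost is at most $1/R_\eps$ for every $q$. Since $\sum_S q_S t_S/D_S\leq\max_S t_S/D_S$, it suffices to take $t_S=c\,D_S$ on the sectors with $|S|\leq j_*$ and $t_S=0$ elsewhere. The constant $c$ is fixed by the feasibility condition
\[
\sum_{|S|\leq j_*}w_S\,c\,D_S = c\,Z_{j_*}= 1-\eps .
\]
Two issues arise with this choice.

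\begin{itemize}
\item \textbf{Cap $t_S\leq1$.} We need $cD_S\leq1$, i.e.\ $(1-\eps)d^{j_*}\leq Z_{j_*}$, and this need not hold. So it is better to allow truncation: use $t_S=\min\{1,D_S/r\}$ with $r=R_\eps$.
\item \textbf{Feasibility of the truncated test.} The acceptance probability is then
\[
\sum_k b_k\min\{1,d^k/r\}=1-e(r).
\]
By Corollary~\ref{cor:capacity} and continuity of $e$, we have $e(R_\eps)\leq\eps$, so the test is feasible.
\end{itemize}

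With this test, $t_S/D_S=\min\{1/D_S,1/r\}\leq 1/r$. Hence the cost against any $\sigma_q$ is at most $1/R_\eps$, as required.

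\textbf{Upper bound (evaluating $q^*$).} With $q=q^*$, the cost becomes $\sum_{|S|\leq j_*}w_St_S/Z_{j_*}$. So we must minimize $\sum_{|S|\leq j_*}w_St_S$ subject to $\sum_S w_St_S\geq1-\eps$ and $0\le t_S\le 1$. The cheapest strategy is to set $t_S=1$ on every sector with $|S|>j_*$, since those sectors cost nothing. Their total weight is $1-B_{j_*}$, so the remaining requirement is
\[
\sum_{|S|\leq j_*}w_St_S\geq B_{j_*}-\eps .
\]
This quantity is positive because $B_{j_*}>\eps$, and it is achievable because it is at most $B_{j_*}$. The optimal value is therefore exactly $(B_{j_*}-\eps)/Z_{j_*}=1/R_\eps$, which is \eqref{eq:beta-optimal}. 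Applying $-\log_2$ and combining with the lower bound gives \eqref{eq:testing-value} and shows that $q^*$ is optimal.

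\textbf{Main obstacle.} The delicate point is the lower bound, specifically verifying $e(R_\eps)\leq\eps$ for the real (non-integer) parameter. This follows from the characterization \eqref{eq:error-max-prefix} together with the cutoff constraints \eqref{eq:prefix-constraints}, which show that $r=R_\eps$ satisfies every inequality $B_j-Z_j/r\leq\eps$. I would also check the degenerate cases $Z_{j_*}=0$ (impossible, since $b$ has positive mass somewhere when $B_{j_*}>0$) and $p\in\{0,1\}$, where sectors with $w_S=0$ simply drop out of both programs.
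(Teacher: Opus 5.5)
Your proposal is correct and follows the paper's proof essentially step for step. The same universal test $t_S^*=\min\{1,D_S/R_\eps\}$ bounds $\beta_\eps(\omega_\psi\|\sigma_q)$ by $1/R_\eps$ for every $q$, and the same split into sectors with $|S|\le j_*$ and $|S|>j_*$ evaluates the cost under $q^*$ exactly; the only difference is that you verify feasibility through the inequality $e(R_\eps)\le\eps$, read off from \eqref{eq:error-max-prefix} and \eqref{eq:prefix-constraints}, whereas the paper uses continuity to get $e(R_\eps)=\eps$, and the inequality is all that is needed.
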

\begin{proof}
The function $e(r)$ in \eqref{eq:real-error} is continuous, nondecreasing, and tends to one as $r\to\infty$. Since $R_\eps$ is its largest feasible argument and $\eps<1$, continuity gives $e(R_\eps)=\eps$, also when $\eps=0$. Set
\begin{equation}
 t_S^*\coloneqq\min\{1,D_S/R_\eps\}.
 \label{eq:test-saddle}
\end{equation}
The identity $1-e(R_\eps)=1-\eps$ shows that $(t_S^*)_S$ is feasible in \eqref{eq:testing-LP}. For every distribution $q$,
\begin{equation}
 \beta_{\eps}(\omega_\psi\|\sigma_q)
 \leq\sum_Sq_S\frac{t_S^*}{D_S}
 \leq\frac1{R_\eps}.
 \label{eq:beta-uniform-upper}
\end{equation}

Now use $q^*$ from \eqref{eq:q-optimal}. Its normalization follows from the definition of $Z_{j_*}$, which is positive because $B_{j_*}>\eps$. Its type-II cost in \eqref{eq:testing-LP} is
\begin{equation}
 \frac1{Z_{j_*}}\sum_{S:\,|S|\leq j_*}w_St_S.
 \label{eq:truncated-cost}
\end{equation}
The sectors with $|S|>j_*$ can provide at most $1-B_{j_*}$ of the required acceptance. Therefore, every feasible test must satisfy
\begin{equation}
 \sum_{S:\,|S|\leq j_*}w_St_S\geq B_{j_*}-\eps.
 \label{eq:low-sector-acceptance}
\end{equation}
Equality is feasible: accept all higher sectors and use the constant acceptance $(B_{j_*}-\eps)/B_{j_*}$ in the lower sectors. This constant lies in $[0,1]$. Hence the minimum cost is $(B_{j_*}-\eps)/Z_{j_*}=1/R_\eps$. Together with \eqref{eq:beta-uniform-upper}, this proves that the maximum over $q$ of the testing error is $1/R_\eps$. Applying $-\log_2$ gives \eqref{eq:testing-value}.
\end{proof}

The total auxiliary weight on sectors with $k$ survivors is $b_kd^k/Z_{j_*}$ for $k\leq j_*$ and zero otherwise. Thus the optimizer is a truncated, dimension-weighted distribution on the erasure patterns. It is the direct quantum sector analogue of an erasure-channel auxiliary distribution. The proof gives a feasible test for every $q$ and a matching auxiliary state, so no interchange of optimizations is needed.

Combining Proposition~\ref{prop:reduced-meta} and Theorem~\ref{thm:testing-value} gives
\begin{equation}
 C_{n,\eps}^{\nf}\leq\log_2\lfloor R_\eps\rfloor
 =C_{n,\eps}^{\fb}.
 \label{eq:nf-capacity-bound}
\end{equation}
Equivalently, $\succp^{*,\nf}(n,M)\leq s_{n,d,p}(M)$. The same numerical converse follows more directly by restricting a decoder to each erasure sector and applying \eqref{eq:elementary-dimension}. As noted in the introduction, this no-feedback bound is already present in~\cite[Eq.~(34)]{Rosati24}.

\subsection{The fully optimized classical--quantum converse}
The preceding theorem optimizes the sector-uniform auxiliary family for one reference output. We next show that it also evaluates the usual converse optimized over all finite input ensembles and all auxiliary output states. This distinguishes an exact optimization identity from an exact operational capacity formula. For a channel $\mathcal N$, define
\begin{equation}
 \chi_H^\eps(\mathcal N)\coloneqq
 \sup_{(p_x,\rho^x)_x}\ \inf_\sigma
 \Htest^\eps\!\left(
 \sum_xp_x\proj x_X\otimes\mathcal N(\rho^x)
 \middle\|\sum_xp_x\proj x_X\otimes\sigma\right),
 \label{eq:fully-optimized-meta}
\end{equation}
where the supremum is over finite input ensembles and the infimum is over output states. This quantity is a converse expression; it is not defined as an operational capacity.

\begin{proposition}[Exact value of the optimized converse]\label{prop:full-meta}
For the erasure channel,
\begin{equation}
 \chi_H^\eps((\cE^p)^{\otimes n})=\log_2R_\eps.
 \label{eq:full-meta-value}
\end{equation}
The uniform ensemble of the $d^n$ computational-basis product states attains the supremum in \eqref{eq:fully-optimized-meta}.
\end{proposition}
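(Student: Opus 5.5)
The proof splits into an upper bound valid for every finite input ensemble and a matching lower bound for the uniform computational-basis ensemble. Both halves reuse the constructions already made for Theorem~\ref{thm:relative} and Theorem~\ref{thm:testing-value}, in particular $t_S^*=\min\{1,D_S/R_\eps\}$ and the identity $e(R_\eps)=\eps$.

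\emph{Upper bound.} Fix an arbitrary finite ensemble $(p_x,\rho^x)_x$.
\begin{enumerate}[label=(\roman*)]
\item Replace the infimum over $\sigma$ by the particular choice $\sigma=\sigma_{q^*}$ from \eqref{eq:q-optimal}.
\item Let $p_X\coloneqq\sum_xp_x\proj x_X$. Apply the controlled channel that reads $X$ and then applies $\cM^{\rho^x}$ from Theorem~\ref{thm:relative} to the output system.
\item By \eqref{eq:simultaneous-conversion}, this channel maps $p_X\otimes\omega_\psi$ to $\sum_xp_x\proj x\otimes(\cE^p)^{\otimes n}(\rho^x)$. It also fixes $p_X\otimes\sigma_{q^*}$, because $\cM^{\rho^x}(\sigma_{q^*})=\sigma_{q^*}$ for every $x$.
\item Data processing \eqref{eq:testing-dpi} and stability \eqref{eq:testing-independent-factor} with $\eta=p_X$ then bound the objective by $\Htest^\eps(\omega_\psi\|\sigma_{q^*})$.
\item By \eqref{eq:beta-optimal}, this equals $\log_2R_\eps$.
\end{enumerate}
This is the same argument as in Proposition~\ref{prop:reduced-meta}, except that the prior is no longer uniform. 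Nothing in that argument used uniformity, since $p_X$ is a common independent factor of both hypotheses. Taking the supremum over ensembles gives $\chi_H^\eps\leq\log_2R_\eps$.

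\emph{Lower bound.} Take $p_x=d^{-n}$ and $\rho^x=\proj{x}$ for $x\in\{0,\ldots,d-1\}^n$. Build one test that works against every output state $\sigma$ simultaneously. Define
\begin{equation}
 T\coloneqq\sum_x\proj x_X\otimes T_x,\qquad
 T_x\coloneqq\bigoplus_S t_S^*\,\proj{x_S},
\end{equation}
where $\ket{x_S}$ is the computational-basis vector of the surviving digits in sector $S$. Then $0\leq T\leq I$, since $0\leq t_S^*\leq 1$.

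By \eqref{eq:direct-sum}, the output for input $x$ is $\bigoplus_S w_S\proj{x_S}$. The acceptance probability under the first hypothesis is therefore
\begin{equation}
 \sum_Sw_St_S^*=1-e(R_\eps)=1-\eps,
\end{equation}
as already shown in the proof of Theorem~\ref{thm:testing-value}.

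For the second hypothesis, $T$ is block diagonal in the sectors, so only the diagonal blocks $P_S\sigma P_S$ contribute. Averaging over $x$ gives $d^{-n}\sum_x\proj{x_S}=I_S/D_S$. Hence
\begin{equation}
 \Tr\!\Big[T\Big(\sum_xp_x\proj x\otimes\sigma\Big)\Big]
 =\sum_S\frac{t_S^*}{D_S}\Tr[P_S\sigma]
 \leq\frac1{R_\eps}.
\end{equation}
The last step uses $t_S^*/D_S\leq1/R_\eps$ together with $\sum_S\Tr[P_S\sigma]=1$. Therefore $\beta_\eps\leq1/R_\eps$ for every $\sigma$, and the infimum over $\sigma$ is at least $\log_2R_\eps$. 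This gives the reverse inequality and shows that the uniform basis ensemble attains the supremum.

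The step needing most care is the lower bound against completely arbitrary $\sigma$, which may have coherences between sectors and need not be sector-uniform. The block-diagonal structure of $T$ handles the coherences. The uniform average over $x$ handles the non-uniformity inside each sector. So no minimax exchange or symmetrization of $\sigma$ is needed; otherwise one would have to twirl $\sigma$ by $X$-shifts and phase unitaries, as the appendix's symmetry argument presumably does.
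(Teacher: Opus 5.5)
Your proof is correct. The upper bound is the paper's argument (controlled $\cM^{\rho^x}$, data processing, then Theorem~\ref{thm:testing-value}). Your lower bound, however, takes a genuinely different route.

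The paper (Appendix~\ref{app:symmetry}) works on the auxiliary side. Sector pinching, local phase twirls, and joint shifts of label and output all fix $\Theta_{XB^n}$, so by data processing every $\sigma$ may be replaced by $\sigma_q$ with $q_S=\Tr[P_S\sigma]$. The controlled inverse shift $W$ then maps $\Theta_{XB^n}$ to $\pi_X\otimes\omega_0$ while fixing $\pi_X\otimes\sigma_q$. The bound $\beta_\eps(\omega_0\|\sigma_q)\leq1/R_\eps$, valid for all $q$, finishes the argument.

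You work on the test side instead. Your $T$ is feasible with acceptance exactly $1-\eps$. Its label average $\sum_S(t_S^*/D_S)P_S\leq I/R_\eps$ is an operator inequality, so a single test certifies $\beta_\eps\leq1/R_\eps$ against every $\sigma$ simultaneously, including all coherences. In fact $T=W^\dagger(I_X\otimes T^*)W$ with $T^*=\bigoplus_S t_S^*(\proj0)^{\otimes|S|}$. That is, your test is the pull-back through the paper's $W$ of the test underlying \eqref{eq:beta-uniform-upper}, so the two proofs are dual views of the same certificate.

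Your version is shorter, needs no covariance checks, and exhibits an explicit saddle point $(T,\sigma_{q^*})$. The paper's version yields the stronger pointwise statement \eqref{eq:symmetry-dpi}, and hence \eqref{eq:symmetry-target}: sector-uniform auxiliaries are without loss of generality for this ensemble. This separates the symmetry reduction from the scalar linear program.
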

\begin{proof}
The upper bound follows from Theorem~\ref{thm:relative} and the controlled-channel argument for an arbitrary ensemble, followed by Theorem~\ref{thm:testing-value}. For the uniform computational-basis ensemble, sector pinching, phase averaging, and simultaneous shifts of the classical label and output systems reduce every auxiliary state to the family $\sigma_q$ without increasing the testing divergence. Controlled inverse shifts then convert the first hypothesis to a product of the uniform label state and $\omega_\psi$, with $\psi=\proj0$, while fixing the second hypothesis. Theorem~\ref{thm:testing-value} gives the matching lower bound. Appendix~\ref{app:symmetry} provides the explicit channels and both optimization inequalities.
\end{proof}

Thus even the fully optimized converse in \eqref{eq:fully-optimized-meta} can have a strict operational gap. The next section proves this using only four channel uses.

\section{A strict gap without feedback}\label{sec:strict}
Section~\ref{sec:exact} determines an exact operational optimum with classical feedback, whereas Section~\ref{sec:majorization} evaluates a converse for communication without feedback. The numerical values of these two expressions agree. This agreement does not yet show that the no-feedback converse is achievable: an exactly evaluated bound need not be attained by a communication code. We now establish a concrete example where it is not.

The distinction arises from the requirements imposed by different erasure patterns. If a pattern were specified before encoding, one could choose states tailored to its surviving systems and attain the corresponding dimension bound. Without feedback, however, Alice must prepare one codeword for each message before learning the pattern. Its reduced states on every possible survivor set must all come from that same codeword. Bob may still read the erasure flags and choose a different decoding measurement for each pattern; the restriction is the compatibility of the reduced code states, not a restriction on his measurement.

For four messages and four qubit transmissions, we show that equality in the average dimension bound would force incompatible distinguishability conditions on the two-qubit marginals. This proves a strict gap in optimal success probability and, at a particular error tolerance, an exact separation between the largest message sizes with and without feedback. Section~\ref{sec:tetra} then addresses a different question: whether quantum encoding can nevertheless outperform classical encoding when neither has feedback.

\subsection{Four messages and four qubit transmissions}
Every two-survivor pattern in this example leaves a four-dimensional output space. Attaining the dimension bound would therefore require perfect discrimination of the four messages on every pair of surviving qubits. We first translate that requirement into purity and orthogonality conditions, then show that no collection of four codewords can satisfy all of them.

\begin{theorem}[Strict unassisted gap]\label{thm:strict-gap}
Let $\succp^{*,\nf}(n,M)$ be the optimal average success probability without feedback, defined in Section~\ref{subsec:codes} by maximizing \eqref{eq:code-success}, and let $s_{n,d,p}(M)$ be the dimension-counting success function in \eqref{eq:success-function}. For $n=4$, $d=2$, $M=4$, and $0<p<1$,
\begin{equation}
 \succp^{*,\nf}(4,4)<s_{4,2,p}(4).
 \label{eq:strict-gap}
\end{equation}
Thus no four-message code using four qubit erasure-channel transmissions without feedback attains the success probability achievable with classical feedback, which equals the right-hand side by Theorem~\ref{thm:feedback}. The inequality remains strict even when the codewords are mixed and entangled across all four inputs and the receiver makes an arbitrary collective measurement.
\end{theorem}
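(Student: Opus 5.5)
Suppose, toward a contradiction, that some code without feedback attains $s_{4,2,p}(4)$. First reduce to a compact problem. The set of codes (four input states on four qubits together with a four-outcome POVM) is compact, and the success probability is continuous, so the supremum $\succp^{*,\nf}(4,4)$ is attained. It therefore suffices to show that no single code achieves equality. The receiver reads the erasure sector first, so \eqref{eq:code-success} splits into sectors. The contribution of sector $S$ is $w_S$ times the average success of discriminating the marginals $\rho^m_S$ with the compressed POVM restricted to that sector. By \eqref{eq:elementary-dimension}, each sector's conditional success is at most $\min\{1,2^{|S|}/4\}$. Since every $w_S>0$ for $0<p<1$, equality in the average forces equality sector by sector. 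In particular, for every pair $S=\{i,j\}$ the four two-qubit marginals $\rho^m_{ij}$ must be perfectly distinguishable on a $4$-dimensional space. Hence they are mutually orthogonal pure states forming an orthonormal basis of $A_iA_j$.

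Next, exploit purity. If $\rho^m_{12}$ is pure, then $\rho^m=\rho^m_{12}\otimes\rho^m_{34}$, and purity of $\rho^m_{34}$ makes the whole codeword a pure product $\ket{\alpha_m}_{12}\otimes\ket{\beta_m}_{34}$. Purity of $\rho^m_{13}$ then forces $\ket{\alpha_m}$ and $\ket{\beta_m}$ to be product states, because the marginal on qubits $1,3$ is $\Tr_2\alpha_m\otimes\Tr_4\beta_m$, which is pure only if each single-qubit marginal is pure. So each codeword is a pure product $\ket{a^m_1}\otimes\ket{a^m_2}\otimes\ket{a^m_3}\otimes\ket{a^m_4}$ of single-qubit states. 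The pair condition now says that for each pair $\{i,j\}$ and each $m\neq m'$, either $\ket{a^m_i}\perp\ket{a^{m'}_i}$ or $\ket{a^m_j}\perp\ket{a^{m'}_j}$.

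The combinatorial core comes next, and it is the step I expect to be the main obstacle. Fix $m\neq m'$ and let $O(m,m')\subseteq[4]$ be the set of positions where the two qubit states are orthogonal. The condition on every pair says that the complement of $O$ has at most one element, so $|O(m,m')|\geq3$ for all six pairs of messages. Now fix a position $i$. On a qubit, orthogonality is an exclusive relation: if $a\perp b$ and $a\perp c$, then $b$ and $c$ are equal up to phase, so $b\not\perp c$. Hence the graph on the four messages with edges $\{m,m'\}$ orthogonal at position $i$ is triangle-free. It is in fact bipartite with parts given by the two basis directions, so it has at most $4$ edges (the complete bipartite graph $K_{2,2}$). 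Summing over positions, the total number of (pair, position) orthogonalities is at most $4\cdot4=16$. The requirement, however, is at least $6\cdot3=18$, which gives a contradiction. I would double-check the bipartite claim carefully: vectors not orthogonal to anything in the set form isolated vertices, and the relation ``orthogonal to'' pairs a direction with its unique complement. The bound $\lfloor 4^2/4\rfloor=4$ edges then follows from Mantel's theorem.

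Finally, combine these steps. Equality is impossible for every code, and compactness converts the pointwise strict inequality into $\succp^{*,\nf}(4,4)<s_{4,2,p}(4)$. A secondary concern is to state cleanly why perfect discrimination of four states in dimension four forces purity and orthogonality. The argument is that $\Tr[\Lambda_m\rho_m]=1$ with $\sum_m\Tr\Lambda_m=4$ forces each $\Lambda_m$ to be a rank-one projector with $\rho_m$ supported on it.
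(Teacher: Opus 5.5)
Your proposal is correct and follows essentially the same route as the paper. Both arguments force equality in every erasure sector, which makes the two-qubit marginals pure and mutually orthogonal. This reduces the codewords to pure products, the per-site triangle-freeness of the orthogonality graph then yields the $18>16$ count contradiction, and compactness gives strictness. The only cosmetic differences are that you obtain purity by a trace count on the compressed POVM elements rather than via orthogonal supports, and that you cite Mantel's theorem where the paper notes directly that five edges on four vertices contain a triangle.
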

\begin{proof}
Assume a code attains the bound. Reading the orthogonal erasure sectors first leaves its success probability unchanged. Every pattern has positive probability because $0<p<1$, so equality in the weighted sum of patternwise dimension bounds forces equality for each pattern.

On each two-survivor pattern, the dimension is four and the bound is one. Attaining unit average success forces unit success for every message, so the four reduced states must be perfectly distinguishable. Perfectly distinguishable states have orthogonal supports. To see this directly, an effect that identifies one state with probability one acts as the identity on that state's support; its zero probability on another state forces it to vanish on the other's support. Four nonzero orthogonal supports in a four-dimensional space must all have dimension one. Consequently, every two-qubit marginal of every codeword is pure.

This purity condition forces each codeword to be a pure product of four single-qubit states. Indeed, a state with a pure marginal factors across that marginal and its complement. This follows, for example, by observing that its support on that subsystem is one dimensional: positivity rules out support outside that subspace and all cross terms with it. Applied to the marginal on sites $12$, this gives $\rho_{1234}=\phi_{12}\otimes\rho_{34}$. Purity of the $34$ marginal makes the second factor pure. The marginal on $13$ is then $\rho_1\otimes\rho_3$, whose purity forces both single-site states to be pure. The pure states on $12$ and $34$ therefore factor as well. Write the four codewords as
\begin{equation}
 \rho^m=\bigotimes_{i=1}^4\proj{u_{m,i}},
 \qquad m\in\{1,2,3,4\}.
 \label{eq:forced-product-code}
\end{equation}

Fix distinct messages $m,m'$. Orthogonality of their reductions on every pair $i,j$ gives
\begin{equation}
 \langle u_{m,i}|u_{m',i}\rangle
 \langle u_{m,j}|u_{m',j}\rangle=0
 \quad\text{for every }i\neq j.
 \label{eq:pairwise-orthogonality}
\end{equation}
There can therefore be at most one site at which the two local vectors are nonorthogonal. Each pair of messages must be locally orthogonal at at least three sites. There are six unordered message pairs, so this requires at least $6\cdot3=18$ incidences of a message pair being orthogonal at a site.

At a fixed qubit site, however, among four normalized vectors there can be at most four orthogonal pairs. If there were at least five such pairs, the graph whose vertices are the four messages and whose edges are the orthogonal pairs would contain a triangle. That would give three mutually orthogonal vectors in a two-dimensional space, which is impossible. Across four sites, there are therefore at most $4\cdot4=16$ orthogonality incidences. This contradicts the required eighteen.

We have proved nonattainability of the dimension-counting value. To obtain a strict inequality for the optimum, note that the set of four input density operators and the set of four-outcome POVMs on the finite output space are compact. The average success probability is continuous in these variables, so its maximum is attained. The preceding contradiction therefore proves \eqref{eq:strict-gap}, not merely the absence of an equality-achieving code in a noncompact class.
\end{proof}

Allowing shared randomness without feedback does not improve the optimum average success: a randomized code is an average of fixed codes, and one of those fixed codes has success at least as large as the average. Thus the strict average-success gap also survives such assistance. No assertion that the optimal unassisted average and maximum errors are equal is needed here.

\subsection{An exact separation of message sizes}
The preceding strict inequality yields an explicit difference in finite-blocklength capacities, even though the unrestricted optimal four-message success probability remains undetermined.

\begin{corollary}[An exact capacity separation]\label{cor:capacity-gap}
For $n=4$, $d=2$, $p=1/2$, and $\eps=11/64$,
\begin{equation}
 C_{4,11/64}^{\nf}=\log_2 3,
 \qquad C_{4,11/64}^{\fb}=2.
 \label{eq:exact-capacity-gap}
\end{equation}
The first equality holds for both average and maximum error; the second holds under the feedback conventions of Theorem~\ref{thm:feedback}.
\end{corollary}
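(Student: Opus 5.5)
The plan is to evaluate the success function \eqref{eq:success-function} for $n=4$, $d=2$, $p=1/2$, where $(b_0,\ldots,b_4)=(1,4,6,4,1)/16$. I would settle the feedback value first and then treat the no-feedback side as a separate converse and a separate explicit construction.

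\emph{Feedback side.} Direct evaluation gives
\begin{equation*}
 s_{4,2,1/2}(4)=\frac1{64}+\frac8{64}+\frac{44}{64}=\frac{53}{64},
\end{equation*}
so $e(4)=11/64$. For $M=5$ it gives
\begin{equation*}
 s_{4,2,1/2}(5)=\frac{33}{80}+\frac{25}{80}=\frac{58}{80},
\end{equation*}
so $e(5)=22/80>11/64$. Since $e$ is nondecreasing, Theorem~\ref{thm:feedback} (or equivalently Corollary~\ref{cor:capacity}, with $\lfloor R_{11/64}\rfloor=4$) yields $M_\fb^*(4,11/64)=4$. This holds for average error, and for maximum error with the seed.

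\emph{No-feedback converse.} By \eqref{eq:nf-capacity-bound}, every no-feedback code meeting error $11/64$ has $M\leq4$. The case $M=4$ is excluded by Theorem~\ref{thm:strict-gap}: its average success would be at least $53/64=s_{4,2,1/2}(4)$, which the strict inequality forbids. A maximum-error code is also an average-error code, so $M=4$ is excluded under both criteria. Hence $M^*_\nf\leq3$ under both criteria.

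\emph{No-feedback achievability for $M=3$.} This step needs the most care, because I must meet the \emph{maximum}-error criterion without a seed. The available slack is small: the feedback value for $M=3$ is already $1-14/16=8/64$. I would use a classical code with codewords $a=0000$, $b=0111$, $c=1011$ sent in the computational basis. Bob reads the survivor set $S$ and the received bits, then guesses uniformly among the codewords consistent with them. This randomized decoder is a legitimate POVM.

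The code confuses messages only on the following patterns. The pair $a,b$ agrees only at position $1$, so they are confused only for $S\in\{\emptyset,\{1\}\}$. The pair $a,c$ is confused only for $S\in\{\emptyset,\{2\}\}$. The pair $b,c$ is confused only for $S\subseteq\{3,4\}$. Each pattern has probability $1/16$. Message $b$ then has error
\begin{equation*}
 \frac1{16}\left(\frac23+\frac12+3\cdot\frac12\right)=\frac16,
\end{equation*}
and by symmetry message $c$ has the same error $1/6$. Message $a$ has error
\begin{equation*}
 \frac1{16}\left(\frac23+\frac12+\frac12\right)=\frac5{48}.
\end{equation*}
Since $1/6<11/64$, the maximum error, and hence also the average error, is at most $11/64$. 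This gives $C^{\nf}_{4,11/64}=\log_23$ under both criteria.

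The main obstacle is choosing codewords whose confusion sets are spread evenly enough across patterns. The chosen code keeps the worst-case message error below $11/64$ with little margin, so I would double-check the enumeration of confusable patterns.
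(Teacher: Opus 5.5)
Your proposal is correct and follows the paper's proof essentially step for step. You evaluate $s_{4,2,1/2}(4)=53/64$ and $s_{4,2,1/2}(5)=29/40$ on the feedback side, invoke Theorem~\ref{thm:strict-gap} to exclude four messages without feedback, and exhibit a three-word classical code with pairwise distances $2,3,3$ and individual errors $1/6,1/6,5/48$. Your codebook $0000,0111,1011$ is the paper's codebook $0000,0011,1101$ up to a bitwise translation and a coordinate permutation, so the error enumeration is the same.
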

\begin{proof}
The feedback formula gives $s_{4,2,1/2}(4)=53/64$, so four messages are feasible at error $11/64$. For five messages it gives $s_{4,2,1/2}(5)=29/40<53/64$. Since the function is nonincreasing in the message size, five or more messages are impossible even with feedback.

Without feedback, Theorem~\ref{thm:strict-gap} excludes four messages. Three messages are achievable by the computational-basis codewords
\begin{equation}
 0000,\qquad0011,\qquad1101.
 \label{eq:three-message-code}
\end{equation}
Bob guesses uniformly among the codewords consistent with the unerased coordinates. The pairwise distances are $2,3,3$, and all three codewords become simultaneously indistinguishable only when all four coordinates are erased. The individual errors are therefore
\begin{equation}
 \eps_1=\eps_2=\frac12(p^2+p^3)-\frac13p^4,
 \qquad\eps_3=p^3-\frac13p^4.
 \label{eq:three-message-errors}
\end{equation}
To check the correction term, two pairwise confusions would contribute $1/2+1/2=1$ on the all-erased event, whereas the actual three-way error is $2/3$; one subtracts $p^4/3$. At $p=1/2$, the errors are $1/6,1/6,5/48$, respectively. Their maximum is $1/6<11/64$, proving achievability under either error criterion.
\end{proof}

\begin{remark}[Cases in which the no-feedback bound is attainable]\label{rem:attainable-cases}
The strict gap is not universal. If $M\leq d$, repeating one of $M$ orthogonal symbols attains the bound. If $M\geq d^n$, a classical code that contains every length-$n$ word at least once, with repetitions for the remaining messages, attains the average-success bound: each $k$-coordinate projection has exactly $d^k$ possible outputs. More generally, a classical code with $M=d^r$ messages attains the bound if every projection onto $r$ coordinates is bijective. Projections onto fewer coordinates are then surjective, and projections onto more coordinates are injective. Such maximum-distance-separable constructions are relevant when their parameters permit them~\cite{PPV10,Rosati24}; they do not exist for arbitrary alphabet size and blocklength.
\end{remark}

\section{A product quantum code that outperforms every classical code}\label{sec:tetra}
The failure of universal attainability does not mean that the quantum no-feedback problem reduces to its classical counterpart. We now exhibit a quantum advantage using four product-state codewords. The gain comes from nonorthogonal signal states, not from entanglement among the channel inputs.

\subsection{Tetrahedral states and the four-carrier encoding}
Our code combines three orthogonal binary signals with one nonorthogonal qubit signal. The latter is chosen from a tetrahedron on the Bloch sphere, so that each pair of its signal states has the same distinguishability. We first define these states, then explain the decoder for each kind of erasure pattern. For this four-message example, it is convenient to write $m=1+2x+y$ with $x,y\in\{0,1\}$; the pair $(x,y)$ is a binary representation of $m-1$, not a change to the message alphabet.

Let $\sigma_1,\sigma_2,\sigma_3$ be the Pauli matrices. For $x,y\in\{0,1\}$, define
\begin{equation}
 \boldsymbol r_{xy}\coloneqq\frac1{\sqrt3}
 \big((-1)^x,(-1)^y,(-1)^{x+y}\big),
 \qquad
 \tau_{xy}\coloneqq\frac12\left(I+\sum_{j=1}^3 r_{xy,j}\sigma_j\right).
 \label{eq:tetra-states}
\end{equation}
The four Bloch vectors are vertices of a regular tetrahedron. These are the qubit example of symmetric informationally complete signal states~\cite{RBSC04}. Direct calculation gives
\begin{equation}
 \tau_{xy}^2=\tau_{xy},\qquad
 \sum_{x,y}\tau_{xy}=2I,\qquad
 \Tr[\tau_{xy}\tau_{x'y'}]=\frac13
 \quad\text{if }(x,y)\neq(x',y').
 \label{eq:tetra-identities}
\end{equation}
The last expression is the squared overlap of the corresponding pure-state vectors.

Encode the message $(x,y)$ as
\begin{equation}
 \rho^{xy}_{A_1A_2A_3A_4}
 \coloneqq\proj x_{A_1}\otimes\proj y_{A_2}
 \otimes\proj{x\oplus y}_{A_3}\otimes\tau_{xy,A_4},
 \label{eq:tetra-code}
\end{equation}
where $\oplus$ denotes addition modulo two. Two of the first three classical bits determine the message. If only one of those bits and the tetrahedral carrier survive, the classical bit reduces the decoding problem to distinguishing two tetrahedral states. Figure~\ref{fig:tetra} illustrates these two situations.

\begin{figure}[t]
 \centering
 \input{figures/tetrahedral_code.tex}
 \caption{The product-state encoding in \eqref{eq:tetra-code}. The first three carriers contain the classical bits $x$, $y$, and $x\oplus y$; the fourth contains a tetrahedral state. Any two of the first three carriers identify the message. A classical carrier together with the fourth carrier leaves a binary pure-state discrimination problem. The Bloch-sphere picture is schematic; the exact vectors are given in \eqref{eq:tetra-states}. No channel-input entanglement is used.}
 \label{fig:tetra}
\end{figure}

For two equiprobable pure states $\phi$ and $\varphi$, optimizing the first POVM effect $0\leq T\leq I$ gives
\begin{equation}
 \max_T\frac12\big(\Tr[T\phi]+\Tr[(I-T)\varphi]\big)
 =\frac12\big(1+\Tr[(\phi-\varphi)_+]\big)
 =\frac12\left(1+\sqrt{1-\Tr[\phi\varphi]}\right).
 \label{eq:binary-pure-success}
\end{equation}
The last equality follows by diagonalizing the traceless rank-at-most-two operator $\phi-\varphi$: its nonzero eigenvalues are $\pm\sqrt{1-\Tr[\phi\varphi]}$. Applied to a tetrahedral pair, this gives
\begin{equation}
 h\coloneqq\frac12\left(1+\sqrt{\frac23}\right).
 \label{eq:tetra-binary-success}
\end{equation}

\begin{proposition}[Achievable tetrahedral performance]\label{prop:tetra}
The code \eqref{eq:tetra-code} has a decoder achieving
\begin{equation}
 P_{\mathrm{tet}}(p)
 =s_{4,2,p}(4)-\frac{3-\sqrt6}{2}\,p^2(1-p)^2.
 \label{eq:tetra-performance}
\end{equation}
The individual-message success probabilities can all be made equal to this value.
\end{proposition}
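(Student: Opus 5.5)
The plan is to let Bob first measure the orthogonal erasure sectors, which does not disturb the code states, and then apply a decoder tailored to each survivor set $S$. I will then compare the patternwise success with the patternwise dimension value $\min\{1,2^{|S|}/4\}$ that appears in $s_{4,2,p}(4)$. The claim is that the two agree on every pattern except the three two-survivor patterns consisting of one classical carrier together with the tetrahedral carrier. On those patterns there is a deficit of $1-h$. Each such pattern has probability $p^2(1-p)^2$, so the total deficit is $3(1-h)p^2(1-p)^2=\tfrac32\bigl(1-\sqrt{2/3}\bigr)p^2(1-p)^2=\tfrac{3-\sqrt6}{2}p^2(1-p)^2$, since $3\sqrt{2/3}=\sqrt6$. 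This gives \eqref{eq:tetra-performance}.

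The case analysis by $k=|S|$ proceeds as follows.

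For $k=0$, Bob guesses uniformly and succeeds with probability $1/4$, matching the bound.

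For $k=1$ the bound is $1/2$. If the survivor is a classical carrier ($x$, $y$, or $x\oplus y$), it leaves exactly two compatible messages, and a uniform guess between them succeeds with probability $1/2$. If only the tetrahedral carrier survives, Bob uses the POVM $(\tau_{xy}/2)_{xy}$, which is valid by $\sum_{x,y}\tau_{xy}=2I$ in \eqref{eq:tetra-identities}. Purity gives $\Tr[(\tau_{xy}/2)\tau_{xy}]=1/2$ for each message.

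For $k=2$ the bound is $1$. Any two of the three classical carriers determine $(x,y)$. If one classical carrier and the tetrahedral carrier survive, the bit restricts the message to two candidates. Bob then applies the Helstrom measurement for those two tetrahedral states, which succeeds with probability $h$ by \eqref{eq:binary-pure-success} together with $\Tr[\tau\tau']=1/3$. There are exactly three such patterns.

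For $k\geq3$, at least two classical carriers survive, so decoding is perfect.

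The remaining claim is that every individual-message success probability equals $P_{\mathrm{tet}}(p)$. I will check this pattern by pattern, since the erasure pattern is independent of the message.

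For the uniform guesses, each message succeeds with probability exactly $1/4$ or $1/2$.

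For the tetrahedral POVM, each message gets $1/2$.

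For the binary pure-state discrimination, the two states have equal priors, and the Helstrom measurement can be chosen symmetric under the unitary that swaps the two pure states. It therefore gives both messages success $h$.

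Summing with weights $w_S$ yields the same value for every message.

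I expect the main obstacle to be bookkeeping rather than any deep step. In particular, I must verify the symmetric-error claim for the binary Helstrom measurement, for which I would use the explicit eigenbasis of $\phi-\varphi$. I must also confirm that the tetrahedral-only POVM is complete, which follows directly from \eqref{eq:tetra-identities}.
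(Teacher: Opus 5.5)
Your proposal is correct and follows the paper's proof essentially step for step: you read the erasure sectors, run the same case analysis by survivor count, obtain the deficit $3(1-h)p^2(1-p)^2$ from the three patterns pairing one classical carrier with the tetrahedral carrier, and use message symmetry of each patternwise decoder. The only cosmetic difference is how you get equal Helstrom success: you use the swap unitary, which exists on the two-dimensional span once the phases are chosen so that the overlap is real, whereas the paper reads it off from the eigenvalues of $\phi-\varphi$.
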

\begin{proof}
Bob first reads the erasure pattern, which does not disturb the surviving systems. If no carrier survives, he guesses uniformly and succeeds with probability $1/4$. If exactly one of the first three carriers survives, it leaves two equiprobable messages, so uniform guessing succeeds with probability $1/2$. If only the fourth survives, the POVM $(\tau_{xy}/2)_{x,y}$ is complete by \eqref{eq:tetra-identities}, and its probability of the correct outcome is $1/2$ for every message.

If at least three carriers survive, at least two of the first three survive, so decoding is perfect. Among the six two-survivor patterns, three involve only the first three carriers and are also decoded perfectly. Each of the other three patterns contains the tetrahedral carrier and one classical carrier. Bob first reads the classical bit and then makes the optimal binary measurement for the two remaining tetrahedral states. Its average success is $h$ in \eqref{eq:tetra-binary-success}.

Every two-survivor pattern has weight $p^2(1-p)^2$. Thus the only loss from the dimension-counting bound is $3(1-h)p^2(1-p)^2$. Since $3(1-h)=(3-\sqrt6)/2$, this gives \eqref{eq:tetra-performance}. For two equiprobable pure states, the positive-eigenspace binary measurement has equal success probability on the two states; this also follows from the two eigenvalues used in \eqref{eq:binary-pure-success}. All other decoding rules just described are message symmetric. Therefore every individual message has the stated success probability.
\end{proof}

\subsection{The exact classical optimum for the same parameters}
We compare this code to the optimal code for four uses of a classical binary erasure channel and four equiprobable messages, without feedback. Randomized classical encoders do not improve the optimum average success. For a fixed decoder the success is linear in each message's input distribution, so one can select a deterministic input word for every message without decreasing success; optimizing the decoder afterward cannot reduce it.

\begin{theorem}[Optimal four-message classical binary erasure code]\label{thm:classical-four}
For four messages and four uses of the binary erasure channel,
\begin{equation}
 P_\cl^*(p)=s_{4,2,p}(4)-\frac12p^2(1-p)^2.
 \label{eq:classical-four-optimum}
\end{equation}
The formula holds for optimal average success and for the best worst-message success. It is attained by the code
\begin{equation}
 (x,y)\longmapsto(x,y,x\oplus y,x).
 \label{eq:classical-four-code}
\end{equation}
\end{theorem}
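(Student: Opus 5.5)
The plan is to reduce the problem to counting distinct projections of a deterministic code, and then to bound how many coordinate pairs can fail to separate all four messages. By the remark preceding the statement, randomized encoders do not help. It therefore suffices to consider four deterministic codewords $c_1,\ldots,c_4\in\{0,1\}^4$, viewed as a $4\times4$ binary matrix whose columns are the coordinates.

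For a fixed survivor set $S$, Bob sees the projection $c_m|_S$ under a uniform prior. The MAP decoder picks one message from each distinct projection, so its conditional success is $N_S/4$, where $N_S$ is the number of distinct words among $c_1|_S,\ldots,c_4|_S$. Hence
\begin{equation*}
 \succp=\sum_{S}w_S\frac{N_S}{4}
 =s_{4,2,p}(4)-\sum_S w_S\frac{\min\{4,2^{|S|}\}-N_S}{4},
\end{equation*}
and every summand in the deficit is nonnegative. All six two-element sets have weight $p^2(1-p)^2$. Dropping the deficits of the other sets, the converse reduces to a purely combinatorial claim: $\Delta\coloneqq\sum_{|S|=2}(4-N_S)\geq2$ for every four binary words of length four.

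To prove this claim, I would encode each column as a $\pm1$ vector $v_i\in\{\pm1\}^4$. The projection onto $\{i,j\}$ is bijective exactly when all four sign patterns occur. That happens if and only if $v_i$, $v_j$ and $v_i\circ v_j$ are each balanced (sum zero), i.e.\ $v_i,v_j\perp\one$ and $v_i\perp v_j$.

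First suppose some column is unbalanced. Then every pair containing it has at most three distinct projections, so the three such pairs already give $\Delta\geq3$.

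Otherwise all columns are balanced. The balanced $\pm1$ vectors of length four are exactly $\pm u_1,\pm u_2,\pm u_3$, where the $u_a$ are the three mutually orthogonal nonconstant Hadamard rows. Two balanced columns give a bijective pair if and only if they lie in different classes $\{\pm u_a\}$. If they lie in the same class, they are equal or complementary, so $N_S=2$ and the deficit is $2$. With four columns and only three classes, some two columns share a class, so $\Delta\geq2$. This pigeonhole step is the heart of the argument. The main point to check carefully is the characterization of bijectivity via orthogonality and the enumeration of the six balanced vectors; both are finite verifications.

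For achievability, the code \eqref{eq:classical-four-code} has columns in the classes of $x$, $y$, $x\oplus y$ and $x$. Among the two-element sets, only $\{1,4\}$ fails to be bijective, and its deficit is exactly $2$. Every single coordinate is balanced, so $N_S=2=2^{|S|}$ there. Every set with at least three coordinates contains two distinct-class columns among the first three, so $N_S=4$. The empty set trivially has no deficit. The loss is therefore exactly $\frac24p^2(1-p)^2$, which gives \eqref{eq:classical-four-optimum}.

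For the worst-message claim, the converse follows because worst-message success is at most average success. For attainment, the code is linear, and the decoder that guesses uniformly among consistent codewords is invariant under translation by codewords. Hence every message has the same success probability, equal to the average.
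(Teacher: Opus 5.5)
Your proposal is correct and follows essentially the same route as the paper. Both reduce to deterministic codewords with conditional success $N_S/4$ and then lower-bound the two-survivor deficit by two cases. If some column is unbalanced, each of the three pairs containing it loses at least one output. If all columns are balanced, pigeonhole over the three balanced partitions (your classes $\{\pm u_a\}$) gives a pair with only two outputs. The linear code $(x,y,x\oplus y,x)$ attains the bound with equal per-message success. Your $\pm1$/orthogonality characterization of bijective pairs is only a more explicit packaging of the paper's partition argument; the converse needs just that two same-class columns give $N_S=2$.
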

\begin{proof}
Fix four binary words, allowing repetitions. For a survivor set $S$, let $N_S$ be the number of distinct projected words on $S$. Every distinct output identifies a nonempty class of compatible messages. The optimal average success conditioned on $S$ is $N_S/4$, by the same class-counting argument used in \eqref{eq:digit-success}.

For zero, one, three, and four survivors, the sums of $N_S$ over all subsets of the specified size are bounded by $1$, $8$, $16$, and $4$, respectively. These are the direct dimension bounds. For two survivors, we claim
\begin{equation}
 \sum_{S:\,|S|=2}N_S\leq22.
 \label{eq:classical-pair-count}
\end{equation}
Consider how each coordinate partitions the four message labels by its bit value. If one coordinate is unbalanced, its partition has sizes $0,4$ or $1,3$, up to exchanging the bit labels. Each of the three two-coordinate projections containing that coordinate then has at most three distinct outputs: in the $1,3$ case, one class contributes at most one output and the other at most two. The remaining three projections have at most four outputs each. Their total is at most $3\cdot3+3\cdot4=21$.

Otherwise every coordinate defines a balanced $2,2$ partition. Up to exchanging the bit labels, there are exactly three such partitions of four labels. Four coordinates therefore include two that define the same partition. Their joint projection has only two outputs; each of the other five projections has at most four. This proves \eqref{eq:classical-pair-count}.

The code \eqref{eq:classical-four-code} attains all these coefficient bounds simultaneously. The pair of repeated $x$ coordinates has two outputs, and each of the other five pairs has four. Each single coordinate has two outputs, every three-coordinate projection identifies the message, and the full code has four distinct words. Hence its projection-count sums are
\begin{equation}
 \left(\sum_{|S|=k}N_S\right)_{k=0}^4=(1,8,22,16,4).
 \label{eq:classical-coefficients}
\end{equation}
For two survivors, the dimension-counting value would have total $24$ rather than $22$. Dividing the deficit by four messages and multiplying by the pattern weight $p^2(1-p)^2$ gives precisely the loss in \eqref{eq:classical-four-optimum}.

Finally, the code is linear in $(x,y)$. For each erasure pattern, every nonempty compatible-message class is a coset of the same kernel and therefore has the same size. Uniform guessing gives the same success probability for every message. Thus the optimal average value is achieved also under the worst-message criterion.
\end{proof}

\subsection{The three-way separation}
We can now compare the optimal classical code, the explicit quantum product-state code, and the optimal feedback-assisted protocol within a single expression. The quantum no-feedback optimum is not evaluated, but the preceding upper and lower bounds place it strictly between the classical and feedback-assisted values.

The strict inequality $\sqrt6>2$ implies, for every $0<p<1$,
\begin{equation}
 P_\cl^*(p)<P_{\mathrm{tet}}(p)
 \leq\succp^{*,\nf}(4,4)<\succp^{*,\fb}(4,4).
 \label{eq:all-p-hierarchy}
\end{equation}
The final strict inequality is Theorem~\ref{thm:strict-gap}. Equation~\eqref{eq:intro-hierarchy} follows by setting $p=1/2$.

The tetrahedral carrier answers any one of three possible binary discrimination questions reasonably well. Which question is relevant is determined by the surviving classical coordinate. By contrast, the fourth classical bit in \eqref{eq:classical-four-code} must repeat one of only three balanced partitions, leading to the projection deficit in \eqref{eq:classical-pair-count}. This explains how nonorthogonal quantum signals improve performance without requiring entangled codewords.

Figure~\ref{fig:gap-plot} plots the gaps from the exact feedback value. It distinguishes the performance of the explicit tetrahedral code from the unknown optimum over all quantum codes without feedback.

\begin{figure}[t]
 \centering
 \input{figures/performance_gap.tex}
 \caption{Success-probability deficits from the exact feedback value for four qubit erasure-channel uses and four messages. The classical curve is optimal without feedback; the tetrahedral curve is the deficit of the explicit product-state code, not a claimed optimum over quantum codes. For $0<p<1$, the optimal quantum no-feedback deficit is strictly positive and no larger than the tetrahedral curve. The zero curve is achieved with feedback.}
 \label{fig:gap-plot}
\end{figure}

\section{Asymptotic consequences of the exact formula}\label{sec:asymptotic}
The exact feedback formula supplies converses for both communication models. Classical codes without feedback provide matching asymptotic achievability in several regimes, even though they need not attain the finite-blocklength formula. We first derive a bounded-remainder normal approximation with feedback, and then show that it remains valid without feedback. Next we determine the common strong-converse exponent, which concerns rates above capacity and the decay of success probability. Finally, we study error exponents below capacity: the feedback exponent is exact at every positive subcapacity rate, while the no-feedback exponent is determined above a critical rate and bounded below it.

The asymptotic capacity and the feedback-assisted strong-converse property were already established in~\cite{DQSW19,DKQSW23}; the unassisted quantum second-order expansion was established in~\cite[Theorem~13]{WFT19}. The classical constructions used below are specializations of established erasure-channel random coding~\cite{PPV10} and error-exponent methods~\cite{Gallager65}. We include proofs to specify the stronger remainder estimates, the applicable error criteria, and the rate ranges where the asymptotic comparisons are exact.

\subsection{A normal approximation with a bounded remainder}
The exact error in \eqref{eq:real-error} is a smoothed lower tail of the number of survivors. We first bound the difference from the binomial cumulative distribution function, then approximate that distribution by a Gaussian, and finally invert the resulting estimate. We include each step because an $O(n^{-1/2})$ error in a distribution function becomes an $O(1)$ error in the message-size threshold.

Let $\Phi$ and $\varphi$ be the cumulative distribution function and density of a standard normal random variable, respectively. Throughout this subsection, $p$ and $\eps$ are fixed in $(0,1)$, and $d\geq2$ is fixed.

\begin{corollary}[Normal approximation with a uniform bounded remainder]\label{cor:normal}
For every positive integer $n$,
\begin{equation}
 C_{n,\eps}^{\fb}
 =n(1-p)\log_2d
 +\sqrt{np(1-p)}\,\log_2d\,\Phi^{-1}(\eps)+O(1).
 \label{eq:normal-approximation}
\end{equation}
Precisely, there is a finite constant $A_{p,d,\eps}$, independent of $n$, such that
\begin{equation}
 \left|C_{n,\eps}^{\fb}-n(1-p)\log_2d
 -\sqrt{np(1-p)}\,\log_2d\,\Phi^{-1}(\eps)\right|
 \leq A_{p,d,\eps}
 \quad\text{for all }n\in\N.
 \label{eq:normal-uniform-remainder}
\end{equation}
\end{corollary}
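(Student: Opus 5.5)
We would prove \eqref{eq:normal-uniform-remainder} by working directly with the exact characterization in Corollary~\ref{cor:capacity}. By that corollary and the continuity and monotonicity of $e(r)$ in \eqref{eq:real-error}, $M_\fb^*(n,\eps)=\lfloor R_\eps\rfloor$, where $R_\eps$ is the largest $r$ with $e(r)\le\eps$. Since $\log_2\lfloor R\rfloor$ and $\log_2 R$ differ by at most one once $R\geq2$, and $R_\eps\geq1$ always, it suffices to bound $\log_2 R_\eps$. Write $r=d^{t}$ with $t$ real, so that $\log_2 r=t\log_2 d$. The goal becomes
\[
t_n^*=n(1-p)+\sqrt{np(1-p)}\,\Phi^{-1}(\eps)+O(1),
\]
where $t_n^*$ solves $e(d^{t})=\eps$.

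The first step is a sandwich between $e(d^t)$ and the binomial distribution function $\Pr(K<t)$. We have
\[
e(d^t)=\E\big[(1-d^{K-t})_+\big]\le\Pr(K<t).
\]
In the other direction, for any fixed integer $c\geq1$, restricting to $K\le t-c$ gives
\[
e(d^t)\geq(1-d^{-c})\Pr(K\le t-c).
\]
Hence $e$ is squeezed between $\Pr(K<t)$ and a slightly shrunk, shifted version of it, with the shift $c$ independent of $n$.

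The second step is to apply the Berry--Esseen theorem to $K$, a sum of i.i.d.\ Bernoulli$(1-p)$ variables with $p\in(0,1)$ fixed. This gives
\[
\Big|\Pr(K\le x)-\Phi\Big(\tfrac{x-n(1-p)}{\sqrt{np(1-p)}}\Big)\Big|\le\frac{C_p}{\sqrt n}
\]
uniformly in $x$. The same bound holds for $\Pr(K<x)$.

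The third step inverts these bounds, and we expect the lower-bound direction to be the main obstacle. For the upper bound on $t_n^*$, set $t=n(1-p)+\sqrt{np(1-p)}\,\big(\Phi^{-1}(\eps)+a/\sqrt n\big)$ with a suitable constant $a$. Then $\Pr(K\le t-c)\geq\eps+\delta_n$, and choosing $c$ so that $d^{-c}$ is small makes $e(d^{t+c})>\eps$. The difficulty is that the factor $(1-d^{-c})$ is a multiplicative loss of order $d^{-c}$, not $O(n^{-1/2})$. To avoid this, we would let $c$ grow like $\tfrac12\log_d n$; the resulting shift costs $O(\log n)$, which is too much. So we would instead refine the lower bound by summing exactly:
\[
e(d^t)\geq\Pr(K<t)-\sum_{j\ge0}d^{-j-\theta}\,\Pr(K=\lceil t\rceil-1-j),
\]
where $\theta$ is the fractional offset. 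Each point probability is $O(n^{-1/2})$ uniformly, by the local limit theorem or the Berry--Esseen bound for differences. The geometric sum is therefore $O(n^{-1/2})$, giving
\[
|e(d^t)-\Pr(K<t)|\le C'/\sqrt n
\]
uniformly in $t$.

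Combining the three steps, $|e(d^t)-\Phi(z)|\le C''/\sqrt n$, where $z$ is the normalized argument. Because $\Phi'$ is bounded below near $\Phi^{-1}(\eps)$ and $\eps\in(0,1)$, inversion perturbs $z$ by $O(n^{-1/2})$, which perturbs $t$ by $O(1)$. Small $n$, where the perturbation would leave the neighborhood, are absorbed into the constant: for each of finitely many $n$ the difference is finite. Multiplying by $\log_2 d$ and accounting for the floor yields $A_{p,d,\eps}$.
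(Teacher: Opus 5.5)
Your proposal is correct and follows essentially the same route as the paper: the exact identity $\Pr(K<t)-e(d^t)=\sum_{k<t}\Pr(K=k)\,d^{k-t}\le\frac{d}{d-1}\max_k\Pr(K=k)$, a uniform $O(n^{-1/2})$ bound on the largest binomial atom via Berry--Esseen, inversion near $\Phi^{-1}(\eps)$ using a positive lower bound on the normal density, finitely many small $n$ absorbed into the constant, and at most one bit lost to the floor. The fixed-$c$ lower bound you consider first is superseded by your refined geometric-sum estimate and can simply be dropped.
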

\begin{proof}
Set $q\coloneqq1-p$, and write $K=\sum_{i=1}^n X_i$, where the $X_i$ are independent Bernoulli random variables with success probability $q$. Thus $\E[K]=nq$ and $\operatorname{Var}(K)=npq$. For a real threshold $r\geq0$, the corresponding real message size is $d^r$. Only terms with $k<r$ contribute to $e(d^r)$, and hence
\begin{equation}
 \begin{aligned}
 0\leq\Pr(K<r)-e(d^r)
 &=\sum_{\substack{0\leq k\leq n\\k<r}}\Pr(K=k)d^{k-r}\\
 &\leq\frac{d}{d-1}\max_{0\leq k\leq n}\Pr(K=k).
 \end{aligned}
 \label{eq:cdf-smoothing-error}
\end{equation}
To justify the last inequality, set $j(r)\coloneqq\lceil r\rceil-1$, the largest integer strictly smaller than $r$. If the sum is nonempty, replacing each probability by its largest value and extending the geometric sum to all integers $k\leq j(r)$ can only increase it. The resulting geometric factor is
\begin{equation}
 \sum_{k=-\infty}^{j(r)}d^{k-r}
 =\frac{d^{j(r)-r}}{1-d^{-1}}
 \leq\frac{d}{d-1}.
 \label{eq:geometric-smoothing-factor}
\end{equation}
An empty sum is zero. The strict threshold $k<r$ is important when $r$ is an integer: the term $k=r$ contributes zero to the error and is excluded from both sides of the difference in \eqref{eq:cdf-smoothing-error}.

We next state the probability estimate used to control both the cumulative distribution and its largest atom. The Berry--Esseen theorem asserts that there is a universal finite constant $C_{\mathrm{BE}}$ such that, for independent identically distributed real random variables $Y_i$ with mean $\mu$, variance $\sigma^2>0$, and third absolute centered moment $\tau<\infty$,
\begin{equation}
 \sup_{z\in\mathbb R}
 \left|\Pr\!\left(\frac{\sum_{i=1}^nY_i-n\mu}{\sigma\sqrt n}\leq z\right)
 -\Phi(z)\right|
 \leq\frac{C_{\mathrm{BE}}\tau}{\sigma^3\sqrt n}.
 \label{eq:berry-esseen-general}
\end{equation}
We use only the existence of such a constant, not its optimal value; see~\cite{Berry41,Esseen42} and~\cite[Theorem~44]{PPV10}. For a Bernoulli variable $X_i$, the two possible centered values are $p$ and $-q$, with probabilities $q$ and $p$. Consequently,
\begin{equation}
 \E[|X_i-q|^3]=qp^3+pq^3=pq(p^2+q^2).
 \label{eq:bernoulli-third-moment}
\end{equation}
Define $a_p\coloneqq C_{\mathrm{BE}}(p^2+q^2)/\sqrt{pq}$. Substitution in \eqref{eq:berry-esseen-general} gives
\begin{equation}
 \left|\Pr(K\leq x)
 -\Phi\!\left(\frac{x-nq}{\sqrt{npq}}\right)\right|
 \leq\frac{a_p}{\sqrt n}
 \quad\text{for every real }x.
 \label{eq:berry-esseen-binomial}
\end{equation}

This uniform estimate also bounds the largest point probability, so a separate use of Stirling's formula is unnecessary. Set $z_k\coloneqq(k-nq)/\sqrt{npq}$. Subtracting the cumulative probabilities at adjacent integers and using $\sup_z\varphi(z)=1/\sqrt{2\pi}$ yields
\begin{equation}
 \begin{aligned}
 \Pr(K=k)
 &\leq\Phi(z_k)-\Phi(z_{k-1})+\frac{2a_p}{\sqrt n}\\
 &\leq\frac{u_p}{\sqrt n},
 \qquad
 u_p\coloneqq2a_p+\frac1{\sqrt{2\pi pq}}.
 \end{aligned}
 \label{eq:binomial-largest-atom}
\end{equation}
The same argument covers $k=0$ and $k=n$, since the cumulative distribution is defined outside the support of $K$ as well.

For the strict inequality in \eqref{eq:cdf-smoothing-error}, use $\Pr(K<r)=\Pr(K\leq j(r))$. The distance from $j(r)$ to $r$ is at most one. Applying \eqref{eq:berry-esseen-binomial} at $j(r)$ and bounding the change in $\Phi$ over this one-step displacement gives
\begin{equation}
 \left|\Pr(K<r)
 -\Phi\!\left(\frac{r-nq}{\sqrt{npq}}\right)\right|
 \leq\frac{v_p}{\sqrt n},
 \qquad
 v_p\coloneqq a_p+\frac1{\sqrt{2\pi pq}}.
 \label{eq:binomial-normal}
\end{equation}
Thus the discrete threshold causes an error of the same order as the Berry--Esseen estimate, uniformly in the real threshold. Combining \eqref{eq:cdf-smoothing-error}, \eqref{eq:binomial-largest-atom}, and \eqref{eq:binomial-normal}, we obtain
\begin{equation}
 \left|e(d^r)
 -\Phi\!\left(\frac{r-nq}{\sqrt{npq}}\right)\right|
 \leq\frac{c_{p,d}}{\sqrt n},
 \qquad
 c_{p,d}\coloneqq v_p+\frac{d}{d-1}u_p.
 \label{eq:smoothed-error-normal}
\end{equation}

It remains to invert this estimate. Set $r_n\coloneqq\log_dR_\eps$ and $z_\eps\coloneqq\Phi^{-1}(\eps)$. By the definition of $R_\eps$ and continuity of $e$, we have $e(R_\eps)=\eps$. Equation~\eqref{eq:smoothed-error-normal} therefore implies
\begin{equation}
 \left|\Phi\!\left(\frac{r_n-nq}{\sqrt{npq}}\right)-\eps\right|
 \leq\frac{c_{p,d}}{\sqrt n}.
 \label{eq:quantile-step}
\end{equation}
The two numbers $\Phi(z_\eps)-\Phi(z_\eps-1)$ and $\Phi(z_\eps+1)-\Phi(z_\eps)$ are strictly positive. Choose $n_0$ so that the right side of \eqref{eq:quantile-step} is smaller than both whenever $n\geq n_0$. Monotonicity of $\Phi$ then places $(r_n-nq)/\sqrt{npq}$ in $[z_\eps-1,z_\eps+1]$. On this compact interval the normal density has the positive minimum
\begin{equation}
 c_\eps\coloneqq\min\!\left\{\varphi(z):|z-z_\eps|\leq1\right\}>0.
 \label{eq:quantile-density-lower}
\end{equation}
The mean value theorem applied to \eqref{eq:quantile-step} now gives the explicit threshold bound
\begin{equation}
 \left|r_n-nq-\sqrt{npq}\,z_\eps\right|
 \leq\frac{c_{p,d}\sqrt{pq}}{c_\eps}
 \quad(n\geq n_0).
 \label{eq:real-message-normal}
\end{equation}
Multiplication by $\log_2d$ bounds the remainder in $\log_2R_\eps$. Finally, $R_\eps\geq1$ and $\lfloor R_\eps\rfloor\geq R_\eps/2$, so replacing $R_\eps$ by its integer part changes the binary logarithm by at most one bit. Corollary~\ref{cor:capacity} therefore gives a constant remainder bound for $n\geq n_0$. Enlarging that constant to cover the finitely many integers $1\leq n<n_0$ proves \eqref{eq:normal-uniform-remainder} for every positive blocklength.
\end{proof}

The second-order coefficient in \eqref{eq:normal-approximation} is $\sqrt{np(1-p)}\,\log_2d$ multiplied by $\Phi^{-1}(\eps)$. The uniform bound is an all-blocklength statement, but it does not assert that the remainder tends to zero or converges to a third-order constant. The discreteness of the survivor count and the integer message size can leave bounded oscillations.

Figure~\ref{fig:normal-rate} compares the exact rate $C_{n,0.1}^{\fb}/n$ with the first two terms of \eqref{eq:normal-approximation}, divided by $n$, for $d=2$, $p=1/2$, and $\eps=0.1$. The approximation captures the leading $n^{-1/2}$ backoff from capacity. The $O(1)$ remainder in total bits becomes an $O(n^{-1})$ remainder in this rate plot. The two-term curve is an approximation, not a lower or upper bound asserted for every $n$.

\begin{figure}[t]
 \centering
 \input{figures/normal_rate.tex}
 \caption{Normal approximation in Corollary~\ref{cor:normal}, for $d=2$, $p=1/2$, and $\eps=0.1$. The solid curve is the exact feedback-assisted rate from Corollary~\ref{cor:capacity}. The dashed curve is $\frac12+\Phi^{-1}(0.1)/(2\sqrt n)$, with no fitted constant. The horizontal dotted line is the limiting capacity $1/2$ bit per use. The blocklength axis is logarithmic. These finite-length exact values are feedback-assisted; the plot does not claim to evaluate the no-feedback optimum.}
 \label{fig:normal-rate}
\end{figure}

\FloatBarrier
\subsection{Classical codes without feedback}\label{subsec:nf-asymptotics}
The feedback formula is also a converse for codes without feedback, but a matching asymptotic lower bound still requires a no-feedback construction. For this purpose, ordinary random classical codewords suffice. We first count the different projected words that a receiver can see, and then control the collisions between such projections. These are the classical erasure-channel random-coding arguments underlying~\cite[Theorems~36--37]{PPV10}; we give their $d$-ary forms to make their application to quantum channels explicit.

Let $\succp^{*,\cl,\nf}(n,M)$ denote the optimal average success probability when all channel inputs are computational-basis states. Equivalently, this is the optimum for a classical $d$-ary erasure channel without feedback. Allowing arbitrary quantum codewords gives $\succp^{*,\nf}(n,M)$ as before.

\begin{proposition}[Classical random-coding bounds]\label{prop:random-classical}
For every positive integer $M$, define
\begin{equation}
 a_{n,d,p}(M)\coloneqq
 \sum_{k=0}^n b_k\frac{d^k}{M}\left[1-(1-d^{-k})^M\right].
 \label{eq:occupancy-success}
\end{equation}
Then
\begin{equation}
 (1-e^{-1})s_{n,d,p}(M)
 \leq a_{n,d,p}(M)
 \leq\succp^{*,\cl,\nf}(n,M)
 \leq\succp^{*,\nf}(n,M)
 \leq s_{n,d,p}(M).
 \label{eq:constant-success-bracket}
\end{equation}
There also exists a deterministic classical code with average error at most
\begin{equation}
 U_n(M)\coloneqq
 \sum_{k=0}^n b_k\min\!\left\{1,\frac{M-1}{2d^k}\right\}.
 \label{eq:random-error-bound}
\end{equation}
No shared randomness is needed to implement either code.
\end{proposition}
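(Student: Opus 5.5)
The chain \eqref{eq:constant-success-bracket} has four links, and I would treat them separately. The two rightmost are already available. The inequality $\succp^{*,\cl,\nf}\leq\succp^{*,\nf}$ holds because computational-basis codewords are special quantum codewords. The bound $\succp^{*,\nf}\leq s_{n,d,p}(M)$ follows from \eqref{eq:nf-capacity-bound}, or directly from Theorem~\ref{thm:feedback}, since a no-feedback code is a feedback-assisted code that ignores the feedback.

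For the middle inequality $a_{n,d,p}(M)\leq\succp^{*,\cl,\nf}(n,M)$, I would use i.i.d.\ uniform random codewords in $[d]^n$. The decoder reads the erasure pattern $S$ and guesses uniformly among the messages whose projected word on $S$ matches the received word. As in \eqref{eq:digit-success}, for fixed codewords the conditional success given $S$ is $N_S/M$, where $N_S$ is the number of distinct projections onto $S$. For $|S|=k$ the projections are i.i.d.\ uniform on $d^k$ values, so the occupancy count gives $\E[N_S]=d^k[1-(1-d^{-k})^M]$. Averaging over $S$, using independence of the pattern from the codebook, gives expected success $a_{n,d,p}(M)$. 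Some deterministic codebook does at least as well, so no shared randomness is needed.

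For the leftmost inequality, I would compare term by term, showing $x[1-(1-1/x)^M]\geq(1-e^{-1})\min\{x,M\}$ with $x=d^k$. If $M\leq x$, use $(1-1/x)^M\leq e^{-M/x}$ and concavity of $t\mapsto1-e^{-t}$ on $[0,1]$, which gives $1-e^{-t}\geq(1-e^{-1})t$ with $t=M/x$. If $M>x$, then $(1-1/x)^M\leq e^{-1}$ directly. Dividing by $M$ and summing with the weights $b_k$ gives the claim. This is the only step requiring care, and the elementary concavity bound should dispose of it.

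For $U_n(M)$ I would use the same random ensemble and decoder. Given $S$ with $|S|=k$, the true message is wrong only if another codeword collides with it on $S$. With $J$ colliding others, the conditional error is $J/(J+1)\leq J/2$ when $J\geq1$, and it is $0$ otherwise. So the error is at most $\min\{1,J/2\}$, and by Jensen (concavity of $\min$) it is at most $\min\{1,\E J/2\}=\min\{1,(M-1)/(2d^k)\}$. Averaging over $S$ and messages and selecting a good deterministic codebook yields average error at most $U_n(M)$.
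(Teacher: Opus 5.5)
Your proposal is correct and follows the paper's proof step for step. You use the same i.i.d.\ uniform classical ensemble with uniform guessing among compatible messages, the occupancy count for $a_{n,d,p}(M)$, the concavity (chord) bound $1-e^{-t}\geq(1-e^{-1})\min\{1,t\}$, and the collision count with $Z/(1+Z)\leq\min\{1,Z/2\}$ followed by derandomization. Your appeal to Jensen for $\min\{1,\cdot\}$ is exactly the paper's observation that the expected error is at most both $1$ and $\E[Z]/2$.
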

\begin{proof}
Draw $M$ words of length $n$ independently and uniformly from $\{0,\ldots,d-1\}^n$, allowing repetitions. Fix a survivor set of size $k$ and set $D=d^k$. Each of the $D$ possible projected words is absent with probability $(1-1/D)^M$. The expected number of distinct projected words is therefore $D[1-(1-1/D)^M]$. For a fixed codebook, the optimal average success on this pattern is the number of distinct projected words divided by $M$: each nonempty compatible-message class contributes $1/M$. Averaging over patterns gives \eqref{eq:occupancy-success}. At least one deterministic codebook attains this ensemble-average value.

To obtain the first inequality in \eqref{eq:constant-success-bracket}, set $x=M/D>0$ and use
\begin{equation}
 1-(1-1/D)^M\geq1-e^{-x}
 \geq(1-e^{-1})\min\{1,x\}.
 \label{eq:occupancy-elementary}
\end{equation}
For $0\leq x\leq1$, the last inequality is the chord bound for the concave function $1-e^{-x}$ between zero and one. For $x\geq1$, it follows from monotonicity. Multiplication by $D/M$ and averaging gives the claim. The remaining comparisons follow from inclusion of classical codes among quantum codes and the feedback converse.

For the error bound, fix the transmitted message and a survivor set of size $k$. Let $Z$ count the other codewords with the same surviving coordinates. For the random ensemble, $\E[Z]=(M-1)d^{-k}$. Uniform guessing among compatible messages gives conditional error $Z/(1+Z)$. For every nonnegative integer $Z$,
\begin{equation}
 \frac{Z}{1+Z}\leq\min\{1,Z/2\}.
 \label{eq:collision-error}
\end{equation}
The expected error is at most both one and $\E[Z]/2$. Averaging over patterns proves \eqref{eq:random-error-bound}, and some deterministic codebook has error no greater than this average. The codebooks establishing the two assertions need not be the same.
\end{proof}

The constant-factor comparison in \eqref{eq:constant-success-bracket} will determine the strong-converse exponent. It does not determine the error exponent below capacity, since subtracting a constant-factor success bound from one destroys its precision near unit success. We use the separate error bound \eqref{eq:random-error-bound} for that question.

For the normal approximation, it is useful to obtain a maximum-error statement without an initial shared seed. Ordinary removal of the worst messages can multiply the entire error bound by a constant, which would change the Gaussian quantile. Instead, we bound the low-survivor probability uniformly for every message and remove messages only to control collisions on the remaining patterns.

\begin{lemma}[A thresholded maximum-error bound]\label{lem:threshold-expurgation}
For every positive integer $M$ and every real threshold $r$, there is a deterministic classical no-feedback code with $M$ messages and maximum error at most
\begin{equation}
 \min\!\left\{1,\Pr(K<r)
 +(2M-1)\sum_{\substack{0\leq k\leq n\\k\geq r}}b_kd^{-k}\right\}.
 \label{eq:threshold-max-error}
\end{equation}
\end{lemma}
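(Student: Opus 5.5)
Draw $2M-1$ codewords independently and uniformly from $\{0,\ldots,d-1\}^n$, bound the collision error of each one, and then remove the worst $M-1$ codewords (the expurgation step). If $\Pr(K<r)$ plus the collision term is at least one, the bound is trivially satisfied by any code, so assume otherwise. For a codebook of size $M'$, use the suboptimal decoder that guesses uniformly on patterns with fewer than $r$ survivors and guesses uniformly among compatible messages on patterns with at least $r$ survivors. Whatever the codebook, the error of message $m$ is at most
\[
 \Pr(K<r)+\sum_{S:\,|S|\geq r}w_S\,\one\{\exists m'\neq m:\ x_{m'}|_S=x_m|_S\}.
\]
Here the first term is bounded by one on low-survivor patterns, uniformly in $m$. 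This bounding is the key point: the low-survivor contribution never enters an averaging or expurgation argument, so it is not multiplied by any constant.

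Now take $M'=2M-1$ random codewords. For a fixed message $m$ and a fixed pattern $S$ with $|S|=k$, the union bound gives the collision probability as at most $(M'-1)d^{-k}$. Write $g_m$ for the collision part of message $m$'s error bound, that is, the sum over $|S|\geq r$ above. Its expectation is at most $(2M-2)\sum_{k\geq r}b_kd^{-k}$, so the average of $g_m$ over all $2M-1$ messages is also at most this quantity in expectation. Fix a deterministic codebook whose average of $g_m$ is no larger than this expectation.

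By Markov's inequality, at most $M-1$ messages have $g_m$ exceeding $\frac{2M-1}{M}$ times the average. Keep $M$ messages with the smallest $g_m$. Deleting codewords only removes possible collisions, so each $g_m$ can only decrease, and the decoder restricted to the surviving codebook stays valid. Every retained message then has
\[
 g_m\leq\frac{2M-1}{M}(2M-2)\sum_{k\geq r}b_kd^{-k}\leq (2M-1)\sum_{k\geq r}b_kd^{-k},
\]
since $2(M-1)\leq M$ fails in general, so the constant needs checking. The cleaner route is to order the messages by $g_m$: the sum of the $M-1$ largest values is at most the total $\sum_m g_m$. Because $g_m\geq0$, the $M$-th smallest value is at most $\frac{1}{M}\sum_m g_m\leq\frac{(2M-1)(2M-2)}{M}\sum_{k\ge r}b_kd^{-k}$. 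This is at most $4M$ times the sum, which is within a factor of two of the claimed $(2M-1)$. To reach the stated constant, I would sharpen the union bound using the exact pairwise count. Conditioned on message $m$, there are $M'-1=2M-2$ rivals, so the mean is at most $(2M-2)\sum b_kd^{-k}$. Markov with retention of the best half then gives the bound $\frac{2M-1}{M}(2M-2)\sum_{k\ge r}b_kd^{-k}$. Obtaining exactly $(2M-1)$ may instead require removing only messages involved in collisions pairwise, or picking the code size differently, and the main obstacle I expect is getting this constant to match the statement. Any constant of order $M$ suffices for the later normal approximation, but I would aim to reproduce $(2M-1)$ by taking the minimum with $1$ and then choosing the codebook size carefully.
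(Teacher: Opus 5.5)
Your overall architecture matches the paper's: uniform random words, a bound in which the low-survivor term $\Pr(K<r)$ is charged at most one for every message and so never passes through the expurgation, retention of the better half of an enlarged codebook, and the observation that deleting codewords only removes collisions. The gap is the constant, and as written your argument does not prove the lemma.

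You bound the error on a pattern $S$ by the indicator $\one\{\exists m'\neq m:\ x_{m'}|_S=x_m|_S\}$ and then apply a union bound. This charges a full unit for each colliding rival. Combined with the factor lost in keeping $M$ of the $2M-1$ words, it yields only $\frac{(2M-1)(2M-2)}{M}\sum_{k\geq r}b_kd^{-k}$. The inequality you write next, $\frac{2M-1}{M}(2M-2)\leq 2M-1$, is false for every $M\geq3$; for $M=3$ it reads $20/3\leq5$. Changing the codebook size does not help. The same argument with $M'$ words gives the constant $M'(M'-1)/(M'-M+1)$. By the arithmetic--geometric mean inequality this is at least $2M-3+2\sqrt{(M-1)(M-2)}>2M-1$ whenever $M\geq3$. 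Your remark that any constant of order $M$ suffices for Corollary~\ref{cor:normal-nf} is correct, but the lemma asserts the constant $2M-1$, and your proposal leaves it unproved.

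The missing idea is to keep the actual conditional error of uniform guessing among compatible messages. If $Z_S$ other codewords agree with $x_m$ on $S$, the error is $Z_S/(1+Z_S)\leq Z_S/2$, because $Z_S\geq1$ forces $1+Z_S\geq2$. Accordingly, take as cost the weighted pair count $A_m\coloneqq\sum_{j\neq m}\sum_{S:\,|S|\geq r}w_S\,\one\{(x_m)_S=(x_j)_S\}$. Message $m$ then has error at most $\Pr(K<r)+\tfrac12A_m$, and for $M'$ random words $\E[A_m]=(M'-1)\sum_{k\geq r}b_kd^{-k}$ exactly. The factor $\tfrac12$ cancels the factor of two lost in expurgation. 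The paper uses $M'=2M$, so each retained cost is at most $2(2M-1)\sum_{k\geq r}b_kd^{-k}$, which gives exactly the stated bound. With your $M'=2M-1$, the $M$-th smallest cost is at most $\frac1M\sum_jA_j$. This gives $\frac{(2M-1)(M-1)}{M}\sum_{k\geq r}b_kd^{-k}\leq(2M-1)\sum_{k\geq r}b_kd^{-k}$, which also proves the lemma.
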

\begin{proof}
Draw $2M$ independent uniform words $x_1,\ldots,x_{2M}$. For each message $i$, define its collision cost on patterns with at least $r$ survivors by
\begin{equation}
 A_i\coloneqq\sum_{j\neq i}\ 
 \sum_{S:\,|S|\geq r}w_S\,
 \one\{(x_i)_S=(x_j)_S\}.
 \label{eq:threshold-collision-cost}
\end{equation}
For two independently drawn words, agreement on $k$ specified coordinates has probability $d^{-k}$. Thus the expected average of the $2M$ costs is
\begin{equation}
 \E\!\left[\frac1{2M}\sum_{i=1}^{2M}A_i\right]
 =(2M-1)\sum_{k\geq r}b_kd^{-k}.
 \label{eq:threshold-cost-expectation}
\end{equation}
Choose a codebook whose average cost does not exceed this expectation, and retain the $M$ messages with smallest costs. Each retained cost is at most twice the original average: otherwise more than $M$ of the original $2M$ costs would exceed twice that average. Deleting messages can only decrease the remaining collision costs.

On a pattern with $K<r$, bound each retained message's error by one. On every other pattern, use \eqref{eq:collision-error}; its error is at most half the number of remaining colliding messages. The maximum error is therefore at most $\Pr(K<r)+\frac12\max_i A_i$, with the costs taken before deletion. Equation~\eqref{eq:threshold-cost-expectation} proves the stated bound. The low-survivor probability has not been multiplied by the factor from removing messages.
\end{proof}

\begin{corollary}[The same normal approximation without feedback]\label{cor:normal-nf}
Fix $p,\eps\in(0,1)$ and $d\geq2$. Under either the average-error or the maximum-error criterion, and without shared randomness,
\begin{equation}
 C_{n,\eps}^{\nf}
 =n(1-p)\log_2d
 +\sqrt{np(1-p)}\,\log_2d\,\Phi^{-1}(\eps)+O(1).
 \label{eq:normal-nf}
\end{equation}
As in Corollary~\ref{cor:normal}, the remainder is bounded by a constant independent of every positive blocklength $n$. The same expansion holds for the classical $d$-ary erasure channel without feedback.
\end{corollary}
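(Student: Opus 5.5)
The upper bound needs no new work. By \eqref{eq:nf-capacity-bound}, any no-feedback code with average error at most $\eps$ has $C_{n,\eps}^{\nf}\le C_{n,\eps}^{\fb}=\log_2\lfloor R_\eps\rfloor$. A maximum-error code is also an average-error code, and classical codes are special quantum codes. So Corollary~\ref{cor:normal} supplies the converse half of \eqref{eq:normal-nf}, with the uniform constant $A_{p,d,\eps}$, under both criteria and for the classical $d$-ary erasure channel. It therefore suffices to show a matching lower bound for \emph{classical} codes under \emph{maximum} error without shared randomness, since maximum error implies average error. For this I would use Lemma~\ref{lem:threshold-expurgation}.

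Set $q=1-p$ and, for each $n$, pick the threshold $r$ and message size $M$ as
\[
 r\coloneqq nq+\sqrt{npq}\,z_\eps-\Delta,\qquad M\coloneqq\lfloor d^{\,r-c}\rfloor .
\]
Here $\Delta$ and $c$ are constants independent of $n$, to be chosen below. Two terms in \eqref{eq:threshold-max-error} then need control.

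\emph{Collision term.} With $M\le d^{r-c}$,
\[
 (2M-1)\sum_{k\ge r}b_kd^{-k}\le 2d^{r-c}\sum_{k\ge r}b_kd^{-k}\le 2d^{-c}\frac{d}{d-1}\max_k\Pr(K=k).
\]
This is the same geometric-sum argument as \eqref{eq:geometric-smoothing-factor}, now on the upper side. By \eqref{eq:binomial-largest-atom} the bound is at most $2d^{1-c}u_p/((d-1)\sqrt n)$, so it is $O(n^{-1/2})$ for any fixed $c\ge0$.

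\emph{Low-survivor term.} By \eqref{eq:binomial-normal},
\[
 \Pr(K<r)\le\Phi\!\left(z_\eps-\Delta/\sqrt{npq}\right)+v_p/\sqrt n .
\]
For $n$ large, the mean value theorem with the density lower bound $c_\eps$ from \eqref{eq:quantile-density-lower} gives $\Phi(z_\eps-\Delta/\sqrt{npq})\le\eps-c_\eps\Delta/\sqrt{npq}$, provided $\Delta/\sqrt{npq}\le1$. Choose $\Delta$ large enough that $c_\eps\Delta/\sqrt{pq}$ exceeds $v_p$ plus the collision constant. Then the total maximum error is at most $\eps$ for all $n\ge n_0$.

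This yields a classical code with $\log_2M\ge(r-c)\log_2d-1$, using $\lfloor x\rfloor\ge x/2$ for $x\ge1$, which requires $r-c\ge0$ and holds for large $n$. Hence $\log_2M\ge nq\log_2d+\sqrt{npq}\,z_\eps\log_2d-O(1)$. For the finitely many $n<n_0$, enlarge the constant. This uses $C_{n,\eps}^{\nf}\ge0$, since $M=1$ is always feasible, together with the fact that the two-term expression is bounded for bounded $n$.

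The main obstacle is keeping every slack additive of order $n^{-1/2}$. If $\Pr(K<r)$ were multiplied by a constant, as ordinary expurgation does, the Gaussian quantile would shift. Lemma~\ref{lem:threshold-expurgation} is designed to avoid exactly this. The rest is choosing $\Delta$ to absorb the Berry--Esseen and atom constants, and checking that the threshold $r$ stays in range, for example $r\le n$ so that the sums are meaningful; for large $n$ this holds, since $r\approx nq<n$.
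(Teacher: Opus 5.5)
Your proposal is correct and follows the paper's proof essentially step for step: the converse comes from Corollary~\ref{cor:normal}, and achievability applies Lemma~\ref{lem:threshold-expurgation} with a threshold backed off by a fixed constant from $nq+\sqrt{npq}\,z_\eps$. As in the paper, you bound the collision term with the geometric-sum and largest-atom estimates, absorb the Berry--Esseen constants through the density lower bound $c_\eps$, and lose one bit to rounding. The differences are cosmetic: the paper takes $r=\log_d M$ exactly (your $c=0$, so the extra offset $c$ is unnecessary), and the check $r\le n$ is not needed because the lemma holds for every real threshold.
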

\begin{proof}
The feedback capacity is an upper bound under either error criterion, so Corollary~\ref{cor:normal} supplies the upper estimate. For achievability, apply Lemma~\ref{lem:threshold-expurgation} with $r=\log_d M$. The high-survivor term satisfies
\begin{equation}
 \begin{aligned}
 (2M-1)\sum_{k\geq r}b_kd^{-k}
 &\leq2\max_k b_k\sum_{k=\lceil r\rceil}^{\infty}d^{r-k}\\
 &\leq\frac{2d}{d-1}\max_k b_k
 \leq\frac{2d\,u_p}{(d-1)\sqrt n},
 \end{aligned}
 \label{eq:nf-high-survivor-bound}
\end{equation}
where $u_p$ is defined in \eqref{eq:binomial-largest-atom}. Combining this with \eqref{eq:binomial-normal}, a code exists whose maximum error is at most
\begin{equation}
 \Phi\!\left(\frac{\log_d M-nq}{\sqrt{npq}}\right)
 +\frac{v_p+2d\,u_p/(d-1)}{\sqrt n},
 \qquad q=1-p.
 \label{eq:nf-max-error-normal}
\end{equation}
Set $z_\eps=\Phi^{-1}(\eps)$ and use the positive constant $c_\eps$ from \eqref{eq:quantile-density-lower}. Choose a fixed $B>0$ such that
\begin{equation}
 \frac{c_\eps B}{\sqrt{pq}}>v_p+\frac{2d}{d-1}u_p.
 \label{eq:nf-threshold-backoff}
\end{equation}
For sufficiently large $n$, the message size
$M=\lfloor d^{\,nq+\sqrt{npq}z_\eps-B}\rfloor$ is positive. Its standardized logarithm is at most $z_\eps-B/\sqrt{npq}$. For large $n$ this latter point lies in $[z_\eps-1,z_\eps]$, where the normal density is at least $c_\eps$. The mean value theorem and \eqref{eq:nf-threshold-backoff} make \eqref{eq:nf-max-error-normal} at most $\eps$. Rounding down costs at most one bit, giving the claimed constant-order lower estimate for maximum error, and hence for average error. Enlarging the constant to cover the finitely many smaller $n$ proves the uniform statement.
\end{proof}

The classical binary bounded-remainder expansion under both error criteria was already established in~\cite[Theorem~53]{PPV10}. The first- and second-order terms in \eqref{eq:normal-nf} also agree with the unassisted quantum result of~\cite[Theorem~13]{WFT19}, whose remainder is $O(\log n)$. The present proof gives a $d$-ary classical achievability argument and combines it with the quantum converse to obtain a bounded remainder for unrestricted quantum encodings under both error criteria. The finite-blocklength separation in Sections~\ref{sec:strict}--\ref{sec:tetra} is therefore compatible with identical first- and second-order asymptotics, and even with an $O(1)$ difference between the capacities with and without feedback at a fixed error tolerance.

\subsection{The common strong-converse exponent}
We now keep the rate above capacity fixed and ask how quickly the optimal success probability decays. The exact formula is a sum over survivor fractions $t=k/n$. Each fraction incurs a binomial large-deviation cost and, when too few systems survive, an additional dimension deficit. The exponent is obtained by minimizing the sum of these two costs.

For $t,q\in(0,1)$, define the binary relative entropy
\begin{equation}
 \binD(t\|q)\coloneqq t\log_2\frac tq+(1-t)\log_2\frac{1-t}{1-q},
 \label{eq:binary-relative-entropy}
\end{equation}
with the continuous extension at $t=0,1$. A rate above capacity forces the success probability to vanish. The exponent below quantifies its precise exponential order.

\begin{corollary}[The strong-converse exponent with and without feedback]\label{cor:exponent}
Fix $p\in(0,1)$, $d\geq2$, and $R>(1-p)\log_2d$. Set $q\coloneqq1-p$ and $M_n\coloneqq\lceil2^{nR}\rceil$. Then
\begin{equation}
 \lim_{n\to\infty}-\frac1n\log_2\succp^{*,\fb}(n,M_n)
 =\min_{0\leq t\leq1}\left\{\binD(t\|q)+(R-t\log_2d)_+\right\}.
 \label{eq:exponent-variational}
\end{equation}
Writing $t_*\coloneqq dq/(p+dq)$, the value is
\begin{equation}
 E_{\mathrm{sc}}^{\fb}(R)=
 \begin{cases}
 \binD(R/\log_2d\|q),&q\log_2d<R\leq t_*\log_2d,\\[2pt]
 R-\log_2(p+dq),&R\geq t_*\log_2d.
 \end{cases}
 \label{eq:exponent-explicit}
\end{equation}
For average success probability, the same limit and value hold without feedback, both for unrestricted quantum codes and for classical $d$-ary erasure codes:
\begin{equation}
 E_{\mathrm{sc}}^{\nf}(R)=E_{\mathrm{sc}}^{\cl,\nf}(R)
 =E_{\mathrm{sc}}^{\fb}(R).
 \label{eq:sc-nf-equality}
\end{equation}
The no-feedback assertion here concerns average success; it does not assert an equality of the finite-blocklength average and worst-message success probabilities.
\end{corollary}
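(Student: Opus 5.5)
The plan is to work directly with the exact feedback value $\succp^{*,\fb}(n,M_n)=s_{n,d,p}(M_n)=\sum_k b_k\min\{1,d^k/M_n\}$ from Theorem~\ref{thm:feedback}. Since the sum has only $n+1$ terms, $-\frac1n\log_2$ of it equals $-\frac1n\log_2$ of its largest term, up to an additive $\frac1n\log_2(n+1)$. The term with $k=tn$ equals $b_k\min\{1,d^{k}/M_n\}$. Stirling's bounds $\binom{n}{k}\geq 2^{nh(t)}/(n+1)$ give $-\frac1n\log_2 b_k=\binD(t\|q)+O(\frac{\log n}{n})$, uniformly in $k$. Also $\lceil 2^{nR}\rceil$ differs from $2^{nR}$ by a factor of at most two, so $-\frac1n\log_2\min\{1,d^k/M_n\}=(R-t\log_2d)_+ + O(1/n)$.

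It follows that the exponent is $\min_{k}\{\binD(k/n\|q)+(R-\frac kn\log_2d)_+\}+o(1)$. This discrete minimum converges to the continuous minimum in \eqref{eq:exponent-variational}: the objective is continuous on $[0,1]$ because $q\in(0,1)$ keeps $\binD$ finite and continuous at the endpoints, and the grid $\{k/n\}$ becomes dense. The lower bound on the exponent then uses the maximal term, and the upper bound uses the single term nearest the continuous minimizer.

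For the explicit value \eqref{eq:exponent-explicit}, I would minimize $g(t)=\binD(t\|q)+(R-t\log_2 d)_+$, which is convex. On the region $t\geq R/\log_2d$ the function equals $\binD(t\|q)$. This is increasing there, because $R/\log_2d>q$, so its minimum over that region sits at $t=R/\log_2d$. On the region $t\le R/\log_2d$, the derivative of $\binD(t\|q)-t\log_2d$ vanishes where $\frac{t}{1-t}=\frac{dq}{p}$, that is, at $t=t_*$. There the value is
\[
\binD(t_*\|q)+R-t_*\log_2d=R-\log_2(p+dq),
\]
which is the usual tilting identity.

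Two cases follow:
\begin{itemize}
\item If $R/\log_2d\le t_*$, the interior critical point lies outside the region $t\le R/\log_2 d$, so the minimum is attained at the kink $t=R/\log_2d$, giving $\binD(R/\log_2 d\|q)$.
\item Otherwise the minimum is attained at $t_*$, giving $R-\log_2(p+dq)$.
\end{itemize}

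For the no-feedback statements, Proposition~\ref{prop:random-classical} provides the bracket
\[
(1-e^{-1})s_{n,d,p}(M)\le \succp^{*,\cl,\nf}(n,M)\le\succp^{*,\nf}(n,M)\le s_{n,d,p}(M).
\]
The constant factor $1-e^{-1}$ vanishes after taking $\frac1n\log_2$, so all three quantities share the same limit. This gives \eqref{eq:sc-nf-equality} for average success.

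I expect the main obstacle to be the uniform control of the discretization. The limit of the discrete minimum over $k/n$ must equal the continuous minimum, and the error terms must be uniform in $k$, including near $t=0,1$ where Stirling's approximation is weakest. To handle this, I would use the crude bounds $\frac{1}{n+1}2^{nh(k/n)}\le\binom nk\le 2^{nh(k/n)}$, which hold for all $k$. Uniform continuity of $g$ on $[0,1]$ then finishes the argument.
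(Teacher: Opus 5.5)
Your proposal is correct and follows essentially the same route as the paper. You bracket the exact feedback sum by its largest term within a factor $n+1$ using the binomial type bounds, pass from grid minima to the continuous minimum, evaluate the two cases via the tilting identity $\binD(t\|q)-t\log_2 d=\binD(t\|t_*)-\log_2(p+dq)$, and use the $(1-e^{-1})$ constant-factor bracket for the no-feedback claims; the only cosmetic differences are that you absorb $\lceil 2^{nR}\rceil$ as an $O(1/n)$ shift (the paper works with $R_n=n^{-1}\log_2 M_n$ and uniform convergence) and that you invoke convexity rather than piecewise monotonicity.
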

\begin{proof}
For a type $t=k/n$, the elementary binomial type bounds give
\begin{equation}
 \frac1{n+1}2^{-n\binD(t\|q)}
 \leq\Pr\!\left(K=k\right)\leq2^{-n\binD(t\|q)}.
 \label{eq:binomial-type-bounds}
\end{equation}
For example, these follow by multiplying
$(n+1)^{-1}2^{nh_2(t)}\leq\binom nk\leq2^{nh_2(t)}$
by $q^k(1-q)^{n-k}$, where $h_2(t)$ is the binary entropy. Set $R_n\coloneqq n^{-1}\log_2M_n$. The factor in the exact success formula is
\begin{equation}
 \min\{1,d^k/M_n\}=2^{-n(R_n-t\log_2d)_+}.
 \label{eq:success-type-factor}
\end{equation}
Let $a_n$ be the minimum of $\binD(t\|q)+(R_n-t\log_2d)_+$ over $t\in\{0,1/n,\ldots,1\}$. The summand at a minimizing type is at least $2^{-na_n}/(n+1)$, whereas each of the $n+1$ summands is at most $2^{-na_n}$. Therefore,
\begin{equation}
 \frac{2^{-na_n}}{n+1}
 \leq\succp^{*,\fb}(n,M_n)
 \leq(n+1)2^{-na_n}.
 \label{eq:exponent-finite-bracket}
\end{equation}
Taking negative normalized logarithms places the exponent within $\log_2(n+1)/n$ of $a_n$. The objectives converge uniformly when $R_n\to R$, since the positive-part function is Lipschitz with constant one. The limiting objective is continuous on the compact interval $[0,1]$, and each point in that interval is within $1/n$ of a type. Consequently, the grid minima converge to the continuous minimum, proving \eqref{eq:exponent-variational}.

To evaluate the minimum, observe that
\begin{equation}
 \binD(t\|q)-t\log_2d
 =\binD(t\|t_*)-\log_2(p+dq).
 \label{eq:exponential-tilt-identity}
\end{equation}
Indeed, $t_*/q=d/(p+dq)$ and $(1-t_*)/p=1/(p+dq)$, so expansion of the logarithms proves the identity. The expression on the left is minimized at $t_*$. On the interval $t\geq R/\log_2d$ when this interval is nonempty, the objective in \eqref{eq:exponent-variational} is $\binD(t\|q)$, which increases for $t>q$. On $t\leq R/\log_2d$, it is $R+\binD(t\|q)-t\log_2d$, which decreases up to $t_*$ and increases after $t_*$. The minimum is therefore at $R/\log_2d$ when $R/\log_2d\leq t_*$, and at $t_*$ otherwise. Equation~\eqref{eq:exponential-tilt-identity} gives \eqref{eq:exponent-explicit}. Both expressions agree at the boundary.

Finally, the constant-factor comparison in \eqref{eq:constant-success-bracket} applies to the optimal average success probabilities of both classical and quantum codes without feedback. Each is at most the feedback-assisted value and at least $1-e^{-1}$ times that value. Their negative normalized logarithms therefore differ from the feedback expression by at most $-n^{-1}\log_2(1-e^{-1})$, which tends to zero. This proves \eqref{eq:sc-nf-equality} without assuming that a no-feedback code attains the exact finite-blocklength feedback formula.
\end{proof}

Figure~\ref{fig:strong-converse} shows the exponent in Corollary~\ref{cor:exponent} for $d=2$ and $p=1/2$, together with $-n^{-1}\log_2\succp^{*,\fb}(n,\lceil2^{nR}\rceil)$ at $n=20,100,500$. The limiting curve becomes linear at $R=t_*\log_2d=2/3$. The finite-blocklength curves here and in Figure~\ref{fig:feedback-reliability} connect sampled rates, retaining the integer message size. Only feedback-assisted values are plotted at finite $n$: equality of the limiting exponents does not imply equality of finite-blocklength performance with and without feedback.

\begin{figure}[t]
 \centering
 \input{figures/strong_converse.tex}
 \caption{Above-capacity strong-converse exponent from Corollary~\ref{cor:exponent}, for $d=2$ and $p=1/2$. The solid curve is the common limiting exponent for optimal average success with or without feedback. The other curves evaluate $-n^{-1}\log_2\succp^{*,\fb}(n,\lceil2^{nR}\rceil)$ using the exact feedback formula. For $1/2<R\leq2/3$ the limiting exponent is $D_{\mathrm{bin}}(R\|1/2)$; for $R\geq2/3$ it is $R-\log_2(3/2)$. All rates plotted satisfy $R>1/2$.}
 \label{fig:strong-converse}
\end{figure}

\subsection{Error exponents below capacity}\label{subsec:reliability}
Above capacity, the strong-converse exponent measures the decay of the success probability. Below capacity, the relevant quantity is instead the decay of the error probability. These are different limits. In particular, the constant-factor success comparison in \eqref{eq:constant-success-bracket} does not imply equality of the below-capacity error exponents.

For $\mathsf a\in\{\fb,\nf\}$, write
$\eps_{\mathsf a}^*(n,M)\coloneqq1-\succp^{*,\mathsf a}(n,M)$ for the optimal average error. Fix a positive rate
\begin{equation}
 0<R<(1-p)\log_2d,
 \qquad M_n\coloneqq\lceil2^{nR}\rceil.
 \label{eq:reliability-rate-range}
\end{equation}
An error exponent is the limit of $-n^{-1}\log_2\eps_{\mathsf a}^*(n,M_n)$, when it exists. We use lower and upper limits when equality has not been proved. The restriction $R>0$ avoids the different endpoint convention $M_n=1$ at $R=0$, for which the error is identically zero.

\begin{corollary}[Exact feedback-assisted error exponent]\label{cor:reliability-fb}
For $p\in(0,1)$, $d\geq2$, and the rates in \eqref{eq:reliability-rate-range},
\begin{equation}
 E_{\mathrm{rel}}^{\fb}(R)
 \coloneqq\lim_{n\to\infty}-\frac1n\log_2\eps_{\fb}^*(n,M_n)
 =\binD\!\left(\frac{R}{\log_2d}\middle\|1-p\right).
 \label{eq:fb-reliability-exact}
\end{equation}
The same exponent holds for maximum error under the initial-feedback-seed convention of Theorem~\ref{thm:feedback}.
\end{corollary}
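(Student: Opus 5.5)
Theorem~\ref{thm:feedback} gives the exact optimal error as $\eps_\fb^*(n,M_n)=e(M_n)=\sum_{k}b_k(1-d^k/M_n)_+$, so the corollary reduces to a large-deviation estimate for this sum. I will write $q=1-p$, $\tau\coloneqq R/\log_2 d\in(0,q)$, and $r_n\coloneqq\log_d M_n$, so that $r_n\to n\tau$ with $n\tau\le r_n<n\tau+o(1)$. The summand with index $k$ is nonzero only for $k<r_n$. Moreover, every factor $(1-d^{k-r_n})_+$ is at most one. Hence
\begin{equation*}
 \eps_\fb^*(n,M_n)\leq\Pr(K<r_n)\leq(n+1)\,2^{-n\min_{t\le r_n/n}\binD(t\|q)},
\end{equation*}
using the type bound \eqref{eq:binomial-type-bounds}. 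Since $r_n/n\to\tau<q$ and $\binD(\cdot\|q)$ is continuous and decreasing on $[0,q]$, the minimum tends to $\binD(\tau\|q)$. This gives $\liminf -n^{-1}\log_2\eps_\fb^*\ge\binD(\tau\|q)$.

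For the reverse inequality, I will keep a single term near the threshold. I choose $k_n\coloneqq\lceil r_n\rceil-2$, which for large $n$ is a nonnegative integer with $k_n\le r_n-1$. Its factor then satisfies $1-d^{k_n-r_n}\ge1-1/d\ge1/2$, so
\begin{equation*}
 \eps_\fb^*(n,M_n)\ge\tfrac12 b_{k_n}\ge\frac{1}{2(n+1)}2^{-n\binD(k_n/n\|q)}.
\end{equation*}
Because $k_n/n\to\tau\in(0,1)$ and $\binD(\cdot\|q)$ is continuous there, $\limsup -n^{-1}\log_2\eps_\fb^*\le\binD(\tau\|q)$. Combining the two bounds shows that the limit exists and equals \eqref{eq:fb-reliability-exact}.

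For maximum error, I will use the seed-symmetrized protocol of Theorem~\ref{thm:feedback}, whose maximum error equals the average value $e(M_n)$; this is the achievability. The converse holds because maximum error dominates average error, so the same exponent follows.

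The only delicate point is the lower bound on the error. The term must be chosen strictly below the threshold, with a constant margin, so that its factor does not vanish; taking $k=\lfloor r_n\rfloor$ could give a zero factor when $r_n$ is an integer. The condition $R>0$ ensures $k_n\ge0$ and $\tau$ lies in the interior, where the type bounds and continuity apply. The condition $R<q\log_2 d$ ensures the monotonicity of $\binD$ used in the upper bound.
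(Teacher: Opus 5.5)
Your proof is correct and follows essentially the same route as the paper: you bound the exact error by the binomial tail $\Pr(K<r_n)$, control that tail with the type bounds and the monotonicity of $\binD(\cdot\|q)$ on $[0,q]$, keep a single term at least one unit below the threshold (your $k_n=\lceil r_n\rceil-2$ plays the role of the paper's $\lfloor r_n\rfloor-1$), and pass to maximum error via the seed-symmetrized protocol. The only detail to add is that $k_n\le n$ for large $n$, which holds because $k_n<r_n-1$ and $r_n/n\to\tau<1$.
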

\begin{proof}
Set $q=1-p$, $a=R/\log_2d\in(0,q)$, and $r_n=\log_d M_n$. Then $r_n/n\to a$. From the exact error formula,
\begin{equation}
 \eps_{\fb}^*(n,M_n)
 =\sum_{k<r_n}b_k(1-d^{k-r_n})
 \leq\Pr(K<r_n).
 \label{eq:reliability-tail-upper}
\end{equation}
Let $j_n=\lceil r_n\rceil-1$. For all sufficiently large $n$, $0\leq j_n<nq$. The binary relative entropy $\binD(t\|q)$ decreases on $0\leq t<q$. Applying the upper type bound in \eqref{eq:binomial-type-bounds} to each of the at most $n+1$ terms therefore gives
\begin{equation}
 \Pr(K<r_n)\leq(n+1)2^{-n\binD(j_n/n\|q)}.
 \label{eq:reliability-tail-type}
\end{equation}
For the reverse exponential estimate, choose $k_n=\lfloor r_n\rfloor-1$, which lies in $\{0,\ldots,n\}$ for sufficiently large $n$. Since $r_n-k_n\geq1$, one term of the exact error yields
\begin{equation}
 \eps_{\fb}^*(n,M_n)
 \geq(1-d^{-1})b_{k_n}
 \geq\frac{1-d^{-1}}{n+1}\,2^{-n\binD(k_n/n\|q)}.
 \label{eq:reliability-single-type}
\end{equation}
Both $j_n/n$ and $k_n/n$ tend to $a$. Taking negative normalized logarithms of \eqref{eq:reliability-tail-type} and \eqref{eq:reliability-single-type}, and using continuity of the binary relative entropy, proves the limit. Equality of the finite-blocklength average and maximum errors in the stated feedback model gives the final assertion.
\end{proof}

Figure~\ref{fig:feedback-reliability} illustrates Corollary~\ref{cor:reliability-fb} for $d=2$ and $p=1/2$. It compares the exact limiting error exponent with the negative normalized logarithm of the optimal feedback error at finite blocklength. Unlike Figure~\ref{fig:strong-converse}, this figure concerns the decay of error below capacity, not the decay of success above capacity. The behavior near $R=0$ also reflects rounding of $\lceil2^{nR}\rceil$; the theorem takes $n\to\infty$ at a fixed positive rate.

\begin{figure}[t]
 \centering
 \input{figures/feedback_error_exponent.tex}
 \caption{Below-capacity feedback-assisted error exponent from Corollary~\ref{cor:reliability-fb}, for $d=2$ and $p=1/2$. The solid curve is $D_{\mathrm{bin}}(R\|1/2)$ on $0<R<1/2$. The other curves are $-n^{-1}\log_2\eps_\fb^*(n,\lceil2^{nR}\rceil)$ computed from the exact finite-blocklength error. No value at $R=0$ is assigned: setting $R=0$ would give a single-message code and zero error, whereas the limiting exponent as $R\downarrow0$ is one. The no-feedback reliability bounds are compared separately in Figure~\ref{fig:reliability}.}
 \label{fig:feedback-reliability}
\end{figure}

Without feedback, classical random coding supplies an achievable exponent, whereas the feedback formula supplies a converse even for entangled quantum codewords. The two bounds coincide above a critical rate. At lower rates, we also give the classical expurgated lower bound, obtained by removing messages with unusually large pairwise confusion probabilities. These are specializations of the random-coding and expurgation methods of Gallager~\cite{Gallager65}; the short derivations below identify exactly what they imply for the present quantum problem.

Set $q=1-p$ and define
\begin{equation}
 t_{\mathrm{crit}}\coloneqq\frac{q}{q+dp},
 \qquad R_{\mathrm{crit}}\coloneqq t_{\mathrm{crit}}\log_2d.
 \label{eq:critical-rate}
\end{equation}
Since $0<t_{\mathrm{crit}}<q$, this critical rate lies strictly between zero and capacity. Define the random-coding exponent
\begin{equation}
 \begin{aligned}
 E_{\mathrm r}(R)
 &\coloneqq\min_{0\leq t\leq1}
 \left\{\binD(t\|q)+(t\log_2d-R)_+\right\}\\
 &=\begin{cases}
 -\log_2(p+q/d)-R,&0<R\leq R_{\mathrm{crit}},\\[2pt]
 \binD(R/\log_2d\|q),&R_{\mathrm{crit}}\leq R<q\log_2d,
 \end{cases}
 \end{aligned}
 \label{eq:random-coding-exponent}
\end{equation}
and the expurgated exponent
\begin{equation}
 E_{\mathrm{ex}}(R)\coloneqq
 \sup_{s\geq1}\left\{
 -s\log_2\!\left[\frac1d+\left(1-\frac1d\right)p^{1/s}\right]-sR
 \right\}.
 \label{eq:expurgated-exponent}
\end{equation}
The parameter $s$ in \eqref{eq:expurgated-exponent} is a real optimization variable, unrelated to the survivor set $S$ or the success function $s_{n,d,p}$.

\begin{proposition}[No-feedback error exponents]\label{prop:reliability-nf}
For the rates in \eqref{eq:reliability-rate-range}, define
\begin{equation}
 \begin{aligned}
 \underline E_{\mathrm{rel}}^{\nf}(R)
 &\coloneqq\liminf_{n\to\infty}-\frac1n\log_2\eps_{\nf}^*(n,M_n),\\
 \overline E_{\mathrm{rel}}^{\nf}(R)
 &\coloneqq\limsup_{n\to\infty}-\frac1n\log_2\eps_{\nf}^*(n,M_n).
 \end{aligned}
 \label{eq:nf-reliability-definitions}
\end{equation}
Then
\begin{equation}
 \max\{E_{\mathrm r}(R),E_{\mathrm{ex}}(R)\}
 \leq\underline E_{\mathrm{rel}}^{\nf}(R)
 \leq\overline E_{\mathrm{rel}}^{\nf}(R)
 \leq\binD(R/\log_2d\|q).
 \label{eq:nf-reliability-bracket}
\end{equation}
In particular, for $R_{\mathrm{crit}}\leq R<q\log_2d$, the limit exists and equals the feedback-assisted exponent:
\begin{equation}
 E_{\mathrm{rel}}^{\nf}(R)
 =E_{\mathrm{rel}}^{\fb}(R)
 =\binD(R/\log_2d\|q).
 \label{eq:high-rate-reliability-equality}
\end{equation}
The bounds and the exact high-rate formula hold also for maximum error without shared randomness. They hold for classical $d$-ary erasure codes as well as for unrestricted quantum codes without feedback.
\end{proposition}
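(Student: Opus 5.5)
The bracket has three parts: a converse from feedback, a random-coding lower bound, and an expurgated lower bound. Throughout, set $q=1-p$ and $r_n=\log_d M_n$.

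\textbf{Upper bound.} A no-feedback code is a feedback code, so $\eps_{\nf}^*(n,M_n)\geq\eps_{\fb}^*(n,M_n)$. Corollary~\ref{cor:reliability-fb} then gives $\overline E_{\mathrm{rel}}^{\nf}(R)\leq\binD(R/\log_2d\|q)$ directly.

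\textbf{Random-coding lower bound.} I would use the explicit bound $U_n(M_n)$ from Proposition~\ref{prop:random-classical}. Since $(M_n-1)/(2d^k)\leq 2^{-n(k\log_2 d/n-R_n)}$, each term is at most $b_k\,2^{-n(t\log_2d-R_n)_+}$ with $t=k/n$. The type bounds \eqref{eq:binomial-type-bounds} and the same grid-minimum argument as in Corollary~\ref{cor:exponent} then give the exponent $\min_t\{\binD(t\|q)+(t\log_2d-R)_+\}$ up to $O(\log n/n)$.

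To evaluate the minimum, I would use a tilt identity analogous to \eqref{eq:exponential-tilt-identity}:
\begin{equation}
 \binD(t\|q)+t\log_2d=\binD(t\|t_{\mathrm{crit}})-\log_2(p+q/d).
\end{equation}
This holds because $t_{\mathrm{crit}}/q=1/(dp+q)\cdot d/d$ and $(1-t_{\mathrm{crit}})/p=d/(q+dp)$. On $t\geq R/\log_2d$ the objective is minimized at $t_{\mathrm{crit}}$ when $R\leq R_{\mathrm{crit}}$, and otherwise at the boundary. This yields the two branches of \eqref{eq:random-coding-exponent}.

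\textbf{Maximum error for random coding.} Here I would apply the doubling trick of Lemma~\ref{lem:threshold-expurgation} with $r=-\infty$, i.e.\ no threshold. Keeping the best half of $2M_n$ messages bounds every message's error by twice the average $U_n(2M_n)$, and this changes the exponent by nothing.

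\textbf{Expurgated lower bound (main obstacle).} Pairwise confusion between words $x,x'$ has probability at most $\tfrac12 p^{d_H(x,x')}$, the probability that all differing coordinates are erased, times the guessing factor. Let $P_e(x_i)$ denote the error probability of codeword $x_i$. The union bound gives $P_e(x_i)\leq\sum_{j\neq i}p^{d_H(x_i,x_j)}$.

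For $s\geq1$, I would use $(\sum a_j)^{1/s}\leq\sum a_j^{1/s}$ and take expectations over i.i.d.\ uniform codewords:
\begin{equation}
 \E\big[P_e(x_i)^{1/s}\big]\leq (2M_n-1)\,\E\big[p^{d_H/s}\big]=(2M_n-1)\Big[\tfrac1d+\big(1-\tfrac1d\big)p^{1/s}\Big]^n .
\end{equation}
Markov's inequality then lets me retain $M_n$ of $2M_n$ codewords, each with $P_e^{1/s}$ at most twice this mean. Raising to the power $s$ gives maximum error at most $2^s(2M_n)^s[\cdots]^{ns}$, hence exponent $-s\log_2[\cdots]-sR$. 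Supremizing over $s$ gives $E_{\mathrm{ex}}$; the fact that the supremum need not be attained is handled by taking a sequence of $s$ values. The delicate points are the subadditivity step and ensuring that deleting messages only reduces errors, which it does for the uniform-guess decoder on the retained codebook.

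\textbf{Conclusion.} Every bound above was obtained for maximum error with classical deterministic codes, which are special cases of quantum codes. Since maximum error dominates average error, the bracket \eqref{eq:nf-reliability-bracket} holds in all the stated settings. For $R\geq R_{\mathrm{crit}}$ we have $E_{\mathrm r}=\binD$, so the two sides of the bracket coincide and give \eqref{eq:high-rate-reliability-equality}.
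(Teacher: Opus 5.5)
Your proposal is correct and follows the paper's proof essentially step for step: the feedback converse, the bound $U_n(M_n)\le\sum_k b_k 2^{-n((k/n)\log_2 d-R_n)_+}$ with type bounds and the $t_{\mathrm{crit}}$ tilt identity, best-half retention from a $2M_n$-message code for maximum error, and Gallager expurgation via $\sum_j z_j^s\le(\sum_j z_j)^s$. The only difference is that you expurgate on $P_e(x_i)^{1/s}$ rather than on the cost $\sum_{j\neq i}p^{h_{ij}/s}$, which is immaterial.

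One wording caveat: do not describe the maximum-error step as Lemma~\ref{lem:threshold-expurgation} with $r=-\infty$. Applied literally, that lemma gives only $(2M_n-1)(p+q/d)^n$, whose exponent $-\log_2(p+q/d)-R$ lies strictly below $E_{\mathrm r}(R)$ for $R>R_{\mathrm{crit}}$. The argument you actually state is the right one, and it is the paper's: retain the messages with the smallest individual errors, so that each is at most $2U_n(2M_n)$.
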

\begin{proof}
A code without feedback is a special feedback-assisted protocol, so $\eps_{\nf}^*(n,M_n)\geq\eps_{\fb}^*(n,M_n)$. Corollary~\ref{cor:reliability-fb} proves the upper bound in \eqref{eq:nf-reliability-bracket}, including for quantum codewords entangled across all inputs.

For achievability, \eqref{eq:random-error-bound} is at most
\begin{equation}
 \sum_{k=0}^n b_k\min\{1,M_nd^{-k}\}
 =\sum_{k=0}^n b_k\,2^{-n((k/n)\log_2d-R_n)_+},
 \qquad R_n\coloneqq\frac1n\log_2M_n.
 \label{eq:random-error-types}
\end{equation}
The upper type bound and a sum over at most $n+1$ types show that the lower limit of its negative normalized logarithm is at least $E_{\mathrm r}(R)$. The convergence of the grid minima follows exactly as in the proof of Corollary~\ref{cor:exponent}. Since a deterministic classical code attains the error bound, this proves the random-coding lower estimate for the optimal quantum error as well.

To evaluate the minimum, the relevant tilted probability is now $t_{\mathrm{crit}}$, rather than the $t_*$ used for the strong-converse exponent. Expansion of the logarithms gives
\begin{equation}
 \binD(t\|q)+t\log_2d
 =\binD(t\|t_{\mathrm{crit}})-\log_2(p+q/d).
 \label{eq:reliability-tilt}
\end{equation}
On $t\leq R/\log_2d<q$, the objective in \eqref{eq:random-coding-exponent} is $\binD(t\|q)$ and decreases up to the endpoint. On $t\geq R/\log_2d$, it is $\binD(t\|q)+t\log_2d-R$, minimized at $t_{\mathrm{crit}}$ when that point is in the interval, and otherwise at its left endpoint. This proves the two cases in \eqref{eq:random-coding-exponent}.

The expurgation argument in Appendix~\ref{app:expurgation} gives deterministic classical codes attaining $E_{\mathrm{ex}}(R)$ under maximum error. To pass the random-coding exponent to maximum error as well, begin with a $2M_n$-message code attaining \eqref{eq:random-error-bound}. Keep the $M_n$ messages with smallest individual errors. Their errors, using the old decoder and assigning deleted outcomes to arbitrary retained messages, are at most twice the old average error. The factor two and the change from $M_n$ to $2M_n$ do not affect an error exponent, since their binary logarithms are $O(1)$. This establishes both lower bounds under maximum error without shared randomness. The converse still applies because maximum error is no smaller than average error. Finally, the random-coding and converse bounds agree when $R\geq R_{\mathrm{crit}}$, giving \eqref{eq:high-rate-reliability-equality}.
\end{proof}

For $0<R<R_{\mathrm{crit}}$, \eqref{eq:nf-reliability-bracket} is a pair of bounds, not a claimed exact no-feedback reliability function. In particular, a gap between these bounds does not prove a feedback advantage in the optimal error exponent. Figure~\ref{fig:reliability} illustrates what is and is not determined. The finite-blocklength quantum advantage in Section~\ref{sec:tetra} also does not, by itself, establish an improvement in a limiting error exponent.

\begin{figure}[t]
 \centering
 \input{figures/reliability_bounds.tex}
 \caption{Below-capacity error exponents for $d=2$ and $p=1/2$. The upper solid curve is the exact feedback-assisted exponent and an upper bound without feedback. The dashed and dotted curves are achievable without feedback by classical random coding and expurgation, respectively. For $R_{\mathrm{crit}}=1/3\leq R<1/2$, the random-coding curve coincides with the upper curve, determining the classical and quantum no-feedback exponents exactly. At lower positive rates only the indicated bounds are asserted. The values at $R=0$ are limiting values as $R\downarrow0$, not exponents for a single-message code.}
 \label{fig:reliability}
\end{figure}

\section{Conclusion}\label{sec:conclusion}

\subsection{Summary of results}
We determined the exact fixed-blocklength success probability for classical communication over a quantum erasure channel assisted by classical feedback and without initial shared entanglement. The converse allows adaptive encoders, arbitrary quantum memories, and arbitrary receiver instruments. Its central step is a noiseless communication dimension bound. Fixing the classical controls yields normalized sender states, while the receiver's complete instrument tree supplies a POVM containing the transcript probabilities. This permits a dimension estimate without assuming separability during the protocol or a uniform posterior message distribution.

For the erasure channel, a fixed pattern reduces the problem to its surviving noiseless transfers. The repeat-until-received protocol using base-$d$ digits attains the resulting bound for every message size. We also evaluated the associated message-size formula. Classical no-feedback codes show that the bounded-remainder normal approximation and the average-success strong-converse exponent coincide with their feedback-assisted counterparts. Below capacity, the feedback-assisted error exponent is exact at every positive rate; the no-feedback exponent agrees with it above the critical rate, while random-coding and expurgated lower bounds apply below that rate.

Without feedback, a relative-majorization construction preserves an entire family of sector-uniform auxiliary states and gives an exact evaluation of a hypothesis-testing converse. We recovered the numerical converse of~\cite{Rosati24}, but proved that it is not generally an exact operational formula. Four messages and four qubit transmissions provide a strict counterexample to universal attainability. For these same parameters, an explicit tetrahedral product-state code strictly improves on every classical binary erasure code. The three coding models therefore have genuinely different finite-blocklength performances.

\subsection{Directions for future research}
The exact unassisted quantum success probability remains to be determined even for the four-message, four-qubit example. The tetrahedral code is an explicit lower bound, not a proof of optimality. It would be useful to determine whether another product-state ensemble improves it and whether entangled codewords yield a further advantage.

The lower and upper no-feedback reliability bounds in Proposition~\ref{prop:reliability-nf} need not coincide below the critical rate. Improving those bounds and determining whether nonorthogonal or entangled codewords improve the asymptotic error exponent are separate questions from the finite-blocklength separation proved here.

More generally, equality in the sectorwise dimension bound imposes simultaneous distinguishability and purity conditions on many reduced code states. A systematic characterization of these conditions could identify additional parameter regimes with exact unassisted formulas, or provide quantitative gaps rather than the compactness-based strict inequality proved here.

The feedback result also suggests studying restrictions on the backward communication. The attaining protocol needs only acknowledgments after each forward use, together with an initial seed when maximum-error symmetrization is required. Delayed or noisy acknowledgments change that structure. Understanding which restrictions preserve an exact formula would clarify the role of feedback beyond the unrestricted setting considered here.

\section*{Acknowledgments}
The author acknowledges helpful discussions with Hemant Mishra and support from the Cornell University School of Electrical and Computer Engineering.

\section*{Statement on AI-assisted preparation}
ChatGPT (OpenAI) Pro 6 (Astra) was used extensively in developing this manuscript, including for mathematical exploration, formulation and checking of proof arguments, drafting and revision, bibliographic checks, and preparation of LaTeX, TikZ, and numerical-checking code. The research problem and successive revisions were directed by the author, who supplied detailed comments on the mathematical arguments and their presentation.

AI-assisted mathematical and editorial review passes were followed by further revisions. These were internal self-reviews, not independent external peer review or formal verification. Numerical calculations were used to check finite-dimensional identities and examples, not as substitutes for proofs. Responsibility for the mathematical statements, citations, and presentation rests with the author.
\bibliographystyle{alphaurl}
\bibliography{references}

\appendix
\section{Symmetry reduction of the fully optimized hypothesis-testing converse}\label{app:symmetry}
This appendix proves the lower bound in Proposition~\ref{prop:full-meta}. The point is to reduce the auxiliary state by data processing on the joint classical--quantum system, rather than by assuming a convexity property of the hypothesis-testing divergence.

Let $\mathbb Z_d$ denote the integers modulo $d$, and label the computational-basis input words by $x\in\mathbb Z_d^n$. Their uniform output ensemble is
\begin{equation}
 \Theta_{XB^n}\coloneqq\frac1{d^n}\sum_{x\in\mathbb Z_d^n}
 \proj x_X\otimes\bigoplus_{S\subseteq[n]}w_S\proj{x_S},
 \qquad\pi_X\coloneqq I_X/d^n.
 \label{eq:uniform-basis-ensemble}
\end{equation}
Here $x_S$ is the restriction of the word to surviving positions. We show that
\begin{equation}
 \inf_\sigma\Htest^\eps(\Theta_{XB^n}\|\pi_X\otimes\sigma)
 =\inf_q\Htest^\eps(\omega_0\|\sigma_q),
 \qquad\omega_0\coloneqq(\cE^p)^{\otimes n}(\left(\proj0\right)^{\otimes n}).
 \label{eq:symmetry-target}
\end{equation}

First, pinch the output into its erasure sectors. This leaves $\Theta$ unchanged and converts an arbitrary auxiliary $\sigma$ into its sector-diagonal part. Next average over powers of the local phase unitaries
\begin{equation}
 Z\ket j\coloneqq e^{2\pi i j/d}\ket j,
 \qquad Z\ket e\coloneqq\ket e,
 \label{eq:phase-unitary}
\end{equation}
independently in each coordinate. These channels also fix $\Theta$ and remove every off-diagonal entry within each surviving computational basis. The auxiliary is now diagonal within each sector. Denote it by $\sigma'$.

Let $U_g$ be the tensor product of cyclic shifts $\ket{j_i}\mapsto\ket{j_i+g_i}$ on the nonerased part of each output, fixing the erasure vector. Let $V_g$ shift the classical label by $x\mapsto x+g$. For every $g\in\mathbb Z_d^n$,
\begin{equation}
 (V_g\otimes U_g)\Theta(V_g\otimes U_g)^\dagger=\Theta.
 \label{eq:joint-covariance}
\end{equation}
Average these joint unitaries uniformly. Since $V_g\pi_XV_g^\dagger=\pi_X$, their action on the second hypothesis is
\begin{equation}
 \pi_X\otimes\sigma'
 \longmapsto\pi_X\otimes\frac1{d^n}\sum_g U_g\sigma'U_g^\dagger
 =\pi_X\otimes\sigma_q,
 \label{eq:aux-twirl}
\end{equation}
where $q_S=\Tr[P_S\sigma]$ is the unchanged total weight of sector $S$. The last equality follows because independent cyclic shifts act transitively on all basis words within each sector. The first hypothesis remains $\Theta$ throughout. Data processing therefore proves that for every $\sigma$ there is a $q$ such that
\begin{equation}
 \Htest^\eps(\Theta\|\pi_X\otimes\sigma)
 \geq\Htest^\eps(\Theta\|\pi_X\otimes\sigma_q).
 \label{eq:symmetry-dpi}
\end{equation}
Since the states $\sigma_q$ are themselves admissible auxiliaries, it follows in both directions that the infimum over all $\sigma$ equals the infimum restricted to this family.

Finally, apply the controlled unitary
\begin{equation}
 W\coloneqq\sum_{x\in\mathbb Z_d^n}\proj x_X\otimes U_{-x}.
 \label{eq:controlled-inverse}
\end{equation}
It takes $\Theta$ to $\pi_X\otimes\omega_0$. Every $\sigma_q$ is uniform within its sectors, so $W$ fixes $\pi_X\otimes\sigma_q$. Unitary invariance and removal of the common independent factor $\pi_X$ give
\begin{equation}
 \Htest^\eps(\Theta\|\pi_X\otimes\sigma_q)
 =\Htest^\eps(\omega_0\|\sigma_q).
 \label{eq:controlled-equality}
\end{equation}
This proves \eqref{eq:symmetry-target}. Theorem~\ref{thm:testing-value} evaluates its right-hand side as $\log_2R_\eps$. Therefore the uniform computational-basis ensemble attains the upper bound on \eqref{eq:fully-optimized-meta}, completing the proof of Proposition~\ref{prop:full-meta}.

\section{The classical expurgated error bound}\label{app:expurgation}
For completeness, we derive the lower bound \eqref{eq:expurgated-exponent} using the classical expurgation method~\cite{Gallager65}. This supplies a low-rate achievability bound for the quantum erasure channel using only computational-basis inputs. The argument bounds every retained message's error, so no shared randomness is required.

Fix $s\geq1$. For two words $x_i,x_j\in\{0,\ldots,d-1\}^n$, let $h_{ij}$ be their Hamming distance, the number of coordinates at which they differ. Their surviving projections agree exactly when all of these differing positions are erased, an event of probability $p^{h_{ij}}$. Uniform guessing among all compatible messages therefore gives individual error at most
\begin{equation}
 \eps_i\leq\frac12\sum_{j\neq i}p^{h_{ij}},
 \label{eq:expurgated-pair-error}
\end{equation}
since $z/(1+z)\leq z/2$ when $z$ other messages are compatible. The right side is allowed to exceed one; it is an upper bound, not a probability identity.

Draw $2M$ independent uniform words and set
\begin{equation}
 A_i^{(s)}\coloneqq\sum_{j\neq i}p^{h_{ij}/s},
 \qquad
 a_s\coloneqq\frac1d+\left(1-\frac1d\right)p^{1/s}.
 \label{eq:expurgated-cost}
\end{equation}
Two independent symbols agree with probability $1/d$. Independence across coordinates consequently gives
\begin{equation}
 \E\!\left[\frac1{2M}\sum_{i=1}^{2M}A_i^{(s)}\right]
 =(2M-1)a_s^n.
 \label{eq:expurgated-mean}
\end{equation}
Choose a codebook whose average cost does not exceed this value and retain its $M$ smallest-cost messages. As in Lemma~\ref{lem:threshold-expurgation}, every retained cost is at most $2(2M-1)a_s^n$, and deleting the other messages can only decrease the cost. For nonnegative numbers $(z_j)_j$ and $s\geq1$, $\sum_j z_j^s\leq(\sum_jz_j)^s$. Applying this inequality to $z_j=p^{h_{ij}/s}$ in \eqref{eq:expurgated-pair-error} gives a deterministic code satisfying
\begin{equation}
 \eps_{\max}\leq\frac12\left[2(2M-1)a_s^n\right]^s.
 \label{eq:expurgated-finite}
\end{equation}
For $M=M_n=\lceil2^{nR}\rceil$ and a fixed $s$, the lower limit of the negative normalized logarithm of this bound is $-s\log_2a_s-sR$. Since the optimal error is no larger than the error of the constructed code, this lower estimate holds for its exponent for every fixed $s\geq1$. Taking the supremum over $s$ proves \eqref{eq:expurgated-exponent}. No interchange of a limit and a varying choice of $s$ is needed.

\end{document}

%% file: figures/noiseless_protocol.tex
\begin{tikzpicture}[x=1cm,y=1cm,font=\small,>=Latex]
  \node[anchor=east] at (-.1,2.5) {Alice};
  \node[anchor=east] at (-.1,0) {Bob};
  \node[op] (a1) at (1,2.5) {$\mathcal A_1^m$};
  \node[op] (a2) at (4.8,2.5) {$\mathcal A_2^{m,f_1}$};
  \node[op] (a3) at (8.6,2.5) {$\mathcal A_3^{m,f_1f_2}$};
  \node[op] (b1) at (2.7,0) {$\mathcal B_{f_1}$};
  \node[op] (b2) at (6.5,0) {$\mathcal B_{f_2|f_1}$};
  \node[op] (b3) at (10.5,0) {$(\Lambda_{\widehat m|f})_{\widehat m}$};
  \draw[qwire] (-.1,2.5)--(a1.west) node[midway,above] {$A_0$};
  \draw[qwire] (a1.east)--(a2.west) node[midway,above] {$A_1$};
  \draw[qwire] (a2.east)--(a3.west) node[midway,above] {$A_2$};
  \draw[qwire] (a3.east)--(11.5,2.5) node[midway,above] {$A_3$};
  \node[anchor=west] at (11.5,2.5) {$\Tr$};
  \draw[qwire] (-.1,0)--(b1.west) node[midway,below] {$B_0$};
  \draw[qwire] (b1.east)--(b2.west) node[midway,below] {$B_1$};
  \draw[qwire] (b2.east)--(b3.west) node[midway,below] {$B_2$};
  \draw[cwire] (b3.east)--(12.2,0) node[right] {$\widehat m$};
  \draw[qwire] (a1.south)--(b1.north) node[midway,left=2pt] {$Q_1$};
  \draw[qwire] (a2.south)--(b2.north) node[midway,left=2pt] {$Q_2$};
  \draw[qwire] (a3.south)--(b3.north) node[midway,left=2pt] {$Q_3$};
  \draw[cwire] (b1.north east)--(a2.south west) node[midway,right=2pt] {$F_1$};
  \draw[cwire] (b2.north east)--(a3.south west) node[midway,right=2pt] {$F_2$};
  \draw[->] (1,-1.0)--(10.7,-1.0) node[midway,below] {time};
  \node[align=center] at (6.1,3.55) {No initial shared entanglement; all forward transfers are noiseless.};
\end{tikzpicture}

%% file: figures/fixed_transcript.tex
\begin{tikzpicture}[x=1cm,y=1cm,font=\small,>=Latex]
\node[anchor=west,font=\bfseries\small] at (-.1,3.85)
 {Prescribe $m$ and $f=(f_1,f_2)$; do not condition on observing $f$.};
\node[op,minimum width=2.9cm,minimum height=2cm,align=center]
 (prep) at (2.55,1.8) {Alice's channels\\with prescribed $m,f$};
\draw[qwire] (-.15,1.8)--(prep.west) node[pos=.30,above=4pt] {$A_0$};
\node[draw,rounded corners,minimum width=3.25cm,minimum height=2cm,align=center]
 (meas) at (9.05,1.8) {Bob's complete instrument\\sequence and decoder};
\draw[qwire] ([yshift=.65cm]prep.east)--([yshift=.65cm]meas.west) node[pos=.5,above=2pt] {$Q_1$};
\draw[qwire] (prep.east)--(meas.west) node[pos=.5,above=1pt,fill=white,inner sep=1pt] {$Q_2$};
\draw[qwire] ([yshift=-.65cm]prep.east)--([yshift=-.65cm]meas.west) node[pos=.5,below=2pt] {$Q_3$};
\draw[cwire] (meas.east)--(12.1,1.8) node[right] {$(f,\widehat m)$};
\node[align=center] at (2.55,-.30)
 {$\rho^{m,f}_{Q_1Q_2Q_3}\geq0$\\[6pt]$\Tr[\rho^{m,f}_{Q_1Q_2Q_3}]=1$};
\node[align=center] at (9.05,-.30)
 {$T_{f,\widehat m}\geq0$\\[6pt]$\displaystyle\sum_{f,\widehat m}T_{f,\widehat m}=I_{Q_1Q_2Q_3}$};
\node[draw,rounded corners,inner sep=7pt,align=center] at (6.2,-2.05)
 {$\Pr\!\left(f,\widehat m|m\right)=\Tr[T_{f,\widehat m}\rho^{m,f}]$\\
 The outcome probability is in the effect, not in the prepared state.};
\end{tikzpicture}

%% file: figures/fixed_pattern.tex
\begin{tikzpicture}[x=1cm,y=1cm,font=\small,>=Latex]
  \node[anchor=west,font=\bfseries\small] at (-.3,3.4) {A fixed erasure pattern: $S=\{1,3\}$};
  \node[anchor=east] at (-.45,2.25) {Alice};
  \node[anchor=east] at (-.45,0) {Bob};
  \draw[qwire] (-.35,2.25)--(11.5,2.25) node[right] {memory};
  \draw[qwire] (-.35,0)--(11.5,0) node[right] {memory};
  \foreach \x/\i in {1/1,4/2,7/3,10/4} {
    \node[op,minimum width=.9cm] (a\i) at (\x,2.25) {$\mathcal A_\i$};
    \node[op,minimum width=.9cm] (b\i) at (\x,0) {$\mathcal B_\i$};
  }
  \foreach \i in {1,3} {
    \draw[qwire] (a\i.south)--(b\i.north) node[midway,left] {$Q_\i$};
  }
  \foreach \x/\i in {4/2,10/4} {
    \draw[qwire] (a\i.south)--(\x,1.45);
    \node[draw,inner sep=3pt] at (\x,1.18) {$\Tr_{Q_\i}$};
    \draw[qwire] (\x,.8)--(b\i.north);
    \node[fill=white,inner sep=2pt] at (\x,.74) {$|e\rangle$};
  }
  \foreach \i/\j in {1/2,2/3,3/4} {
    \draw[cwire] (b\i.north east)--(a\j.south west)
      node[pos=.5,fill=white,inner sep=1pt] {$F_\i$};
  }
  \node[align=center] at (5.6,-1.05)
    {Erased systems are discarded locally; their flags carry no message.\\
     Only two quantum systems cross from Alice to Bob.};
\end{tikzpicture}

%% file: figures/residue_tree.tex
\begin{tikzpicture}[x=1cm,y=1cm,font=\small,>=Latex,
  residue/.style={draw,rounded corners,fill=white,inner sep=5pt}]
  \node[residue] (r) at (6,3.4) {$\{0,1,2,3,4,5,6\}$};
  \node[residue] (e) at (3,1.8) {$\{0,2,4,6\}$};
  \node[residue] (o) at (9,1.8) {$\{1,3,5\}$};
  \node[residue] (a) at (1.5,0) {$\{0,4\}$};
  \node[residue] (b) at (4.5,0) {$\{2,6\}$};
  \node[residue] (c) at (7.5,0) {$\{1,5\}$};
  \node[residue] (d) at (10.5,0) {$\{3\}$};
  \draw[qwire] (r)--(e) node[midway,above left] {digit $0$};
  \draw[qwire] (r)--(o) node[midway,above right] {digit $1$};
  \draw[qwire] (e)--(a) node[midway,left] {$0$};
  \draw[qwire] (e)--(b) node[midway,right] {$1$};
  \draw[qwire] (o)--(c) node[midway,left] {$0$};
  \draw[qwire] (o)--(d) node[midway,right] {$1$};
  \draw[->,dashed] (e.north west) .. controls +(-1.3,.8) and +(-1.3,-.8) .. node[pos=.5,left=3pt] {erasure} (e.south west);
  \node[anchor=west] at (11.3,3.4) {$k=0$};
  \node[anchor=west] at (11.3,1.8) {$k=1$};
  \node[anchor=west] at (11.3,0) {$k=2$};
  \node[align=center] at (6,-1.1)
    {Auxiliary integers $\ell=m-1$, with digits sent least significant first.\\
     Each successful reception refines the residue class; an erasure does not.};
\end{tikzpicture}

%% file: figures/sector_conversion.tex
\begin{tikzpicture}[x=1cm,y=1cm,font=\small,>=Latex]
  \node[anchor=west,font=\bfseries\small] at (0,3.9) {The same channel fixes every sector-uniform auxiliary state.};
  \node[draw,minimum width=3.4cm,minimum height=.95cm,align=center] (l1) at (2,2.4)
    {$w_S\psi^{\otimes |S|}$\\pure within sector $S$};
  \node[draw,minimum width=3.4cm,minimum height=.95cm,align=center] (l2) at (2,.45)
    {$q_S I_S/D_S$\\uniform within sector $S$};
  \node[op,minimum width=2.25cm,minimum height=.95cm,align=center] (c1) at (6.15,2.4)
    {$\mathcal R_S^\rho$\\random unitaries};
  \node[op,minimum width=2.25cm,minimum height=.95cm,align=center] (c2) at (6.15,.45)
    {$\mathcal R_S^\rho$\\the same channel};
  \node[draw,minimum width=3.25cm,minimum height=.95cm,align=center] (r1) at (10.3,2.4)
    {$w_S\rho_S$\\desired marginal};
  \node[draw,minimum width=3.25cm,minimum height=.95cm,align=center] (r2) at (10.3,.45)
    {$q_S I_S/D_S$\\unchanged};
  \draw[qwire] (l1)--(c1);\draw[qwire] (c1)--(r1);
  \draw[qwire] (l2)--(c2);\draw[qwire] (c2)--(r2);
  \node[align=center] at (6.15,-.9)
    {First measure the erasure pattern; then apply the indicated channel.\\
     Take a direct sum over $S\subseteq[n]$. The weights $q_S$ are arbitrary.};
\end{tikzpicture}

%% file: figures/tetrahedral_code.tex
\begin{tikzpicture}[x=1cm,y=1cm,font=\small,>=Latex]
  \node[anchor=west,font=\bfseries\small] at (0,3.6) {Four carriers for the message $(x,y)$};
  \foreach \x/\lab/\i in {1/{|x\rangle}/1,3.3/{|y\rangle}/2,5.6/{|x\oplus y\rangle}/3,8.0/{\tau_{xy}}/4} {
    \node[draw,rounded corners,minimum width=1.8cm,minimum height=.9cm] at (\x,2.4) {$\lab$};
    \node at (\x,1.6) {carrier \i};
  }
  \begin{scope}[shift={(10.7,1.85)},scale=.83]
    \draw[gray] (0,0) circle (1.2);
    \draw[gray,dashed] (0,0) ellipse (1.2 and .35);
    \coordinate (a) at (0,1.2);\coordinate (b) at (-1,-.68);
    \coordinate (c) at (1,-.68);\coordinate (d) at (0,-.1);
    \draw[thick] (a)--(b)--(c)--cycle;
    \draw[dashed] (a)--(d)--(b);\draw[dashed] (d)--(c);
    \foreach \v in {a,b,c,d} \fill (\v) circle (2pt);
  \end{scope}
  \node[align=center] at (10.7,.25) {tetrahedral\\Bloch vectors};
  \node[draw,rounded corners,align=center,inner sep=7pt,text width=4.4cm,minimum height=1.6cm] at (2.6,-1.55)
    {Two of carriers $1,2,3$ survive:\\the message is known exactly.};
  \node[draw,rounded corners,align=center,inner sep=7pt,text width=6cm,minimum height=1.6cm] at (8.8,-1.55)
    {Carrier $4$ and one classical bit survive:\\two pure states with squared overlap $1/3$;\\
     success $\frac12(1+\sqrt{2/3})$.};
\end{tikzpicture}

%% file: figures/performance_gap.tex
\begin{tikzpicture}
\begin{axis}[
  width=.93\linewidth,height=7.0cm,
  xmin=0,xmax=1,ymin=-.002,ymax=.037,
  xlabel={Erasure probability $p$},
  ylabel={$s_{4,2,p}(4)-P_{\mathrm{succ}}$},
  xtick={0,.2,.4,.6,.8,1},ytick={0,.01,.02,.03},
  scaled ticks=false, tick label style={font=\small},
  label style={font=\small},
  legend style={font=\small,at={(.5,1.02)},anchor=south,draw=none},
  legend cell align=left,grid=major,grid style={gray!15},
  domain=0:1,samples=151]
  \addplot[black,thick,dashed] {.5*x^2*(1-x)^2};
  \addlegendentry{Optimal classical code, without feedback}
  \addplot[black,very thick] {((3-sqrt(6))/2)*x^2*(1-x)^2};
  \addlegendentry{Tetrahedral quantum code, without feedback}
  \addplot[black,thick,dashdotted] {0};
  \addlegendentry{Optimal feedback code}
\end{axis}
\end{tikzpicture}

%% file: figures/normal_rate.tex
\begin{tikzpicture}
\begin{semilogxaxis}[
width=.91\linewidth,height=7.4cm,
 tick label style={font=\small},label style={font=\small},
 legend style={font=\small,at={(.5,1.03)},anchor=south,draw=none},
 legend cell align=left,grid=major,grid style={dotted},
 scaled ticks=false,
 xmin=10,xmax=1000,ymin=.27,ymax=.515,
 xlabel={Blocklength $n$ (logarithmic scale)},ylabel={Rate (bits per channel use)},
 xtick={10,20,50,100,200,500,1000},
 xticklabels={$10$,$20$,$50$,$100$,$200$,$500$,$1000$},
 ytick={.3,.35,.4,.45,.5},
]
 \addplot[thick] table[x=n,y=rate] {figures/normal_rate.dat};
 \addlegendentry{Exact feedback-assisted rate}
 \addplot[thick,dashed] table[x=n,y=normal_rate] {figures/normal_rate.dat};
 \addlegendentry{Two-term normal approximation}
 \addplot[thick,densely dotted,domain=10:1000,samples=2] {.5};
 \addlegendentry{Limiting capacity $1/2$}
\end{semilogxaxis}
\end{tikzpicture}

%% file: figures/strong_converse.tex
\begin{tikzpicture}
\begin{axis}[
width=.91\linewidth,height=7.4cm,
 tick label style={font=\small},label style={font=\small},
 legend style={font=\small,at={(.5,1.03)},anchor=south,draw=none,/tikz/every even column/.append style={column sep=9pt}},
 legend cell align=left,grid=major,grid style={dotted},
 scaled ticks=false,
 xmin=.5,xmax=1.2,ymin=0,ymax=.66,
 xlabel={Rate $R$ (bits per channel use)},ylabel={Success exponent (bits per channel use)},
 xtick={.5,.666666667,.8,1,1.2},
 xticklabels={$1/2$,$2/3$,$0.8$,$1.0$,$1.2$},
 ytick={0,.1,.2,.3,.4,.5,.6},legend columns=2,
]
 \addplot[very thick] table[x=R,y=limit] {figures/strong_converse.dat};
 \addlegendentry{Limiting exponent}
 \addplot[thick,dashed] table[x=R,y=n20] {figures/strong_converse.dat};
 \addlegendentry{$n=20$, with feedback}
 \addplot[thick,densely dotted] table[x=R,y=n100] {figures/strong_converse.dat};
 \addlegendentry{$n=100$, with feedback}
 \addplot[thick,dashdotted] table[x=R,y=n500] {figures/strong_converse.dat};
 \addlegendentry{$n=500$, with feedback}
 \draw[densely dotted] (axis cs:2/3,0)--(axis cs:2/3,.64);
\end{axis}
\end{tikzpicture}

%% file: figures/feedback_error_exponent.tex
\begin{tikzpicture}
\begin{axis}[
width=.91\linewidth,height=7.4cm,
 tick label style={font=\small},label style={font=\small},
 legend style={font=\small,at={(.5,1.03)},anchor=south,draw=none,/tikz/every even column/.append style={column sep=9pt}},
 legend cell align=left,grid=major,grid style={dotted},
 scaled ticks=false,
 xmin=0,xmax=.5,ymin=0,ymax=1.06,
 xlabel={Rate $R$ (bits per channel use)},ylabel={Error exponent (bits per channel use)},
 xtick={0,.1,.2,.3,.4,.5},ytick={0,.2,.4,.6,.8,1},
 legend columns=2,
]
 \addplot[very thick] table[x=R,y=limit] {figures/feedback_error_exponent.dat};
 \addlegendentry{Limiting exponent}
 \addplot[thick,dashed] table[x=R,y=n20] {figures/feedback_error_exponent.dat};
 \addlegendentry{$n=20$, with feedback}
 \addplot[thick,densely dotted] table[x=R,y=n100] {figures/feedback_error_exponent.dat};
 \addlegendentry{$n=100$, with feedback}
 \addplot[thick,dashdotted] table[x=R,y=n500] {figures/feedback_error_exponent.dat};
 \addlegendentry{$n=500$, with feedback}
\end{axis}
\end{tikzpicture}

%% file: figures/reliability_bounds.tex
\begin{tikzpicture}
\begin{axis}[
 width=.88\textwidth,height=7.5cm,
 xmin=0,xmax=.5,ymin=0,ymax=1.03,
 xlabel={Rate $R$ (bits per channel use)},
 ylabel={Error exponent (bits per channel use)},
 xtick={0,.1,.2,.333333333,.5},
 xticklabels={$0$,$0.1$,$0.2$,$R_{\mathrm{crit}}=1/3$,$C=1/2$},
 tick label style={font=\small},label style={font=\small},
 legend style={font=\small,at={(.98,.98)},anchor=north east,draw=none,fill=white},
 legend cell align=left,
 grid=major,grid style={dotted},
 ]
 \addplot[thick] table[x=R,y=feedback] {figures/reliability_bounds.dat};
 \addlegendentry{Feedback exact / no-feedback upper bound}
 \addplot[thick,dashed] table[x=R,y=random] {figures/reliability_bounds.dat};
 \addlegendentry{Classical random-coding lower bound}
 \addplot[thick,densely dotted] table[x=R,y=expurgated] {figures/reliability_bounds.dat};
 \addlegendentry{Classical expurgated lower bound (clipped at zero)}
 \draw[dashdotted] (axis cs:.333333333,0)--(axis cs:.333333333,.27);
 \node[font=\small,anchor=west,align=left] at (axis cs:.344,.24) {Exact without\\feedback here};
 \draw[->] (axis cs:.384,.13)--(axis cs:.412,.025);
\end{axis}
\end{tikzpicture}